\documentclass[utf8,11pt,letterpaper]{article}
\input{preamble.tex}

\newcommand{\rad}[1]{{\sf rad}{#1}}
\newcommand{\OPT}{{\sf OPT}}
\usepackage[utf8]{inputenc}
\usepackage{thm-restate}
\usepackage{authblk}
\usepackage{amsthm,amsmath,amssymb}
\usepackage{hyperref}
\usepackage{thm-restate}
\usepackage{thmtools}
\usepackage{tcolorbox}
\newtheorem{rmk}[theorem]{Remark}

\theoremstyle{definition}

\newcommand{\radii}{\textsc{Capacitated Sum of Radii}\xspace}
\newcommand{\sumradii}{\textsc{Sum of Radii}\xspace}
\newcommand{\np}{\textsf{NP}}
\newcommand{\p}{\textsf{P}}
\newcommand{\R}{\mathbb{R}}
\newcommand{\MEB}{{\sf MEB}}
\newcommand{\ext}{{\sf ext}}
\begin{document}

\title{{\bf FPT Constant-Approximations for Capacitated Clustering to Minimize the Sum of Cluster Radii}\footnote{Supported by the European Research Council (ERC) under the European Union’s
Horizon 2020 research and innovation programme (grant agreement no. 819416), and Swarnajayanti
Fellowship (no. DST/SJF/MSA01/2017-18).}}


%
\author[1]{Sayan Bandyapadhyay}
\author[2]{William Lochet}
\author[3]{Saket Saurabh}
\affil[1]{Department of Computer Science, Portland State University}
\affil[2]{LIRMM, Université de Montpellier, CNRS, Montpellier, France}
\affil[3]{The Institute of Mathematical Sciences, HBNI, Chennai, India}
\date{}

\maketitle
\thispagestyle{empty}
\begin{abstract}
    Clustering with capacity constraints is a fundamental problem that attracted significant attention throughout the years. In this paper, we give the first FPT constant-factor approximation algorithm for the problem of clustering points in a general metric into $k$ clusters to minimize the sum of cluster radii, subject to non-uniform hard capacity constraints (\radii). In particular, we give  a $(15+\epsilon)$-approximation algorithm that runs in $2^{\cO(k^2\log k)}\cdot n^3$ time. 

    \medskip
     When capacities are uniform, we obtain the following improved approximation bounds.
    
    \begin{itemize}
     \item A (4 + $\epsilon$)-approximation with running time $2^{\cO(k\log(k/\epsilon))}n^3$, which significantly improves over the FPT 28-approximation of Inamdar and Varadarajan [ESA 2020].
     
     \item A (2 + $\epsilon$)-approximation with running time $2^{\cO(k/\epsilon^2 \cdot\log(k/\epsilon))}dn^3$ and a $(1+\epsilon)$-approximation with running time $2^{\cO(kd\log ((k/\epsilon)))}n^{3}$ in the Euclidean space. Here $d$ is the dimension.

     \item A (1 + $\epsilon$)-approximation in the Euclidean space with running time $2^{\cO(k/\epsilon^2 \cdot\log(k/\epsilon))}dn^3$ if we are allowed to violate the capacities by (1 + $\epsilon$)-factor. We complement this result by showing that there is no (1 + $\epsilon$)-approximation algorithm running in time $f(k)\cdot n^{\cO(1)}$, if any capacity violation is not allowed. 
    \end{itemize}
\end{abstract}

\newpage 
\setcounter{page}{1}
\section{Introduction}
The \sumradii (clustering) problem is among the most popular and well-studied clustering models in the literature, together with $k$-center, $k$-means, and $k$-median \cite{DBLP:journals/jcss/CharikarP04, gonzalez1985clustering, AryaGKMMP-SIAMJ04,KanungoMNPSW04}. 
In \sumradii, we are given a set $P$ of $n$ points in a metric space with distance \textsf{dist}, and a non-negative  integer $k$ specifying the number of clusters sought. We would like to find: (i) a subset $C$ of $P$ containing $k$ points (called centers) and a non-negative integer $r_q$ (called radius) for each $q\in C$, and (ii) a function assigning each point $p\in P$ to a center $q\in C$ such that $\textsf{dist}(p,q)\le r_q$. The goal is to minimize the sum of the radii $\sum_{q\in C} r_q$. Alternatively, the objective is to select $k$ balls in the metric space centered at $k$ distinct points of $P$, such that each point $p\in P$ is contained in at least one of those $k$ balls and the sum of the radii of the balls is minimized.

In a seminal work, Charikar and Panigrahy~\cite{DBLP:journals/jcss/CharikarP04} studied the \sumradii problem. As mentioned in their paper, sum of radii objective  can be used as an alternative to the $k$-center objective to reduce the so called \emph{dissection effect}. The $k$-center objective is similar to sum of radii, except here one would want to minimize the maximum radius. As in $k$-center all balls are assumed to have the same maximum radius, the balls can have huge overlap. Consequently,  points that should have been assigned to the same cluster might end up in different clusters. This phenomenon is called the dissection effect which can be reduced by using the sum of radii objective instead.

Considering the sum of radii objective, Charikar and Panigrahy~\cite{DBLP:journals/jcss/CharikarP04} obtained a $3.504$-approximation running in polynomial  time, which is the best known approximation factor for this problem in polynomial time to date. Their algorithm is based on a primal-dual scheme coupled with an application of Lagrangean relaxation. Subsequently, Gibson et al.~\cite{DBLP:journals/algorithmica/GibsonKKPV10} obtained a $(1+\epsilon)$-approximation in quasi-polynomial time. It follows from the standard complexity theoretic assumptions that the problem cannot be \textsf{APX}-hard. We note that the problem is known to be \np-hard even in weighted planar metrics and metrics of constant doubling dimension \cite{DBLP:journals/algorithmica/GibsonKKPV10}. Surprisingly, the problem can be solved in polynomial time in Euclidean spaces when the dimension is fixed \cite{DBLP:journals/siamcomp/GibsonKKPV12}. When the dimension is arbitrary, one can obtain a $(1+\epsilon)$-approximation in $2^{\cO((k\log k)/\epsilon^2)}\cdot n^{O(1)}$ time, extending the coreset based algorithm for $k$-center \cite{Badoui2002}. 

\subsection{Our Problem and Results}
In this work, we are interested in the capacitated version of \sumradii.  
 Clustering with capacity constraints is a fundamental problem and has attracted significant attention recently~\cite{ByrkaRU16,ByrkaFRS15,CharikarGTS02,ChuzhoyR05,DemirciL16,Li15,Li17,DBLP:journals/mst/BhattacharyaJK18,DBLP:journals/algorithmica/DingX20,DBLP:conf/esa/AdamczykBMM019,DBLP:journals/corr/abs-1901-04628,DBLP:conf/icalp/Cohen-AddadL19}.
 Indeed, capacitated clustering is relevant in many real-life applications, such as load balancing where the representative of each cluster can handle the load of only a bounded number of objects. It is widely known that clustering problems become much harder in the presence of capacity constraints. 


Formally, in the \radii problem, 
along with the points of $P$ in a metric space, we are also given a non-negative integer $\eta_q$ for each $q\in P$, which denotes the capacity of $q$. The goal is similar to the goal of \sumradii except here each chosen center $q\in C$ can be assigned at most $\eta_q$ points of $P$. Alternatively, each cluster contains a bounded number of points specified with respect to the center of the cluster. In the uniform-capacitated version of the problem, $\eta_p=\eta_q$ for all $p,q\in P$, and we denote the capacity by $U$. We note that in this work, we only consider \emph{hard} capacities, i.e., each point can be chosen at most once to be a cluster center. In this setting, a major open question is to determine whether there is a polynomial time $\cO(1)$-approximation algorithm for \radii, even in the uniform-capacitated case. 

\begin{tcolorbox}
	\begin{description}
	\setlength{\itemsep}{-2pt}
	\item[Question $1$:] Does \radii admit 
	a polynomial time constant-approximation algorithm, even with uniform capacities? 
	\end{description}
\end{tcolorbox}

Designing polynomial time constant-approximations for capacitated clustering problems are notoriously hard. In fact such algorithms exist only for the $k$-center objective out of the four objectives mentioned before. For uniform capacitated $k$-center, Khuller and Sussmann~\cite{DBLP:journals/siamdm/KhullerS00} designed a 6-approximation improving a 10-approximation of Bar-Ilan et al.~\cite{barilan1993allocate} who introduced the problem. The first constant-approximation in the non-uniform case \cite{DBLP:conf/focs/CyganHK12} was designed after 12 years, which was subsequently improved to a 9-approximation by An et al.~\cite{DBLP:journals/mp/AnBCGMS15}. The capacitated problems with $k$-means and $k$-median objectives have attracted a lot of attention over the years. But, despite a recent progress for the uniform version in $\mathbb{R}^2$ \cite{DBLP:conf/soda/Cohen-Addad20}, where a PTAS is achieved, even in $\mathbb{R}^3$, the problem of finding a polynomial time constant-approximation remains open. The best-known polynomial time approximation factor in general metrics is $\cO(\log k)$ \cite{DBLP:conf/icalp/Cohen-AddadL19}, which is a folklore and is based on a tree embedding scheme. 

\medskip
\noindent
{\bf Technical Barriers for Sum of Radii.} The problem with sum of radii objective also appears to be fairly challenging. The main difficulty in achieving a polynomial time $\cO(1)$-approximation for \radii is obviously the presence of the capacity bounds even if they are uniform, which makes the problem resilient to the techniques used for solving \sumradii. The only polynomial time $\cO(1)$-approximation known for \sumradii is via a primal-dual scheme. However, it is not clear how to interpret the capacity constraints in the primal, in the realm of dual. Also, while the algorithms for capacitated $k$-center use LP-relaxation of the natural LP, the standard LP relaxation for \radii has a large integrality gap \cite{DBLP:conf/esa/0002V20}. Needless to say, the situation becomes much more intractable in the non-uniform capacitated case. 

\medskip
\noindent
{\bf Hardness of Approximation.} The lower bounds known for capacitated clustering are equally frustrating as their upper bounds. Surprisingly, the only known lower bounds are the ones for the uncapacitated versions, and hence trivially translated to the capacitated case. Due to the 20-year old work of Guha and Khuller \cite{guha1999greedy}, $k$-median and $k$-means are known to be NP-hard to approximate within factors of 1.735 and 3.943, respectively. In a recent series of papers, Cohen-Addad, Karthik and Lee~\cite{cohen2019inapproximability,cohen2021approximability,cohen2022johnson} have obtained improved constant lower bounds for various clustering problems in different metrics and settings. In particular, in the last work, they introduced an interesting Johnson Coverage Hypothesis \cite{cohen2022johnson} which helped them obtain improved bounds in various metrics. As mentioned before, \sumradii cannot be \textsf{APX}-hard, and hence there is no known inapproximability results that can be translated to the capacitated version. 

\medskip
In the light of the above discussions, one may conclude that the rather benign capacity constraints have played a bigger role compared to the choice of objective function, in our current lack of understanding of practical clustering models. It is necessary to resolve the question of existence of polynomial time $\cO(1)$-approximation for different objective functions to get a good grasp of this understanding. However, making any intermediate progress towards understanding capacitated clustering, irrespective of the objective function, is significant and timely.  

\medskip
\noindent
{\bf Coping with Capacitated Clustering.} In order to improve the understanding of these challenging open questions, researchers have mainly studied two types of relaxations to obtain constant-approximation algorithms. The more traditional approach taken for $k$-means and $k$-median is to design bi-criteria approximation where we are allowed to violate either capacity or the bound on the number of clusters by a small amount \cite{ByrkaRU16,ByrkaFRS15,CharikarGTS02,ChuzhoyR05,DemirciL16,Li15,Li17}. The other (relatively newer) approach is to design \emph{fixed-parameter tractable} (\FPT) approximation, thus allowing an extra factor $f(k)$ in the running time. We note that, in recent years, \FPT approximations are designed for classic problems improving the best known approximation factors in polynomial time, e.g., $k$-vertex separator \cite{lee2017partitioning}, $k$-cut \cite{gupta2018fpt} and $k$-treewidth deletion \cite{gupta2019losing}. 

\medskip
\noindent
{\bf \FPT Approximation for Clustering.}
In the context of clustering problems, the number of clusters $k$ is a natural choice for the parameter, as the value of $k$ is typically small in practice, e.g., $k\le 10$ in \cite{pedregosa2011scikit,steinbach2000comparison}. Consequently, the approach of designing \FPT approximation have become fairly successful for clustering problems and have led to interesting results which are not known or impossible in polynomial time. For example, constant-approximations are obtained for the capacitated version of $k$-median and $k$-means \cite{DBLP:conf/esa/AdamczykBMM019,DBLP:journals/corr/abs-1901-04628,DBLP:conf/icalp/Cohen-AddadL19}, which almost match the polynomial time constant approximation factors in the uncapacitated case. In the uncapacitated case of $k$-median and $k$-means, tight $(1.735+\epsilon)$ and $(3.943+\epsilon)$-factor \FPT approximations are recently obtained \cite{cohen2019tight,DBLP:conf/esa/AdamczykBMM019,DBLP:journals/corr/abs-1901-04628}, whereas the best known factors in polynomial time are only 2.611 \cite{byrka2014improved} and 6.357 \cite{ahmadian2019better}. These results are interesting in particular, as a popular belief in the clustering community is that there is no algorithmic separation between \FPT and polynomial time in general metrics (for example, see the comment in \cite{cohen2021approximability} after Theorem 1.3). We note that it is possible to obtain $(1+\epsilon)$-approximations in high-dimensional Euclidean spaces \cite{DBLP:journals/mst/BhattacharyaJK18,DBLP:journals/algorithmica/DingX20}, which is impossible in polynomial time, assuming standard complexity theoretic conjectures.     


\medskip
Inamdar and Varadarajan \cite{DBLP:conf/esa/0002V20} adapted the approach of designing \FPT approximation to study the \radii problem  with uniform capacities.  They make the first substantial progress in understanding this problem through the lens of fixed-parameter tractability. In particular, they obtained a 28-approximation algorithm for this problem that runs in time $2^{\cO(k^2)}n^{\cO(1)}$. Unfortunately, their algorithm does not work in the presence of non-uniform capacities. Based on their result, the following natural questions arise.


\vspace{2mm}

\begin{tcolorbox}
\begin{description}
\setlength{\itemsep}{-2pt}
\item[Question $2$:] Does \radii admit 
a constant-approximation algorithm, in \FPT\ time, even when  capacity constraints are non-uniform? 
\item[Question $3$:] Does \radii admit 
a $(1+\epsilon)$-approximation algorithm, in \FPT\ time, when the points are in  $\mathbb{R}^{d}$ (Euclidean Metric)? 
\end{description}
\end{tcolorbox}

We make significant advances towards answering Questions $2$ and $3$. Our first result completely answers Question $2$. 


\begin{restatable}{theorem}{nonuniform}
\label{thm:general_nonuniform}
	For any constant $\epsilon > 0$, the \radii problem admits a $(15+ \epsilon)$-approximation algorithm with running time $ 2^{\cO(k^2\log k)}\cdot n^3$.
\end{restatable}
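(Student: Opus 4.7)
The plan is to attack the theorem via a guess-and-check FPT framework: enumerate a small family of ``signatures'' of the optimal solution, and for each signature run a polynomial-time feasibility test using flow techniques. The overall enumeration should cost $2^{\cO(k^2 \log k)}$, and each feasibility test should run in $\cO(n^3)$ time.

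First, I would guess $\OPT$ up to a $(1+\epsilon)$-factor and normalize, and then guess the sorted vector of rounded optimal radii $(\hat r_1,\ldots,\hat r_k)$. Clusters with $r_i^*\le \epsilon\OPT/k$ contribute at most $\epsilon\OPT$ in total and can be snapped to zero, so each nontrivial $\hat r_i$ ranges over only $\cO(\log(k/\epsilon))$ discretized values, yielding $2^{\cO(k\log(k/\epsilon))}$ radius guesses. The harder task is to produce, for each radius vector, a small family $\mathcal F\subseteq P^k$ of candidate center tuples $(\tilde c_1,\ldots,\tilde c_k)$ such that at least one tuple has $\dist(\tilde c_i, c_i^*)$ small compared to $\hat r_i$ for every $i$.

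I would build $\mathcal F$ inductively, processing clusters in nondecreasing order of $\hat r_i$. For each new cluster, either its true center lies close to some already-selected proxy of a larger cluster (giving at most $k$ choices), or it is a ``fresh'' point identifiable by its distance profile to the earlier proxies, encoded by a discretized distance bucket of size $\cO(k)$ for each of the $\cO(k)$ earlier proxies. This yields $k^{\cO(k)}$ combinations per cluster and $|\mathcal F|\le 2^{\cO(k^2\log k)}$ after aggregating over all $k$ clusters. For each guess, I would inflate every radius by a constant $\alpha=15+\cO(\epsilon)$ and build a bipartite assignment network connecting $p\in P$ to $\tilde c_i$ whenever $\dist(p,\tilde c_i)\le \alpha\hat r_i$, with capacity $\eta_{\tilde c_i}$ on the center side. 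A single max-flow computation in $\cO(n^3)$ time decides whether the points admit a feasible capacitated assignment; a feasible flow yields a clustering of cost at most $\alpha\sum_i\hat r_i\le (15+\cO(\epsilon))\OPT$.

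The main obstacle will be proving the structural guarantee behind the preceding paragraph: that some tuple in $\mathcal F$ really admits a feasible assignment within the $(15+\epsilon)\,r_i^*$ budget. The constant $15$ should emerge from carefully composing a short chain of triangle-inequality hops needed to reroute points across optimal clusters when non-uniform capacities are exhausted at the chosen proxy: one hop from a point to its optimal center, one across overlapping optimal balls to a center with spare capacity, and one final hop from that center out to the selected proxy. Controlling this rerouting so that the chain never grows beyond a handful of hops (and in particular avoiding the naive $\Theta(k)$ inflation arising from arbitrary chains), while keeping the enumeration bounded by $2^{\cO(k^2\log k)}$ rather than the trivial $n^k$, is where I expect the bulk of the technical difficulty to lie.
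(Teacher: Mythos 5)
Your outer skeleton matches the paper's: guess the radii up to $(1+\epsilon)$ (the paper's Lemma 2.1 does essentially your discretization, using the largest pairwise distance as the scale), enumerate a $2^{\cO(k^2\log k)}$-size family of candidate ball structures, and verify each candidate by a bipartite matching/flow computation (the paper notes exactly this feasibility check in the preliminaries). But the heart of the theorem is the structural claim you defer to "the main obstacle", and the specific form in which you state it is not something that can be proved: you require a tuple with $\dist(\tilde c_i,c_i^\star)$ small compared to $\hat r_i$ for \emph{every} $i$, i.e.\ every proxy ball essentially contains its optimal ball after an $\cO(1)$ inflation. After the paper's first two phases one is left with small optimal balls $B^\star_j$ that only intersect much larger already-chosen balls $B_i$ (with $r_j\le 5r_i$), and capturing $c^\star_j$ there costs $\Omega(r_i)$, which is unbounded relative to $r_j$; so no constant inflation around a center near $c^\star_j$ is available. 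The paper's resolution is precisely to abandon that requirement: it introduces "replacement balls" of radius exactly $r_j$ whose centers may be far from $c^\star_j$, chosen inside the intersection $P_j$ of the expanded large balls $\ext(B_i,10r_i)$ to maximize $\min\{\eta_x,|B(x,r_j)\cap P_j|\}$, kept disjoint from each other and from the unprocessed optimal balls, and then a swap argument reassigns $(\mu^\star)^{-1}(B^\star_j)$ into the expanded large balls while the replacement ball absorbs an equal number of points previously served by them. This is where the factor $15 = 5+10$ comes from, and it is not a "short chain of triangle-inequality hops from a center with spare capacity": no nearby center with spare capacity need exist; the spare capacity is manufactured by the expansion and the swap.

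A second concrete gap is your enumeration itself. A discretized distance profile to earlier proxies localizes a candidate center only to a region that may contain many points of wildly different capacities $\eta$, and with hard non-uniform capacities picking an arbitrary point in that region can be infeasible; yet your family must have size independent of $n$, so you cannot enumerate actual points. The paper resolves this with canonical selection rules tied to capacity: in each region it picks the \emph{available} center maximizing $\eta$ (Lemmas 2.4 and 2.5) or maximizing $\min\{\eta_x,|B(x,r_j)\cap P_j|\}$ (Lemma 2.6), plus a well-separated set of up to $k+1$ candidates to avoid accidentally sitting inside an unprocessed optimal ball, and it is these rules (with $\cO(k^2)$ randomized guesses of $\cO(\log k)$ bits each, later derandomized by exhaustive search) that give both the $2^{\cO(k^2\log k)}$ bound and the capacity-dominance guarantee $\eta_{x_i}\ge|(\mu^\star)^{-1}(B^\star_i)|$. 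Without some such rule your family $\mathcal F$ has no reason to contain a feasible tuple, so as written the proposal does not yet constitute a proof.
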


Next, we consider the uniform-capacitated version and prove the following theorem significantly improving over the approximation factor of 28 in \cite{DBLP:conf/esa/0002V20}. 

\begin{restatable}{theorem}{uniform}\label{thm:general}
	For any constant $\epsilon > 0$, there exists a randomized algorithm for the \radii problem with uniform capacities that outputs with constant probability a $(4+ \epsilon)$-approximate solution in time  $2^{\cO(k\log (k/\epsilon))}\cdot n^3$.
\end{restatable}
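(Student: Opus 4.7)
The plan is to combine a polynomial-time constant-factor approximation for the uncapacitated \sumradii problem with an FPT-style enumeration over approximate radius profiles and proxy-to-cluster assignments, validated by a max-flow feasibility check. The key structural observation is that the uncapacitated optimum lower-bounds the capacitated optimum, since dropping the hard capacities only relaxes the constraints; hence an $\alpha$-approximation for the uncapacitated problem yields $k$ balls $\{B(z_j, \rho_j)\}_{j=1}^{k}$ covering $P$ with $\sum_{j} \rho_j \leq \alpha \cdot \OPT$, and every optimal capacitated center $c_i^* \in P$ (which must be covered by this uncapacitated cover) lies in some ball $B(z_{j(i)}, \rho_{j(i)})$.

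Concretely, the algorithm first runs a polynomial-time $\alpha$-approximation with $\alpha \approx 3$ (a refinement of the primal-dual algorithm of Charikar and Panigrahy). Next it enumerates approximate radius profiles $(r_1,\dots,r_k)$, with each $r_i$ a power of $(1+\epsilon)$ in a discretized range of size $\cO(\log(k/\epsilon)/\epsilon)$, contributing $2^{\cO(k\log(k/\epsilon))}$ profiles. For each profile it enumerates bijections $j:[k]\to[k]$ (at most $k! = 2^{\cO(k\log k)}$ options) that assign each cluster $i$ to the uncapacitated ball $B(z_{j(i)}, \rho_{j(i)})$ purported to contain $c_i^*$. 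For each configuration, set proxy centers $q_i := z_{j(i)}$ with radii $R_i := r_i + \rho_{j(i)}$. By the triangle inequality, any $q \in C_i^*$ satisfies $\dist(q,q_i) \leq r_i^* + \rho_{j(i)} \leq R_i$, so the natural optimal assignment $C_i^* \mapsto q_i$ is feasible with respect to the uniform capacity $U$. A max-flow computation in $\cO(n^3)$ time verifies feasibility and extracts the cheapest valid assignment. For the correct guess the total cost is $\sum_i R_i \leq (1+\epsilon+\alpha)\OPT = (4+\epsilon)\OPT$, and the total running time is $2^{\cO(k\log(k/\epsilon))}\cdot n^3$ as claimed.

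The main obstacle is the possibility that the ``correct'' map $j$ is not injective, e.g., when several optimal centers $c_i^*$ collapse into a single uncapacitated ball $B(z_j,\rho_j)$: we cannot reuse $z_j$ as multiple proxies because of the hard capacity constraint. The resolution, and the source of the algorithm's randomization, is to sample distinct proxies uniformly at random from $B(z_j,\rho_j) \cap P$ whenever a ball is assigned to multiple clusters; any such sampled point is within distance $2\rho_j$ from any other point in the same ball, so inflating the corresponding radius absorbs the additional triangle-inequality slack while keeping the total cost within $(4+\epsilon)\OPT$. A standard probabilistic argument over these sampled choices then shows that the claimed approximation is attained with constant probability, completing the proof.
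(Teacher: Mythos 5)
There is a genuine gap, and it sits exactly where the capacity constraints bite. Your cost bound $\sum_i R_i \le (1+\epsilon+\alpha)\OPT$ silently assumes the map $j:[k]\to[k]$ from optimal capacitated clusters to uncapacitated balls is injective, so that $\sum_i \rho_{j(i)} \le \sum_j \rho_j \le \alpha\cdot\OPT$. In the typical hard case it is not injective: many small capacitated clusters (forced by the capacity $U$) can live inside a region that the uncapacitated $\alpha$-approximation covers with a single large ball $B(z_j,\rho_j)$. Then each such cluster $i$ pays $\rho_j$ (or $2\rho_j$ after your ``sample a distinct proxy inside the ball'' fix) \emph{separately}, and the total can be $\Omega(k)\cdot\OPT$; nothing in the proposal bounds the multiplicity with which a single $\rho_j$ is charged. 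The sentence claiming that inflating radii ``absorbs the additional triangle-inequality slack while keeping the total cost within $(4+\epsilon)\OPT$'' is precisely the missing argument, and it is false as stated. You also cannot repair it by merging all clusters that map to the same uncapacitated ball into one ball, because the merged cluster may need more than $U$ points assigned to a single center. The paper's proof exists to handle exactly this: it splits the optimal balls into heavy ones (assigned at least $U/k$ points), which can be hit by uniform random sampling because they contain at least $n/k^2$ points, and light ones, whose clusters \emph{can} be merged (at most $k\cdot U/k\le U$ points) within connected components of the intersection graph $G^\star$ of optimal balls, followed by a greedy completion and a careful reassignment argument. None of these ingredients has an analogue in your proposal, and without some mechanism of this kind the collision case is fatal.

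Two smaller issues. First, even in the injective case your factor is $1+\epsilon+\alpha$, and the best known polynomial-time factor for uncapacitated \sumradii is $3.504$ (the paper states this explicitly), so you would get $4.504+\epsilon$; the claim $\alpha\approx 3$ is unsupported, and the theorem's constant $4+\epsilon$ does not follow. Second, enumerating radius profiles as powers of $(1+\epsilon)$ over a range of size $\cO(\log(k/\epsilon)/\epsilon)$ needs an anchor for the scale (the largest radius); this is fixable in the paper's style by guessing the largest optimal radius among the $n^2$ pairwise distances and discretizing additively in steps of $\epsilon r_1/k$, but as written the enumeration is not justified for instances with unbounded aspect ratio.
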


The approximation factor in the above result is interesting in particular, as it almost matches the approximation factor of $3.504$ in the uncapacitated case and keeps the avenue of obtaining a matching approximation in polynomial time open.

Finally, we mention the Euclidean version of the problem where we show that adapting the standard coreset argument for regular $k$ clustering allows us to obtain the following two results.

\begin{restatable}{theorem}{twoapprox}
\label{thm:twoapprox}
	For any constant $\epsilon > 0$, there exists a randomized algorithm for the Euclidean version of \radii with uniform capacities that outputs with constant probability a $(2 + \epsilon)$-approximate solution in time $2^{\cO((k/\epsilon^2)\log (k/\epsilon))}\cdot dn^3$, where $d$ is the dimension. 
\end{restatable}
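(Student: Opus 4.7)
The plan is to adapt the standard FPT coreset/random-sampling framework for Euclidean $k$-clustering (as developed for $k$-means and $k$-median by Kumar--Sabharwal--Sen, B\u{a}doiu--Clarkson, and others) to the sum-of-radii objective with uniform capacities. The Euclidean setting gives us access to minimum enclosing balls (MEBs), which lets us identify, for every optimal cluster, a good approximation of its true MEB centre from only $\cO(1/\epsilon^2)$ sampled points; this is the essential gain over the general-metric argument underlying Theorem~\ref{thm:general}.

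Fix a hypothetical optimum with clusters $C_1^{*},\ldots,C_k^{*}$ of radii $r_1^{*}\ge r_2^{*}\ge\cdots\ge r_k^{*}$ and let $m_i^{*}$ be the centre of the MEB of $C_i^{*}$. The key structural fact (B\u{a}doiu--Clarkson) is that there is a subset $T_i\subseteq C_i^{*}$ of size $\tau=\cO(1/\epsilon^2)$ whose own MEB centre $\tilde m_i$ lies within distance $\epsilon r_i^{*}$ of $m_i^{*}$; hence every point of $C_i^{*}$ is within $(1+\epsilon)r_i^{*}$ of $\tilde m_i$. I would find such a witness $T_i$ by a branching random-sampling procedure: processing clusters in decreasing order of guessed radius, at iteration $i$ I draw a uniform random sample $S_i$ of size $\Theta(k\tau)=\cO(k/\epsilon^2)$ from the points not yet covered by earlier candidate balls. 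A Chernoff/averaging argument then gives that, with constant probability, $|S_i\cap C_i^{*}|\ge\tau$. Since we do not know which sampled points lie in $C_i^{*}$, I branch over all $\tau$-subsets of $S_i$; this contributes $\binom{|S_i|}{\tau}\le (k/\epsilon)^{\cO(1/\epsilon^2)}$ branches per level, and hence $(k/\epsilon)^{\cO(k/\epsilon^2)}=2^{\cO((k/\epsilon^{2})\log(k/\epsilon))}$ branches in total. For each branch I compute the exact MEB of the chosen $\tau$-subset in $dn^{\cO(1)}$ time, giving a candidate centre $\tilde m_i$.

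Once all $k$ candidate centres are fixed, I would guess each radius up to a $(1+\epsilon)$-factor by trying powers of $(1+\epsilon)$ (a polynomial number of scales, folded into the $n^3$ factor), and then solve the resulting capacitated assignment via min-cost bipartite flow: each candidate centre has capacity $U$ and accepts a point at zero cost if and only if the point lies within the guessed radius. Feasibility is guaranteed for the correct branch and correct guesses, and summing the triangle-inequality bounds on the covering radii gives the $(2+\epsilon)$-approximation, the extra factor over the ``ideal'' $(1+\epsilon)$ absorbing the slack introduced when capacity constraints force an approximate centre to cover points originally assigned elsewhere in the optimum.

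The main obstacle will be coupling the random-sampling analysis across the $k$ iterations, because the notion of ``not-yet-covered'' at iteration $i$ is defined in terms of the candidate centres chosen in iterations $1,\ldots,i-1$ rather than in terms of the true optimum. I would handle this with a standard case split: either $C_i^{*}$ still contributes at least a $1/k$ fraction of the uncovered points, in which case the sample succeeds with constant probability and yields the desired $\tilde m_i$, or most of $C_i^{*}$ is already covered by earlier candidate balls, in which case any choice at level $i$ can be substituted without inflating the sum of radii by more than an additive $\epsilon\cdot\mathrm{OPT}$. A union bound over the $k$ levels then preserves the constant success probability claimed in the theorem.
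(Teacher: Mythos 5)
There is a genuine gap, and it sits exactly at the point your sketch leaves vague: the clusters that random sampling cannot reach, and the capacity constraints for those clusters. Your witness-finding step needs, for \emph{every} optimal cluster $C_i^{*}$, a sample containing $\Omega(1/\epsilon^{2})$ of its points, but a cluster to which the optimum assigns very few points may never be hit, and your fallback case split does not repair this. First, the two cases are not exhaustive: a cluster can be entirely uncovered by the earlier candidate balls and still constitute far less than a $1/k$ fraction of the uncovered points (you process clusters by decreasing radius, and radius order says nothing about cardinality). Second, and more importantly, in the ``mostly covered by earlier candidate balls'' case you conclude that any substitution at level $i$ costs only an additive $\epsilon\cdot\OPT$ --- but coverage is a purely geometric statement, whereas the actual obstruction is capacity: each earlier ball holds at most $U$ points and may have no spare capacity to absorb $C_i^{*}$, so ``skipping'' that cluster leaves its points with nowhere feasible to go. Note also that if your scheme did deliver, for every $i$, a centre $\tilde m_i$ with $C_i^{*}\subseteq B(\tilde m_i,(1+\epsilon)r_i^{*})$, then your flow step would already give a $(1+\epsilon)$-approximation in $f(k,\epsilon)\,d\,n^{\cO(1)}$ time; the paper obtains that guarantee only when capacities may be violated by a $(1+\epsilon)$ factor, or with $2^{\cO(kd\log(k/\epsilon))}$ time (Theorem \ref{thm:ptaskandd}), and the capacity-respecting, dimension-independent version is left open. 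The claimed factor $2$ ``absorbing the slack'' from capacity-driven reassignments is asserted but never argued, and it is precisely the hard part.

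For contrast, the paper's proof confronts this head on. It guesses the partition of the optimal balls into heavy ones ($\ge U/k$ assigned points) and light ones. Since one may assume $n\le kU$, a uniform sample lands in a prescribed heavy ball with probability at least $1/k^{2}$, so the minimum-enclosing-ball coreset argument (Lemma \ref{lem:corest_next}, used in Lemma \ref{lem:sampling_99}) yields for each heavy ball a set $S_i$ with $\ext(\MEB(S_i),\epsilon r_i)$ containing all but at most $U/2k$ of its points, and a greedy completion (Lemma \ref{lem:cover}) makes these balls cover $P$. Light balls are never sampled at all: the algorithm builds the intersection graph of the coreset balls and the untouched optimal balls, covers each connected component containing an untouched index by a single ball of radius $(1+\epsilon)\rad(C_j)$ (Lemma \ref{lem:covering_component}), and observes that all points of light clusters plus all points missed by the coreset balls number fewer than $U$, so that one ball per component can legally absorb them (Lemma \ref{lem:L_valid}); the remaining components are handled by balls of radius $2r_i$ that fully contain $B^{\star}_i$. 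This pairing --- coreset ball plus component ball, or a doubled ball --- is exactly where the $2+\cO(\epsilon)$ factor comes from and exactly the mechanism your proposal is missing.
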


\begin{restatable}{theorem}{ptaskandd}
	\label{thm:ptaskandd}
	For any constant $\epsilon > 0$, the Euclidean version of \radii admits an  $(1+ \epsilon)$-approximation algorithm with running time $2^{\cO(kd\log ((k/\epsilon)))}n^{3}$.
\end{restatable}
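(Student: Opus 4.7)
The plan is to combine the $(2+\epsilon)$-approximation of Theorem~\ref{thm:twoapprox} with an enumeration over a Euclidean $\epsilon$-net of candidate centers, followed by a max-flow check for the capacitated assignment. At a high level I would (i)~use the rough solution to localize the optimal centers, (ii)~discretize the unknown radii into a small set of powers of $(1+\epsilon)$, (iii)~enumerate candidate input-point centers drawn from an $\epsilon r_i$-net in $\mathbb{R}^d$, and (iv)~for each guess test feasibility by a standard $s$-$t$ max-flow.

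First, I would run Theorem~\ref{thm:twoapprox} to obtain centers $c'_1,\ldots,c'_k\in P$ with radii summing to $R\le(2+\epsilon)\,\OPT$. Every input point, and in particular every optimal center $c^*_i\in P$, lies within distance $R$ of some $c'_j$, so the search region $\bigcup_j B(c'_j,R)$ contains all optimal centers. Next I would round each unknown $r^*_i$ up to the nearest power of $(1+\epsilon)$, truncating values below $\epsilon R/(6k)$ up to that threshold; this loses at most a $(1+\cO(\epsilon))$ factor overall, and each radius takes one of $\cO(\log(k/\epsilon)/\epsilon)$ values, giving $2^{\cO(k\log(k/\epsilon))}$ radius tuples to try.

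For a fixed radius tuple $(r_1,\ldots,r_k)$, I would cover the search region by balls of radius $\epsilon r_i$ in the $i$-th coordinate. Since $R/r_i\le 6k/\epsilon$, the standard Euclidean volume-packing bound gives $k\cdot\cO((R/(\epsilon r_i))^d)=(k/\epsilon)^{\cO(d)}$ such balls. In each ball I retain a single input-point representative (if any), yielding $(k/\epsilon)^{\cO(d)}$ candidates per coordinate and $(k/\epsilon)^{\cO(kd)}=2^{\cO(kd\log(k/\epsilon))}$ candidate $k$-tuples $(\tilde c_1,\ldots,\tilde c_k)$. For each candidate I would check whether the inflated cover with radii $(1+\epsilon)r_i$ admits a capacity-respecting assignment via max-flow (source to each point with unit capacity; $p\to\tilde c_i$ iff $\|p-\tilde c_i\|\le(1+\epsilon)r_i$; $\tilde c_i\to t$ with capacity $\eta_{\tilde c_i}$), accepting iff the flow equals $n$. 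This takes $\cO(n^3)$ per guess, giving total time $2^{\cO(kd\log(k/\epsilon))}\,n^3$.

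For correctness, consider the guess that matches the optimum: if $r_i$ is the rounded $r^*_i$ and $\tilde c_i$ is the representative of the net ball containing $c^*_i$, then for every point $p$ in optimal cluster $i$, $\|p-\tilde c_i\|\le\|p-c^*_i\|+\|c^*_i-\tilde c_i\|\le r^*_i+\epsilon r_i\le(1+\epsilon)r_i$, so the optimal assignment is feasible for the guess and its value is at most $\sum_i(1+\epsilon)r_i\le(1+\cO(\epsilon))\,\OPT$; rescaling $\epsilon$ gives the claimed $(1+\epsilon)$ bound. The main obstacle I anticipate is keeping the per-center candidate count at $(k/\epsilon)^{\cO(d)}$ rather than $(k\sqrt d/\epsilon)^{\cO(d)}$ (which would introduce a $d^{\Theta(d)}$ overhead incompatible with the target $2^{\cO(kd\log(k/\epsilon))}$); this is why I use a Euclidean ball packing rather than an axis-aligned cube grid of side $\epsilon r_i/\sqrt d$. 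A minor subtlety is that the chosen center must come from $P$, which is why I only retain input-point representatives and rely on the fact that the true optimal center $c^*_i$ is itself an input point lying in some net ball.
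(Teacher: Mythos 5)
There is a genuine gap, and it sits exactly in the step you call a ``minor subtlety.'' In the paper's Euclidean model (see the Preliminaries), \emph{any} point of $\mathbb{R}^d$ may be chosen as a center, and consequently the optimal centers $c^\star_i$ need not be input points -- indeed, the paper explicitly assumes each $B^\star_i$ is the minimum enclosing ball of its cluster, whose center is generally not in $P$. Your correctness argument relies on ``the true optimal center $c^*_i$ is itself an input point lying in some net ball,'' and your candidate set keeps only \emph{input-point} representatives of the net balls. Both halves fail: the net ball of radius $\epsilon r_i$ containing $c^\star_i$ may contain no point of $P$ at all (the nearest input point to $c^\star_i$ can be as far as $r^\star_i$ away), so after pruning you may have no candidate within $\epsilon r_i$ of $c^\star_i$; and restricting centers to $P$ cannot give a $(1+\epsilon)$-guarantee against an optimum with unrestricted centers -- e.g.\ for points on a sphere of radius $1$ with $k=1$, the optimum has radius $1$ (center at the sphere's center) while any input-point center forces radius about $\sqrt{2}$. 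The fix is simple and is what the paper does: use the net points themselves (the centers of the $\epsilon r_i$-balls, or of balls of radius $\epsilon r_1/k$ in the paper) as candidate centers, which the Euclidean version permits, and inflate each guessed radius by the net precision. Your localization step needs the same repair, since it also invokes $c^\star_i\in P$: a nonempty optimal cluster has its center within $r^\star_i\le \OPT$ of some input point, so enlarging your search region from $\bigcup_j B(c'_j,R)$ to $\bigcup_j B(c'_j,2R)$ restores the claim without affecting the counting.

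With that repair your scheme is essentially the paper's proof: the paper guesses discretized radii (Lemma~\ref{lem:guess_Rd}, multiples of $\epsilon r_1/k$), localizes centers inside a region of radius $\cO(k r_1)$ obtained from a greedy cover by $k$ balls of radius $2r_1$ together with a connected-component argument, covers that region by $(k/(c\epsilon))^{d}$ balls of radius $\epsilon r_1/k$, guesses one such ball per center, and checks feasibility by matching/flow -- yielding $(k/\epsilon)^{\cO(kd)}\cdot n^3$. Two smaller remarks on your variant: invoking Theorem~\ref{thm:twoapprox} to localize is heavier than necessary (it is randomized and carries a $2^{\cO((k/\epsilon^2)\log(k/\epsilon))}$ overhead, tolerable only because $\epsilon$ is a constant; the paper's greedy cover avoids both), and your volume/packing bound of $(k/\epsilon)^{\cO(d)}$ candidates per center via Euclidean ball coverings is fine and matches the paper's count.
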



We also complement our approximability results by hardness bounds. The \np-hardness of \radii trivially follows from the \np-hardness of \sumradii. We strengthen this bound by showing the following result. 

\begin{restatable}{theorem}{genhard}
 \radii with uniform capacities cannot be solved in $f(k)n^{o(k)}$ time for any computable function $f$, unless ETH is false. Moreover, it does not admit any FPTAS,  unless \p=\np. 
\end{restatable}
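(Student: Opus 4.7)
The plan is to reduce from \probMC: given a graph $G$ whose vertex set is partitioned into $k$ color classes $V_1,\dots,V_k$ of size $N$ each, decide whether $G$ contains a clique with exactly one vertex per class. By the result of Chen et al., \probMC cannot be solved in $f(k)\cdot N^{o(k)}$ time under ETH for any computable $f$. I would exhibit a polynomial-time reduction producing a \radii instance with $k$ clusters on a point set of size polynomial in $kN$, with uniform capacities and with all pairwise distances integers of polynomial magnitude in $kN$. These properties simultaneously yield the ETH lower bound (via the parameter-preserving reduction) and FPTAS-inapproximability (via the scaling argument below).

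For the construction, I would encode the choice of one vertex per color class through ``selector gadgets'' whose capacity and intrinsic distances force any feasible clustering to open exactly one center per gadget and to identify that center with some $v_i\in V_i$. Concretely, each class $V_i$ would give rise to a hub $h_i$ attached by unit-weight edges to $N$ selector points, one per vertex of $V_i$; the shortest-path metric of the resulting weighted graph $H$ serves as the input. Adjacency in $G$ would then be enforced through ``edge-witness'' points placed close to the selectors of adjacent vertex pairs and far from non-adjacent ones. Padding points and the uniform capacity $U$ would be tuned so that the arithmetic $kU$ equals the total number of points exactly when each cluster covers one class together with the $k-1$ witnesses corresponding to its center's adjacencies. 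A multicolored clique then yields a clustering of some small integer objective $T$, whereas any non-clique selection forces at least one ball to reach a distant witness and raises the objective to at least $T+1$.

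Because all distances are integers bounded polynomially in $kN$, the optimum is an integer of polynomial magnitude and the YES/NO gap is at least $1$. An FPTAS invoked with $\epsilon<1/(2T+2)$ would therefore distinguish the two cases in polynomial time and decide \probMC, contradicting $\p\ne\np$; this gives the second part of the statement. The main technical obstacle is the combinatorial calibration of the gadgets so that the uniform capacity genuinely rules out every alternative clustering: solutions that place centers at edge-witnesses, solutions that assign a witness to a cluster other than the one containing its endpoint, and solutions that cover two classes from a single ball of intermediate radius. Preventing these cheating strategies requires a tight balance between total point count and $kU$, and a careful placement of padding points so that any cross-gadget ball either violates the capacity or incurs a radius large enough to exceed the threshold $T$. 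Once this accounting is verified, both the $f(k)\cdot n^{o(k)}$ bound under ETH and the non-existence of an FPTAS follow at once.
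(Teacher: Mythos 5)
Your proposal differs from the paper's route (the paper reduces from \pname{Dominating Set}: it adds $kn^2$ auxiliary points adjacent to every original vertex, sets $U=n^2+n$, and shows that a size-$k$ dominating set exists iff the \radii instance has cost exactly $k$; ETH-hardness is inherited from \pname{Dominating Set} and the FPTAS exclusion follows from strong \NP-hardness of the polynomially bounded instance). A \probMC-based reduction could in principle give the same conclusions, but as written your argument has a genuine gap: the entire gadget calibration, which you yourself identify as ``the main technical obstacle,'' is asserted rather than carried out, and it is precisely where the difficulty lies for the sum-of-radii objective.

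Two concrete problems are left unresolved. First, in \radii \emph{every} point must be assigned to one of only $k$ balls, so the $\Theta(k^2N^2)$ edge-witness points of \emph{non-selected} edges must also be covered; your accounting (``each cluster covers one class together with the $k-1$ witnesses corresponding to its center's adjacencies'') only budgets for the $\binom{k}{2}$ witnesses of the chosen clique edges and says nothing about where the remaining witnesses go. If they are placed far from the selectors (so that reaching one raises the cost, as your gap argument needs), then even YES-instances are forced to pay for them; if they are placed near hubs to make them cheap to cover, the claimed $T$ versus $T+1$ gap in NO-instances is no longer justified. Second, with a sum-of-radii objective a single ball of moderately larger radius can swallow several gadgets at the price of one radius, so ruling out ``one big ball covers two classes'' or ``centers at witnesses'' requires an explicit interplay between the capacity $U$, the padding counts, and the distance scale; none of these numbers is specified, so the claim that cheating solutions either violate capacity or exceed $T$ cannot be checked. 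The paper's \pname{Dominating Set} reduction avoids all of this: the $kn^2$ auxiliary points plus the capacity $n^2+n$ force every ball to have radius exactly $1$ centered at an original vertex, and radius-$1$ balls in the shortest-path metric are exactly closed neighborhoods, so the verification is a few lines. To make your approach work you would need to exhibit the full construction (distances, capacities, padding, the value of $T$) and prove the soundness direction against all of the cheating strategies listed above; until then the statement is not established by your argument.
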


We also show an inapproximability bound in the Euclidean case even with uniform capacities. 

\begin{restatable}{theorem}{euclidhard}
\label{thm:euclidhard}
The Euclidean version of  \radii with uniform capacities does not admit any FPTAS even if $k=2$, unless \p=\np.
\end{restatable}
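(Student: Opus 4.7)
My plan is to prove this no-FPTAS result via a polynomial-time reduction from an NP-hard decision problem of partition flavor, producing a Euclidean \radii instance with $k=2$ and uniform capacities in which the optimum sum of radii differs by a factor of at least $1 + 1/\mathrm{poly}$ between yes- and no-instances. Any FPTAS could then, with $\epsilon$ chosen below this gap, decide the partition problem in polynomial time, contradicting $\p \neq \np$. A natural choice of source problem is \textsc{Partition} (given integers $a_1, \ldots, a_n$ summing to $2T$, decide whether some subset sums to $T$), or a balanced-cardinality variant thereof.

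For the construction, to avoid a pseudo-polynomial blow-up, I would encode each $a_i$ through the coordinates of one or a constant number of points rather than through $a_i$ copies of a single point, keeping the geometric instance polynomial in $n$ and $\log T$. A candidate embedding places the $n$ points in a high-dimensional space (say $\mathbb{R}^n$ or $\mathbb{R}^{n+1}$) along coordinate axes or on a specific curve, with each $p_i$'s distance from a shared anchor determined by $a_i$, and sets $U = n/2$ to force balanced bipartitions. The positions would be tuned so that the sum of the two MEB radii is a sharp, monotone function of the subset sums $\sum_{i \in S} a_i$ and $\sum_{i \notin S} a_i$, minimized precisely when both equal $T$.

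Assuming such an embedding, the correctness argument follows a standard template: a yes-instance admits an explicit balanced partition of cost at most some value $V$ computable from the $a_i$'s, while any partition whose subset sums deviate from $T$ has cost at least $V (1 + 1/\mathrm{poly}(n, \log T))$. Combining these with the polynomial-size instance and the above $\epsilon$ choice for an FPTAS yields the desired reduction.

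The main obstacle is establishing the multiplicative gap in the no-case. MEB radii depend simultaneously on cardinality, diameter, and the specific coordinates of the enclosed set in a nonlinear way, so it is delicate to design a geometric configuration in which the optimal 2-clustering is driven by the subset sum rather than by cardinality or coordinate range; naive embeddings (on a line, or as scaled basis vectors $a_i e_i$) tend to make the optimum insensitive to which subset achieves the target balance. Overcoming this likely requires a custom high-dimensional gadget engineered around an explicit closed-form MEB expression, or else reducing instead from a strongly NP-hard partition-style problem whose weights can be kept polynomially bounded, in which case the desired gap becomes an integrality gap and the FPTAS-hardness follows without delicate geometric analysis.
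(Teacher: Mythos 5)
There is a genuine gap: your argument is a plan whose central ingredient --- the geometric embedding for which the sum of the two MEB radii is provably minimized exactly at the balanced subset sums, with an inverse-polynomial multiplicative gap in the no-case --- is not constructed, and you acknowledge as much. In addition, your choice of source problem undermines the conclusion even if such an embedding existed. \textsc{Partition} is only weakly \NPH: to keep the instance polynomial you encode the weights $a_i$ in coordinates, so the weights (and hence the optimal radii) can be exponentially large in $n+\log T$, while a no-instance may miss the target sum by as little as $1$. The resulting relative gap is then of order $2^{-\mathrm{poly}}$, and an FPTAS run with such an $\epsilon$ is \emph{not} polynomial time, so no contradiction with $\p\neq\np$ follows; indeed, ruling out an FPTAS generally requires strong \NP-hardness together with polynomially bounded numbers, which is exactly what your fallback sentence gestures at but does not supply (no concrete strongly \NP-hard partition variant, no gadget, no gap analysis). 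A further unaddressed difficulty specific to $k=2$ is preventing the optimum from being governed by cardinality or by a single huge cluster rather than by the quantity you want to detect; some mechanism must pin down the structure of both balls in any near-optimal solution.

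For comparison, the paper's proof resolves precisely these two issues. It reduces from $k$-\textsc{Clique} on $s$-regular graphs, mapping each vertex to its $0/1$ edge-incidence vector in $\mathbb{R}^{m+2}$, so all distances are polynomially bounded, and it adds four distant sentinel points with capacity $U=n+4-k$ so that in any $2$-ball solution one ball is forced (up to a cheap exchange argument) to be the big ball $B(c_0,\sqrt{2}\Delta)$ and the other ball must enclose at least $k$ of the incidence points. Shenmaier's lemma for smallest $k$-enclosing ball then gives squared radius at most $A=(1-1/k)(s-1)$ in the yes-case and at least $A+1/k^2$ in the no-case, which, because everything is polynomially bounded, yields a multiplicative gap of $1+\Omega(1/n^5)$ and hence rules out an FPTAS. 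If you want to salvage your route, you would need to replace \textsc{Partition} by a strongly \NP-hard problem with polynomially bounded numbers, exhibit the explicit embedding with a closed-form control of both MEB radii, and add a sentinel/capacity mechanism that fixes the role of the two balls --- at which point you would essentially be rebuilding the paper's construction.
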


Although the above bound does not eradicate the possibility of obtaining a $(1+\epsilon)$-approximation in the Euclidean case, it shows that to obtain such an approximation, even when $k=2$, one needs $n^{f(\epsilon)}$ time for some non-constant function $f$ that depends on $\epsilon$. This is in contrast to the uncapacitated version of the problem, where one can get $(1+\epsilon)$-approximation in $2^{\cO((k\log k)/\epsilon^2)}\cdot n^{O(1)}$ time as mentioned before. 

As by products of our techniques we have also obtained improved bi-criteria approximations for the uniform-capacitated version of the problem where we are allowed to use $(1+\epsilon)U$ capacity. 


\begin{restatable}{theorem}{genbi}
 There is a randomized algorithm for \radii with uniform capacities that runs in time $2^{\cO(k \log (k/\epsilon))}\cdot n^{\cO(1)}$ and returns a solution with constant probability, such that each ball in the solution uses at most $(1+\epsilon)U$ capacity and the cost of the solution is at most $(2+\epsilon)\cdot  \emph{OPT}$, where \emph{OPT} is the cost of any optimal solution in which the balls  use at most $U$ capacity. 
\end{restatable}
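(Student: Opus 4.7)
My plan is to reuse, essentially verbatim, the sampling-and-flow framework that underlies the $(4+\epsilon)$-approximation of Theorem~\ref{thm:general}, and to improve the final factor from $4+\epsilon$ to $2+\epsilon$ by exploiting the $(1+\epsilon)U$ slack in capacities. Concretely, I would (i) enumerate $2^{\cO(k\log(k/\epsilon))}$ candidate $k$-tuples $(c_1,\dots,c_k)$ of ``proxy centers'' such that, with constant probability, for some tuple each $c_i$ lies in the corresponding optimal ball $B(c_i^\ast,r_i^\ast)$; (ii) for each such tuple, guess every $r_i^\ast$ within a multiplicative factor $1+\epsilon$, obtaining $\widetilde r_i$; and (iii) check, via one max-flow computation per configuration, whether $P$ can be assigned to the balls $B(c_i,(2+\epsilon)\widetilde r_i)$ while each ball receives at most $(1+\epsilon)U$ points. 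The algorithm then outputs the feasible configuration of minimum total radius $\sum_i(2+\epsilon)\widetilde r_i$.

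\textbf{Guessing and feasibility.} Both the sampling of proxy centers and the enumeration of approximate radii are already built into the proof of Theorem~\ref{thm:general}: the proxy centers come from an iterative $D^2$-type sample of $\cO(k/\epsilon)$ points, repeated $2^{\cO(k\log(k/\epsilon))}$ times, after which all $k$-subsets of each sample are enumerated; the radii are swept over a geometric grid $\{(1+\epsilon)^i\}_i$ in the polynomial range obtained by the standard scaling trick, contributing a further $2^{\cO(k\log(k/\epsilon))}$ factor. The feasibility test is an ordinary bipartite $b$-matching / max-flow instance between $P$ and the $k$ enlarged balls, with upper capacity $(1+\epsilon)U$ on each ball.

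\textbf{Approximation guarantee.} Fix the ``good'' configuration in which $c_i\in B(c_i^\ast,r_i^\ast)$ and $\widetilde r_i\in[r_i^\ast,(1+\epsilon)r_i^\ast]$ for every $i$. For any $p\in B(c_i^\ast,r_i^\ast)$, the triangle inequality yields $\dist(p,c_i)\le \dist(p,c_i^\ast)+\dist(c_i^\ast,c_i)\le 2r_i^\ast\le (2+\epsilon)\widetilde r_i$, so the optimal partition itself places every point inside one of the enlarged proxy balls and assigns at most $U\le(1+\epsilon)U$ points to each of them. Hence the max-flow is feasible on the good configuration, and the output has total radius at most $(2+\epsilon)(1+\epsilon)\sum_i r_i^\ast\le(2+\cO(\epsilon))\cdot\text{OPT}$; rescaling $\epsilon$ by a constant factor yields the advertised $(2+\epsilon)$ bound. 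The running time is $2^{\cO(k\log(k/\epsilon))}$ times a polynomial per configuration.

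\textbf{Main obstacle.} The genuinely new step is the observation that the extra $(1+\epsilon)U$ slack removes the need to ``displace'' overfull points between clusters, precisely the mechanism that forced the original radius blow-up from $2$ to $4$ in the hard-capacity proof. The step I expect to require the most care is keeping the running time at $2^{\cO(k\log(k/\epsilon))}$ once the center enumeration and the geometric radius enumeration are combined: the radius range must be reduced to a polynomial interval \emph{before} the geometric grid is swept, so that factors of $\log n$ do not leak into the exponent. Once this bookkeeping is in place, the $(2+\epsilon)$-factor analysis described above is a direct adaptation of the arguments used for Theorem~\ref{thm:general}.
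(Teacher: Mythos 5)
There is a genuine gap in the center-guessing step. Your approximation argument needs a single enumerated configuration in which \emph{every} optimal ball $B^\star_i$ contains its proxy center $c_i$, and you propose to obtain these proxies by (uniform or $D^2$-type) sampling $\cO(k/\epsilon)$ points and enumerating $k$-subsets, repeated $2^{\cO(k\log(k/\epsilon))}$ times. But an optimal ball that is assigned very few points cannot be hit this way: if, say, $k/2$ optimal clusters each consist of a handful of points sitting close to large clusters (a situation forced precisely by the capacity bound $U$, which is why such clusters exist at all), then both the uniform mass and the $D^2$ mass of each such cluster can be $\Theta(1/n)$ of the total, so the probability that one sampled batch contains a point of every optimal ball is $n^{-\Omega(k)}$. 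Covering that failure probability requires $n^{\Omega(k)}$ repetitions, which is incompatible with the claimed $2^{\cO(k\log(k/\epsilon))}\cdot n^{\cO(1)}$ bound. The $(1+\epsilon)U$ slack does not help the sampling hit small balls, so the ``good configuration'' on which your max-flow feasibility argument rests need never appear in your enumeration. (Your radius-guessing concern, by contrast, is only bookkeeping and is handled in the paper by Lemma~\ref{lem:guess_radius_general}: guess the largest radius exactly among $n^2$ candidates and grid the rest additively in units of $\epsilon r_1/k$.)

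The paper's proof avoids exactly this trap by splitting the optimal balls into heavy ones, assigned at least $\epsilon U/k$ points, and light ones. Only heavy balls are hit by sampling (each sample lands in a given heavy ball with probability at least $\epsilon/k^2$ because $n\le kU$, as in Remark~\ref{rk:sampling_for_big} and Lemma~\ref{lem:general_first step bi}), and a ball of radius $2r_i$ around the sample contains $B^\star_i$. Light balls get no representative containing them; instead, the at most $\epsilon U$ points left uncovered are covered greedily by balls of radius $2r_j$ centered at the uncovered points themselves (an uncovered point is guaranteed to lie in some light optimal ball, so guessing its index suffices), and the capacity slack is then spent to assign these at most $\epsilon U$ leftover points arbitrarily to covering balls, with no flow argument over a full tuple of optimal-ball representatives. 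If you want to salvage your scheme, you must replace sampled proxies for light balls by such data-point-centered covering balls and argue feasibility through the $\epsilon U$ budget, which is essentially the paper's construction; the flow-based feasibility test alone cannot substitute for this, because without a representative for every optimal ball the optimal assignment does not transfer to your enlarged balls.
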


The above theorem improves the approximation factor in Theorem \ref{thm:general}. In the Euclidean case, we obtain a similar algorithm. 

\begin{restatable}{theorem}{genbi}
 There is a randomized algorithm for the Euclidean version of \radii with uniform capacities that runs in time $2^{\cO((k/\epsilon^2)\log k)}\cdot dn^3$ and returns a solution with constant probability, such that each ball in the solution uses at most $(1+\epsilon)U$ capacity and the cost of the solution is at most $(1+\epsilon)\cdot  \emph{OPT}$, where \emph{OPT} is the cost of any optimal solution in which the balls  use at most $U$ capacity. 
\end{restatable}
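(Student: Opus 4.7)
The plan is to marry the two ingredients already available to us: the Euclidean sampling-based strategy that underpins Theorem~\ref{thm:twoapprox}, which captures near-optimal cluster centres in a dimension-free way, and the capacity-violating bipartite matching used in the preceding bi-criteria theorem for general metrics. The former pins down each optimal centre to within an additive $\epsilon r_i^{\ast}$ error using only $O(k/\epsilon^{2})$ samples, independent of $d$; the latter realises the corresponding assignment with $(1+\epsilon)U$ capacity. Because a ball $B(c_i,(1+\epsilon)r_i^{\ast})$ centred at a candidate $c_i$ that is $\epsilon r_i^{\ast}$-close to the true optimum centre $c_i^{\ast}$ already covers all of $C_i^{\ast}$, we avoid the triangle-inequality factor-of-two penalty that plagues the general-metric bi-criteria variant and instead obtain a $(1+\epsilon)$ cost approximation.

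The algorithm I envision proceeds in three phases. First, enumerate the $k$ radii: rounding the optimal radii to powers of $(1+\epsilon)$ reduces the candidate set to $O(\log(n)/\epsilon)$ values per cluster, and enumerating over sorted $k$-tuples one obtains $2^{\cO(k\log(k/\epsilon))}$ many guesses of $(r_1,\ldots,r_k)$ with each $r_i\ge r_i^{\ast}$. Second, draw a multiset $S\subseteq P$ of $\Theta(k/\epsilon^{2})$ uniform samples and enumerate all ordered partitions of $S$ into $k$ groups of size $\Theta(1/\epsilon^{2})$; this contributes the $k^{\cO(k/\epsilon^{2})}=2^{\cO((k/\epsilon^{2})\log k)}$ factor. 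For the partition that matches the optimal clustering restricted to $S$, the Badoiu--Clarkson MEB-coreset guarantee yields that the centre of each group's minimum enclosing ball lies within $\epsilon r_i^{\ast}$ of $c_i^{\ast}$. Third, for each such guess form the balls $B_i:=B(c_i,(1+\epsilon)r_i)$ and solve the bipartite $b$-matching problem that asks for an assignment covering every $p\in P$ while each $c_i$ receives at most $(1+\epsilon)U$ points. This is totally unimodular and solved integrally in polynomial time via min-cost flow.

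For the correct guess, feasibility is immediate: since $\lVert c_i-c_i^{\ast}\rVert\le\epsilon r_i^{\ast}\le\epsilon r_i$, the optimal ball $B(c_i^{\ast},r_i^{\ast})$ is contained in $B_i$, so the optimal assignment itself witnesses a feasible $b$-matching even with capacity $U$, and \emph{a fortiori} under $(1+\epsilon)U$. The cost of the returned solution is at most $\sum_i r_i\le (1+\epsilon)\sum_i r_i^{\ast}=(1+\epsilon)\cdot\OPT$. Boosting the sampling success probability to a constant requires only $\mathrm{poly}(k/\epsilon)$ independent repetitions, absorbed into the running time. The $d$-factor enters only through the cost of computing minimum enclosing balls on $O(1/\epsilon^{2})$-sized point sets, giving the claimed $2^{\cO((k/\epsilon^{2})\log k)}\cdot d n^{3}$ bound.

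The main obstacle is the same one confronted in Theorem~\ref{thm:twoapprox}: uniform sampling only hits \emph{large} optimal clusters, so a cluster containing $o(n/k)$ points may contribute no sample. The standard remedy, inherited from the sum-of-radii coreset literature, is a peeling strategy in which one guesses the cluster of currently largest remaining radius (whose centre is necessarily within $O(r_i^{\ast})$ of $\mathrm{poly}(n)$-many candidate net points), removes the points it covers, and resamples on the residual instance for the next cluster. Showing that after at most $k$ such peeling rounds the accumulated cost telescopes back to $(1+\epsilon)\cdot\OPT$, while the overall sample-size budget stays at $O(k/\epsilon^{2})$ and interacts cleanly with the bi-criteria bipartite-matching step (in particular, that the residual capacity slack is preserved across rounds), is the technical heart of the argument; the remaining components---the Badoiu--Clarkson coreset, the $b$-matching, and the radius enumeration---are then off-the-shelf.
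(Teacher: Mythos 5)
There is a genuine gap, on two counts. First, your second phase misuses the coreset guarantee: Lemma~\ref{lem:coreset} is existential, and the Badoiu--Clarkson construction behind Lemma~\ref{lem:corest_next} is \emph{adaptive} (one repeatedly adds a point lying outside the current expanded $\MEB$). For a \emph{uniform} random sample of $\Theta(1/\epsilon^2)$ points from a cluster there is no dimension-free guarantee that the centre of its minimum enclosing ball lies within $\epsilon r_i^\star$ of $c_i^\star$, nor that $\ext(\MEB(\cdot),\epsilon r_i)$ covers most of the cluster: if almost all of the cluster's mass sits near one boundary point, the sample's $\MEB$ degenerates to that point and misses the far side of the cluster. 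So enumerating partitions of a single non-adaptive sample does not recover the coreset, and the claimed containment $B(c_i^\star,r_i^\star)\subseteq B_i$ (which your whole feasibility and cost argument rests on) is unsubstantiated. The paper's proof instead uses adaptive sampling per cluster (the bi-criteria analogue of Lemma~\ref{lem:sampling_99}): sample a point, and as long as more than $\epsilon U/2k$ cluster points lie outside $\ext(\MEB(S_i),\epsilon r_i)$, resample \emph{from outside that extension}, so each successful sample increases the $\MEB$ radius by $\Omega(\epsilon^2 r_i)$ (Lemma~\ref{lem:corest_next}), bounding the number of rounds by $\cO(1/\epsilon^2)$; here clusters need only be approximated up to $\epsilon U/2k$ outliers, never exactly contained.

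Second, you explicitly leave the small-cluster issue unresolved: the ``peeling'' strategy is only sketched and you yourself flag its analysis as the missing technical heart, so the proposal does not constitute a proof. Moreover it is a detour from what the bi-criteria relaxation is actually for. The paper redefines $\mathcal{B}^\star_1$ as the clusters receiving at least $\epsilon U/k$ points (so uniform sampling hits them with probability $\ge \epsilon/2k^2$), makes \emph{no} attempt to approximate the remaining clusters, covers any still-uncovered points by greedily growing the sets $S_i$ (as in Lemma~\ref{lem:cover}), and then observes that the points not assignable to their own cluster's ball --- those in small clusters plus the few outliers per large cluster --- number at most $\cO(\epsilon U)$ in total, so assigning them arbitrarily to covering balls violates each capacity by at most an $\epsilon U$ additive term. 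That direct use of the capacity slack replaces both your peeling step and your $b$-matching step; in your write-up the $(1+\epsilon)U$ slack is never actually needed, which is a sign the hard case (clusters that sampling cannot hit) has been pushed into the unproven part rather than handled.
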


Note that, by Theorem \ref{thm:euclidhard}, a result as in the above theorem is not possible if we are not allowed to violate the capacity.   

\medskip\noindent
{\bf Our Techniques.} 
To obtain the above results, we overcome the known hurdles for the sum of radii objective  in the presence of capacity constraints in a non-trivial fashion. The first hurdle one faces to approach \radii is how to deal with non-uniform capacities and different size balls at the same time. For example, it is not clear how to adapt the algorithm of Inamdar and Varadarajan \cite{DBLP:conf/esa/0002V20} for non-uniform capacities. Their algorithm works in levels. In each level, a certain number of balls of same radius $r_i$ are selected using a greedy strategy. All of these balls might not be chosen in the optimal
solution. But, as the capacities of all balls are same, it is possible to argue that an appropriate capacity reassignment is possible from the optimal balls to their solution. However, it is not clear how to do the capacity reassignment if all the capacities are not same. 

We employ a completely different approach. We try to find a solution where the balls are partitioned into two types: large and small. Each ball in our solution  represents a ball in the optimal solution (view as a bijection). Each large representative ball fully contains the  corresponding optimal ball. Thus the capacity reassignment of large balls is easy. Also, these large balls encompasses all the points. Now, each small representative ball has the property that this ball and the corresponding optimal ball are disjoint, and both are totally contained in the intersection of a subset of the large balls. Hence, one can show that the capacity reassignment of small balls is also possible. This gives us a valid solution and an assignment. It is not straightforward to see that such a well-structured solution exists. Our main contribution is to prove by construction the existence of such a solution, and the construction takes the desired FPT time.

To obtain the improved results for uniform capacities, we explore techniques beyond the literature of \sumradii. Among these, a technique which has been very successful in obtaining \FPT algorithms for other clustering problems is \emph{random sampling} \cite{DBLP:conf/icalp/Cohen-AddadL19,DBLP:journals/jacm/KumarSS10,DBLP:journals/mst/BhattacharyaJK18,DBLP:journals/algorithmica/DingX20,Badoui2002,Ding20}. However, all of these applications of the technique are problem specific and work mostly in the Euclidean spaces. In our work, we devise random sampling schemes for both general and Euclidean spaces which seem to work well with the sum of radii objective. We use these schemes to obtain improved constant-approximations in the uniform-capacitated case. Needless to say,  such a  uniform random sampling does not seem to be applicable in the non-uniform capacitated case. 

Following our work, Jaiswal et al.~\cite{DBLP:conf/innovations/Jaiswal0Y24} have obtained an improved approximation factor of $(4+\sqrt{13}+\epsilon)$ for \radii, where $\epsilon > 0$. They have also improved the factor in the uniform case to $(3+\epsilon)$. They pointed out an error in the proof of Lemma 2.5 in an earlier version of our paper. The current version contains the corrected proof.

\subsection{Related work}
{\bf \sumradii.} Due to its inherent popularity, \sumradii has been studied in other restricted settings as well where singleton clusters are not allowed and the metric is unweighted; polynomial time algorithms are devised in these settings \cite{DBLP:journals/algorithmica/BehsazS15,FriggstadJ22,DBLP:journals/dm/HeggernesL06}. Also, a facility location version of the problem is studied in dynamic setting where points arrive and disappear \cite{DBLP:journals/algorithmica/HenzingerLM20}. Similar to sum of radii, the sum of diameter \cite{DBLP:journals/njc/DoddiMRTW00} and the sum of $\alpha$-th power of radii (with $\alpha > 1$) \cite{DBLP:conf/isaac/BandyapadhyayV16} objectives are also studied in the literature.  

\smallskip
\noindent
{\bf $k$-center.} $k$-center is a closely related objective to sum of radii. The $k$-center problem admits polynomial time $2$-approximation \cite{gonzalez1985clustering,DBLP:journals/jacm/HochbaumS86}, and is hard to approximate in polynomial time within a factor of less than 2 \cite{DBLP:journals/sigact/Hochba97}, unless $\p=\np$. Badoui et al.~\cite{Badoui2002} designed a coreset based $(1+\epsilon)$-approximation for the Euclidean $k$-center that runs in $2^{\cO((k\log k)/\epsilon^2)}n^{\cO(1)}$ time. Subsequently, the dependency on $\epsilon$ was improved in \cite{DBLP:journals/comgeo/BadoiuC08}.   


\subsection{Preliminaries}

\noindent
{\bf \radii.} We are given a set $P$ of $n$ points in a metric space with distance \textsf{dist}, a non-negative integer $\eta_q$ for each $q\in P$, and a non-negative  integer $k$. The goal is to find: (i) a subset $C$ of $P$ containing $k$ points and a non-negative integer $r_q$ for each $q\in C$, and (ii) a function $\mu: P\rightarrow C$, such that for each $p\in P$, $\textsf{dist}(p,\mu(p))\le r_{\mu(p)}$, for each $q\in C$, $|\mu^{-1}(q)|\le \eta_q$, and $\sum_{q\in C} r_q$ is minimized. We will sometimes use OPT to denote the value of an optimal solution.

In the uniform-capacitated case, we denote the capacity by $U$. In the general metric version of our  problem, we assume that we are given the pairwise distances $\dist$ between the points in $P$. In the Euclidean version, $P$ is a set of points in $\mathbb{R}^{d}$ for some $d \ge 1$, $\dist$ is the Euclidean distance and any point in $\mathbb{R}^{d}$ can be selected as a center. Moreover, the capacities of all these centers are uniform.

We denote the ball with center $c$ and radius $r$ by $B(c,r)$. For any ball $B=B(c,r)$, we will use $\ext(B,r')$  to denote the ball $B(c,r + r')$. Sometimes we will also use $\rad(B)$ to denote the radius of $B$. Let $S$ be a set of points in $\mathbb{R}^{d}$. The \textit{minimum enclosing ball} of $S$, noted $\MEB(S)$ is the smallest ball in $R^d$ containing all points of $S$. We say a ball $B$ \emph{covers} a point $p$ if $p$ is in $B$.   

One important remark regarding solving our  capacitated clustering problem on $P$ is that, given a set of $k$ balls $\mathcal{B}$, the problem of deciding whether there is a valid assignment $\mu : P \rightarrow \mathcal{B}$ satisfying the capacities can easily be modeled as a bipartite matching problem. This implies in particular that if such an assignment exists, it can also be found in $\cO(\sqrt{k}n^{3/2})$ time. Therefore, in all our descriptions of the algorithms, we will focus on finding the solution balls while ensuring that a valid assignment exists. 

Another remark is that, in the case where every ball has capacity $U$, we can assume that $|P| \leq k \cdot U$, or the instance is a trivial NO instance.

Throughout the paper, by \emph{linear} running time, we mean a running time linear in  $n$, which may also depend on some polynomial function of $k$.

\medskip

\noindent 
{\bf Organization.} 
In Section \ref{sec:general} and \ref{sec:Euclidean}, we describe our  algorithms in general and Euclidean metrics, respectively. Section \ref{sec:hardness} contains the hardness results. Finally, in Section \ref{sec:conclude}, we conclude with some open questions. 

\section{{\radii}: General Metric}
\label{sec:general}

In this section, we describe our two approximation algorithms for \radii in any general metric.  
First we will present a $(15+\epsilon)$-approximation for \radii when we have non-uniform capacity constraints. This is followed by a $(4+\epsilon)$-approximation algorithm for \radii, when we have uniform capacity constraints for all the points. 

\subsection{\radii with Non-uniform Capacities: General Metric} 

In this subsection, we study the case of non uniform capacities. In this setting, for every point $x$ of $P$, there is an associated  integer $\eta_x$ and any ball centered at $x$ can be assigned at most $\eta_x$ points. The uniform case corresponds to the case where $\eta_x = U$ for all $x\in P$. For convenience,  we restate the theorem statement.

\nonuniform*

\vspace{2mm}

\begin{tcolorbox}[colback=white!5,colframe=white!75!black]
From now on, let $\mathcal{B}^\star := \{B^\star_1, \cdots, B^\star_k \}$ denote the set of balls of a hypothetical optimal solution, $\mu^\star : P \rightarrow \mathcal{B}^\star$ be the associated assignment and for all $i \in [k]$, let $r^\star_i$  and $c^\star_i$ denote the radius and the center of the ball $B^\star_i$, respectively. By Lemma \ref{lem:guess_radius_general}, just bellow, we can assume that the algorithm knows an approximate radius $r_i$ for each $r^\star_i$. For a ball  $B^\star_i \in \mathcal{B^\star}$, we say that a ball $B_i$ is an {\em approximate} ball of $B^\star_i$ if $B^\star_i \subseteq B_i$, and if $x_i$ denotes the center of $B_i$, then {\em $\eta_{x_i} \geq |(\mu^\star)^{-1}(B^\star_i)|$}. Note that we can then associate $(\mu^\star)^{-1}(B^\star_i)$ to $B_i$. 
\end{tcolorbox}
\vspace{2mm}

Let us first show that it is possible to guess an approximation to each of the radii, $r_i$, in polynomial time.

\begin{lemma}\label{lem:guess_radius_general}
	For every $0< \epsilon <1 $, there exists a randomized algorithm, running in linear time that finds with probability at least $\frac{\epsilon^k}{ k^k\cdot n^2}$ a set of reals $\{r_1, \dots, r_k \}$ such that for every $i \in [k]$,  $r^{\star}_i \leq r_i$and $\sum_{i \in [k]}r_i \leq (1+ \epsilon) \sum_{i \in [k]}r^\star_i$.
\end{lemma}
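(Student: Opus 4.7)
The approach is a standard guess-plus-discretization. First observe that for each $i$, we may assume $r_i^\star = \mathsf{dist}(c_i^\star, p_i^\star)$ for some $p_i^\star \in P$ (namely a farthest point assigned to $B_i^\star$ by $\mu^\star$); otherwise $r_i^\star$ could be decreased without changing feasibility. In particular, if $i^\star := \arg\max_i r_i^\star$, then $R^\star := r_{i^\star}^\star$ belongs to the set $\{\mathsf{dist}(u,v) : u,v \in P\}$ of size at most $n^2$. The idea is to guess $R^\star$ exactly by picking a random pair, and then guess each remaining radius up to an additive error of $\epsilon R^\star / k$, which will accumulate to a total additive error of at most $\epsilon R^\star \leq \epsilon \sum_i r_i^\star$.

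Concretely, the algorithm would do the following. Draw a uniformly random ordered pair $(u,v) \in P \times P$ and set $R := \mathsf{dist}(u,v)$. Then independently, for each $i \in [k]$, draw a uniformly random integer $j_i \in \{0, 1, \dots, \lceil k/\epsilon \rceil\}$ and set $r_i := j_i \cdot \epsilon R / k$. Sampling and evaluating a single distance takes linear time, and we perform only $k+1$ independent samples, so the total running time is linear (with a polynomial dependence on $k$).

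For the analysis, condition on the event $E_0$ that $(u,v) = (c_{i^\star}^\star, p_{i^\star}^\star)$, which happens with probability at least $1/n^2$ and ensures $R = R^\star$. Conditioned on $E_0$, for every $i$ define $j_i^\star := \lceil k r_i^\star / (\epsilon R) \rceil$. Since $r_i^\star \leq R$, we have $j_i^\star \leq \lceil k/\epsilon \rceil$, so $j_i^\star$ lies in the sample range. If $j_i = j_i^\star$ for every $i$, then by construction $r_i^\star \leq r_i \leq r_i^\star + \epsilon R / k$. Summing over $i$,
\[
    \sum_{i \in [k]} r_i \;\leq\; \sum_{i \in [k]} r_i^\star + k \cdot \frac{\epsilon R}{k} \;=\; \sum_{i \in [k]} r_i^\star + \epsilon R^\star \;\leq\; (1+\epsilon) \sum_{i \in [k]} r_i^\star,
\]
since $R^\star = \max_i r_i^\star \leq \sum_i r_i^\star$. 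The events ``$j_i = j_i^\star$'' are independent across $i$, each occurring with probability $1/(\lceil k/\epsilon\rceil+1) \geq \epsilon/k$ (after a mild tightening of the range of $j_i$ so that the number of choices is at most $k/\epsilon$, possible under $0 < \epsilon < 1$). Combining with $E_0$ yields overall success probability at least $\epsilon^k / (k^k \cdot n^2)$, as required.

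The only delicate point is the error accounting: each additive slack of $\epsilon R/k$ must be charged to $R^\star$ rather than to the individual $r_i^\star$, because small radii cannot absorb a proportional error. This is exactly why one spends a factor $1/n^2$ to guess the largest radius \emph{exactly}; all other radii may then be rounded cheaply to a grid of spacing $\epsilon R^\star / k$. No other real obstacle arises.
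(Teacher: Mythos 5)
Your proposal is correct and follows essentially the same route as the paper: guess the largest radius exactly by sampling a pair of points (cost $1/n^2$ in probability), then round each remaining radius up to a grid of spacing $\epsilon r_1^\star/k$, so that each radius needs only about $k/\epsilon$ guesses and the total additive error $\epsilon r_1^\star$ is charged to the largest radius. The only difference is presentational (you make the charging argument and the grid-range bookkeeping explicit, which the paper leaves implicit), so there is nothing substantive to add.
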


\begin{proof}
	Without loss of generality, we  assume that $r^\star_1 \geq r^\star_2 \geq \dots r^\star_{k-1} \geq r^\star_k$. Since $r^\star_1$ corresponds to the distance between two elements of $P$, there are at most $n^2$ possible choices for $r^\star_1$. Therefore with probability at least $1/n^2$, the algorithm {\em first} can choose this value uniformly at random. Assume this is the case and let $r_1 = r^\star_1$. Knowing $r_1 = r^\star_1$ exactly, knowing the value of each $r^\star_i$ up to a precision of $\frac{\epsilon r_1}{k}$ would yield the desired result. However, since $r^\star_1$ is the largest radius, there are only $r^\star_i \cdot \frac{k}{\epsilon r_1} \leq \frac{k}{\epsilon} $ choices for each $r^\star_i$, which gives $(\frac{k}{\epsilon})^k$ possible total choices. This concludes the proof. 
\end{proof}

From now on we {\em assume for simplicity that the algorithm knows an approximate value $r_i$ of $r^\star_i$ for all $i \in [k]$.} Let us give some informal ideas about how the algorithm of Theorem \ref*{thm:general_nonuniform} works. Some technicalities, especially about making sure we don't pick the same center twice, will be left out to the more formal description of the algorithm.

\subsubsection*{Informal sketch}

Ideally we would like to find for each optimal ball $B^\star_i$ an approximate ball $B_i$ having the same center as $B^\star_i$ and a radius $r_i \leq C \cdot r^\star_i$, for some constant $C$. Indeed, if we have such a set of balls, then the obvious assignment $\mu$ defined as $\mu(x) = B_i$ whenever $\mu^\star(x) = B^\star_i$ would give a solution. While this is not possible in general, the algorithm will start with a greedy procedure to get a set of approximate balls $\mathcal{B}_1$ for some indices $I_1$. The procedure is quite simple: start with  $\mathcal{B}_1 := \emptyset$, $I_1 = \emptyset$ and as long as the union of balls in $\mathcal{B}_1$ does not cover $P$, pick a point $x$ of $P$ not in the union, guess the index $i$ such that $\mu^\star(x) = B^\star_i$ and pick $c$ the point at distance at most $r_i$ of $x$ which maximises the value of $\eta_c$. Since $c^\star_i$ is at distance at most $r_i$ of $x$, we have that $\eta_{c^\star_i} \leq \eta_c$ and that $\textsf{dist}(c,c^\star_i) \leq 2r_i$, which means that the ball $B_i$ of radius $5r_i$ centered around $c$ is an approximate ball of $B^\star_i$. Therefore, the algorithm will add $B_i$ to $\mathcal{B}_1$ and the index $i$ to $I_1$. This procedure stops when the union of $\mathcal{B}_1$ covers $P$. At the end of that first step, we have that $\mathcal{B}_1$ contains an approximate ball for each ball $B^\star_i$ of radius $5r_i$ with $i \in I_1$. And while the union of $\mathcal{B}_1$ covers all $P$, we are far from being done as the capacity constraints have not been taken into account.

Consider now a ball $B^\star_j$ such that $j \not \in I_1$, which means that no approximate ball of $B^\star_j$ is in $\mathcal{B}_1$. In the best case (Lemma \ref{lem:nonuni_second}), there is a ball $B_i \in \mathcal{B}_1$ of center $x_i$ approximating $B^\star_i$ such that $5r_i \leq r_j$ and $B_i \cap B^\star_j$ is non empty. Indeed, in that case the ball of radius $5 r_i + r_j$ around $x_i$ contains $x^\star_j$, the center of $B^\star_j$. This means that if $x$ is the point in that ball maximizing $\eta_x$, then the ball of center $x$ and radius $2\cdot(5r_i + 2r_j)  \leq 4 r_j$ contains $c^\star_j$ and thus the ball $B_j$ of center $x$ and radius $2\cdot(5r_i + 2r_j) + r_j \leq 5 r_j$ contains $B^\star_j$ and is an approximate ball of $B^\star_j$. Therefore, if such indices $j$ and $i$ exist, the algorithm can guess them and add a new approximate ball to $\mathcal{B}_1$.

After this second step, we reach a point where, if $j \not \in I_1$ and $i \in I_1$ are such that $B^\star_j \cap B_i \not = \emptyset$, then $r_j \leq 5 r_i$. In particular, incurring an extra $5r_i$ as we just did to get a replacement for $c^\star_j$ is too costly. For this reason, at this step of the algorithm we stop trying to find approximate balls and instead focus on finding balls to "fix" the assignment. Since the balls in $\mathcal{B}_1$ are approximate balls, it means that we can replace $B^\star_i$ with $B_i \in \mathcal{B}_1$ for any $i \in I_1$ (and take the other $B^\star_j$), and still have a solution to our problem with slightly bigger balls. Abusing notation we can still use $\mu^\star$ for the valid assignment. Now for an index $j \not \in I_1$, the ball $B^\star_j$ intersects a subset, say $T_j$, of balls in $\mathcal{B}_1$. Ideally we would like to find a ball $B_j$ of center $x$ and radius $r_j$ such that $\eta_x \geq \eta_{c^\star_j}$ and $|B_j \cap (\mu^\star)^{-1}(B_i)| \geq |(\mu^\star)^{-1}(B^\star_j) \cap B_i|$ for all $i \in T_j$. Indeed, in that case we could replace $B^\star_j$ by $B_j$ and the condition on $B_j \cap B^\star_i$ ensures that we could adapt the assignment $\mu^\star$ to be a valid assignment by assigning $(\mu^\star)^{-1}(B^\star_j) \cap B_i$ to $B_i$ and a set of the same size in $B_j \cap (\mu^\star)^{-1}(B_i)$ to $B_j$. 

The main difficulty here is that even if we guess the set $T_j$, picking $B_j$ greedily is not possible as there might be some competitions between the sizes of the intersection with the different balls in $B_i$ for $i \in T_j$ (we cannot afford to guess the $|(\mu^\star)^{-1}(B^\star_j) \cap B_i|$). The way to avoid this problem is to expand all the balls $B_i$ of $\mathcal{B}_1$ by $10r_i$. Indeed, since we have assumed that $r_j \leq 5 r_i$ for every $i \in T_j$, it means now that $B^\star_j$ is \textbf{entirely contained} in the $\ext(B_i, 10r_i)$ (expanded ball) for $i \in T_j$. So now, denoting $P_j$ to be the intersection of all the $\ext(B_i, 10r_i)$ for $i \in T_j$ where we removed all the other $B_{i'}$ for $i' \in I_1 \setminus T_j$, we have that $(\mu^\star)^{-1}(B^\star_j)$ is a subset of $P_j$, and we can take $B_j$ as the ball of center $x$ and radius $r_j$ which maximizes $s_x = \min\{\eta_x, |B(x,r_j) \cap P_i|\}$. Here there are some technicalities if $B_j$ intersects some ball $B^\star_{j'}$ for $j' \not \in I_1$ (that includes $j = j'$), but assume for the moment that it is not the case and let us hint why we can actually replace $B^\star_j$ by $B_j$ in our solution if all the balls $B_i \in \mathcal{B}_1$ are replaced by $\ext(B_i, 10r_i)$. Indeed, by choice of $B_j$, we can assign $s_x$ points of $P_j$ to $B_j$. By our assumptions, all these points were assigned to $B_i$ for $i \in T_j$ in $\mu^\star$, which means that by assigning these points to $B_j$, there is now a new budget $s_x$ of available points in the union of $\ext(B_i, 10r_i)$ for $i \in T_j$. However, since $B^\star_j$ is entirely contained in the $\ext(B_i, 10r_i)$ for $i \in T_j$, and by choice of $s_x$, we can assign all the elements of $(\mu^\star)^{-1}(B^\star_j)$ to the balls $\ext(B_i, 10r_i)$ for $i \in T_j$ using this new budget. 

Therefore, the last phase of the algorithm consists in building a set of ``replacement" balls $\mathcal{B}_2$ for the balls $B^\star_j$ with $j \not \in I_1$ by guessing $T_j$ and building the intersection $P_j$ to take greedily a ball of radius $r_j$ inside that set (see Lemma \ref{lem:nonuni_third}). This is done sequentially, and the set $I_2$ will contain all indices $j$ for which  $\mathcal{B}_2$ contains a replacement ball $B_j$ for $B^\star_j$. An important remark here is that the properties required for balls in $\mathcal{B}_2$ are dependent on the balls in $\mathcal{B}_1$ and not just the optimal balls. For technical reasons, we might have to add a new ball in $\mathcal{B}_1$ during the process of building  $\mathcal{B}_2$, in which case we cannot guarentee that the properties of balls in $\mathcal{B}_2$ hold anymore. If this happens, the algorithm will then erase all the choices of $\mathcal{B}_2$ and $I_2$ and start the second phase again. However, since we only do this when $I_1$ gets bigger, this is done at most $k$ times before $I_1 = [k]$. The algorithm stops when the sets $I_1$ and $I_2$ contains all indices of $[k]$ which means that each ball $B^\star_j$ either has an approximate ball in $\mathcal{B}_1$ or a replacement ball in $\mathcal{B}_2$.  

\subsubsection*{The algorithm}

As explained previously, the algorithm maintains two disjoint sets of indices $I_1$ and  $I_2$, initially set to $\emptyset$, as well as two sets of balls $\mathcal{B}_1$ and $\mathcal{B}_2$, also initially set to $\emptyset$. $\mathcal{B}_1$ and $\mathcal{B}_2$ will eventually contain a representative ball $B_i$ for every ball $B^\star_i$ in the optimal solution. Moreover, we will argue that there exists a valid assignment of the points to the balls (with an expansion) in the union of these two sets.  

For every $i \in I_2$, let $T_i$ denote the set of indices $j$ of $I_1$, such that $B^\star_i \cap B_j$ is not empty, and $P_i$ denote the intersection of the extensions $\ext(B_j, 10r_j)$ over all $j \in T_i$ after removing the points of $B_s$ for $s \in I_1 \setminus T_i$.

We say that  the sets $(I_1, I_2,\mathcal{B}_1,\mathcal{B}_2)$ form a \textit{valid configuration} if the following properties are satisfied. 

\begin{itemize}
\setlength{\itemsep}{-1pt}
	\item For every $i \in I_1$, there is an approximate ball $B_i \in \mathcal{B}_1$ of $B^\star_i$ of radius at most $5r_i$. 
	\item For every $i \in I_2$, 
	$T_i$ is non-empty, $B^\star_i \subseteq P_i$ and there exists a ball $B_i \in \mathcal{B}_2$ of center $x_i$ and radius $r_i$, such that both $\eta_{x_i}$ and $B_i \cap P_i$ have size at least $|(\mu^\star)^{-1}(B^\star_i)|$. 
	
	\item For $i,j \in I_2$, $B_i$ and $B_j$ do not intersect. 
	\item For every $i \in I_2$ and $s \in [k] \setminus I_1$, $B^\star_s$ and $B_i$ do not intersect. 
	\item For every $j \in [k] \setminus I_1$, then $c^\star_j$ cannot be the center of a ball in $\mathcal{B}_1$.
\end{itemize} 

Before describing the algorithm to construct a valid configuration $(I_1, I_2,\mathcal{B}_1,\mathcal{B}_2)$ such that $I_1 \cup I_2 = [k]$, let us show that such a configuration would indeed yield an approximate solution.

\begin{lemma}\label{lem:nonuni_approx}
	Let $(I_1, I_2,\mathcal{B}_1,\mathcal{B}_2)$ be a valid configuration such that $I_1 \cup I_2 = [k]$, then the set of balls $\mathcal{B}$ containing the balls in $\mathcal{B}_2$, as well as for every $i \in I_1$ the ball $\ext(B_i, 10r_i)$ is a $15$-approximate solution. 
\end{lemma}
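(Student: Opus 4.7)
\textbf{Plan for Lemma \ref{lem:nonuni_approx}.} The proof has two parts: bounding the total radius, and exhibiting a valid capacity-respecting assignment $\mu\colon P\to\mathcal{B}$. The radius bound is immediate: for each $i\in I_1$, the approximate ball $B_i$ has radius at most $5r_i$, so $\ext(B_i,10r_i)$ has radius at most $15r_i$; for each $i\in I_2$, the ball $B_i\in\mathcal{B}_2$ has radius exactly $r_i$. Hence the total cost is at most $\sum_{i\in I_1}15r_i+\sum_{i\in I_2}r_i\le 15\sum_{i\in[k]}r_i$, which, combined with Lemma~\ref{lem:guess_radius_general}, gives the claimed $15(1+\epsilon)$-approximation after rescaling $\epsilon$.

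The main work is constructing a valid assignment by modifying $\mu^\star$ through a localized swap for each $i\in I_2$. I start with $\mu(p)=\ext(B_j,10r_j)$ whenever $\mu^\star(p)=B^\star_j$ with $j\in I_1$ (valid in distance since $B^\star_j\subseteq B_j$). The points in $(\mu^\star)^{-1}(B^\star_i)$ for $i\in I_2$ are initially unassigned. Now, for each $i\in I_2$, fix $S_i\subseteq B_i\cap P_i$ with $|S_i|=n_i:=|(\mu^\star)^{-1}(B^\star_i)|$, which exists by the valid configuration property on $\mathcal{B}_2$. Assign $S_i$ to $B_i$. Write $a_{ij}:=|S_i\cap(\mu^\star)^{-1}(B^\star_j)|$; then partition $(\mu^\star)^{-1}(B^\star_i)$ arbitrarily into parts of sizes $a_{ij}$ for $j\in T_i$, and assign each part to the corresponding $\ext(B_j,10r_j)$.

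The crux is showing that this is well-defined and respects capacities. The key observation is that every $p\in S_i$ must satisfy $\mu^\star(p)=B^\star_j$ for some $j\in T_i$. Indeed, $p\in P_i$ implies $p\notin B_s$ for $s\in I_1\setminus T_i$, and since $B^\star_s\subseteq B_s$ for $s\in I_1$, this rules out $\mu^\star(p)=B^\star_s$ with $s\in I_1\setminus T_i$; and $\mu^\star(p)=B^\star_s$ with $s\in I_2$ is impossible since $p\in B_i$ and property 4 gives $B_i\cap B^\star_s=\emptyset$. Hence $\sum_{j\in T_i}a_{ij}=|S_i|=n_i$, so the partition of $(\mu^\star)^{-1}(B^\star_i)$ into sizes $a_{ij}$ is consistent, and its points lie in $\ext(B_j,10r_j)$ since $(\mu^\star)^{-1}(B^\star_i)\subseteq B^\star_i\subseteq P_i\subseteq\bigcap_{j\in T_i}\ext(B_j,10r_j)$. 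Property 3 ($B_i$'s in $\mathcal{B}_2$ are pairwise disjoint) guarantees the $S_i$'s are disjoint, and property 4 (with $s=i$) ensures $S_i\cap(\mu^\star)^{-1}(B^\star_i)=\emptyset$, so the swaps for different $i\in I_2$ do not conflict. Capacity-wise, each $B_i$ in $\mathcal{B}_2$ receives $n_i\le\eta_{x_i}$ points, and each $\ext(B_j,10r_j)$ loses $\sum_{i:j\in T_i}a_{ij}$ points and gains exactly the same number, so it keeps its original load $n_j\le\eta_{x_j}$. The main obstacle I anticipate is bookkeeping the disjointness conditions cleanly, since one has to argue simultaneously about the original $\mu^\star$-assignments of the $S_i$'s, the pairwise disjointness of the $S_i$'s, and the disjointness of the pre-images $(\mu^\star)^{-1}(B^\star_i)$; all three pieces rely on properties 3 and 4 of the valid configuration applied in slightly different ways.
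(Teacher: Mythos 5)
Your proposal is correct and follows essentially the same route as the paper's proof: your key observation that every point of $S_i$ is $\mu^\star$-assigned to some $B^\star_j$ with $j\in T_i$ is exactly the paper's central claim, and your swap of $S_i$ against a partition of $(\mu^\star)^{-1}(B^\star_i)$ into parts of sizes $a_{ij}$ is the paper's construction with the sets $X_i$, $X_{i,j}$, $Y_{i,j}$ and $L_j$ written out explicitly.
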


\begin{proof}
The fact that the sum of radii of the balls in $\mathcal{B}$  is at most $15$ times the optimal solution  follows from the 
definition. To prove the lemma, we have to show that this is a valid solution by giving a valid assignment. Recall that $\mu^\star$ is the assignment for $\mathcal{B^\star}$.  

For every $i \in I_2$, recall  that $T_i$ denotes the set of indices $j$ of $I_1$ such that $B^\star_i$ intersects $B_j$ and $P_i$ denotes the intersection of all the $\ext(B_j, 10r_j)$ for $j \in T_i$ where we removed the points in $B_s$ for $s \in I_1 \setminus T_i$
for $j \in I_1$. By definition of a valid configuration, if we use $Y_i$ to denote the set $(\mu^\star)^{-1}(B^\star_i)$, then there exists a set $X_i$ of size $|Y_i|$ in $B_i \cap P_i$.  The following claim is a crucial part of the proof. 

\begin{claim}
Any point $x \in X_i$ is such that $\mu^\star(x) = B^\star_j$ for some $j  \in T_i$
\end{claim}

\begin{proof}
Indeed, by definition, the only balls $B_j \in \mathcal{B}_1$ containing an element $x$ of $P_i$ are such that $j \in T_i$. In particular if $j \in I_1 \setminus T_i$, then $B^\star_j \subseteq B_j$ and therefore does not contain $x$. Moreover, if $j \in I_2$, then we know by definition that $B^\star_j \cap B_i$ is empty (that includes $B^\star_i$). Since $x \in B_i$, this concludes the proof of our claim. 
\end{proof}

The previous claim implies that, if we define for every $j \in T_i$ the set $X_{i,j} := X_i \cap   (\mu^\star)^{-1}(B^\star_j)$, then the $X_{i,j}$ for $j \in T_i$ actually define a partition of $X_i$. As $|X_i| = |Y_i|$, we can partition the set $Y_i$ into sets $Y_{i,j}$ for $j \in T_i$ such that $|X_{i,j}| = |Y_{i,j}|$ for all $j \in T_i$. Remember that, for $j \in T_i$,  $Y_i \subseteq  \ext(B_j, 10r_j)$. By convention, if $j \in T_1 \setminus T_i$, then $X_{i,j}$ and $Y_{i,j}$ are defined as the empty set. 

For every $j \in I_1$, we can now define $L_j = \left( (\mu^\star)^{-1}(B^\star_j) \setminus (\cup_{i \in I_2} X_{i,j}) \right) \cup_{i \in I_2} Y_{i,j}$. Since the sets $B_i$, for $i \in I_2$, are pair-wise disjoint and $|X_{i,j}| = |Y_{i,j}|$ for all elements $j \in I_1$ and $i \in I_2$, we have that $|L_j| = |(\mu^\star)^{-1}(B^\star_j)|$. Moreover, since $Y_{i,j}$ is non empty only if $Y_i \subseteq \ext(B_j, 10r_j)$, it means that $L_j \subseteq  \ext(B_j, 10r_j)$ and because $B_j$ is an approximate ball of $B^\star_j$, it means that the center $x_j$ of $B_j$ is such that $\eta_{x_j} \geq |(\mu^\star)^{-1}(B^\star_j)| = |L_j|$.
	
	Finally this means that the function $\mu$ such that $\mu^{-1}(\ext(B_j, 10r_j)) = L_j$ for all $j \in I_1$ and $\mu^{-1}(B_i) = X_i$ for all $i \in I_2$ is a valid assignment from $P$ to $\mathcal{B}$, which completes the proof.
\end{proof}

Now, we describe the algorithm that constructs the desired configuration. The first phase of the algorithm will consist of a greedy selection of elements in $I_1$, such that the union of $\mathcal{B}_1$ covers $P$ (Lemma \ref{lem:nonuni_first}). As said previously, this will not imply that we can assign points to these balls, without violating capacity constraints.  
The following two other lemmas (Lemma \ref{lem:nonuni_second} and \ref{lem:nonuni_third}) will be used to achieve that.

As we deal with hard capacities, we cannot reuse any center. We need the following definition to enforce that. We call a point $p\in P$ an \textit{available center},  
if $p$ has not already been selected as a center of a ball in $\mathcal{B}_1$ or $\mathcal{B}_2$.

\begin{lemma}\label{lem:nonuni_first}
	If $(I_1, I_2 = \emptyset,\mathcal{B}_1,\mathcal{B}_2= \emptyset)$ is a valid configuration such that the union of the balls in $\mathcal{B}_1$ do not cover $P$, then there exists a randomized algorithm, running in linear time and with probability at least $1/2k^2$, that finds an index $s$ and a ball $B_s$ such that adding $s$ to $I_1$ and $B_s$ to $\mathcal{B}$ still yields a valid configuration.
\end{lemma}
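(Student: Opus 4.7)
The plan is to greedily extend $\mathcal{B}_1$ by an approximate ball around an uncovered point. Start by picking any $x \in P$ not in $\bigcup_{B \in \mathcal{B}_1} B$; such a point exists by hypothesis. Because the first condition of a valid configuration gives $B^\star_i \subseteq B_i$ for every $i \in I_1$, the (unique) index $s^\star$ with $\mu^\star(x) = B^\star_{s^\star}$ must lie outside $I_1$; this is the only structural fact about the optimum we shall need.

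The algorithm then tosses a fair coin $b \in \{0,1\}$, guesses an index $s \in [k]$ uniformly at random, and (if $b=1$) a second index $j \in [k]$ uniformly at random. Let $A$ be the set of \emph{available} points (points that are not already centers of balls in $\mathcal{B}_1$) at distance at most $r_s$ from $x$, and let $\hat{c} \in \arg\max_{c \in A} \eta_c$, ties broken arbitrarily. If $b=0$ the algorithm outputs index $s$ together with the ball $B(\hat{c}, 5r_s)$; if $b=1$ it outputs index $j$ together with the ball $B(\hat{c}, 5r_j)$. Each step runs in linear time in our sense.

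For the analysis, condition on $s = s^\star$ (probability $1/k$). The fifth condition of a valid configuration ensures that $c^\star_{s^\star}$ is available, and $\dist(x, c^\star_{s^\star}) \le r^\star_{s^\star} \le r_{s^\star}$ places it in $A$, so $\eta_{\hat{c}} \ge \eta_{c^\star_{s^\star}} \ge |(\mu^\star)^{-1}(B^\star_{s^\star})|$. If $\hat{c} \neq c^\star_j$ for every $j \in [k] \setminus (I_1 \cup \{s^\star\})$ (Case~1), then the $b=0$ branch produces a valid extension with index $s^\star$: the triangle inequality gives $\dist(\hat{c}, c^\star_{s^\star}) \le 2 r_{s^\star}$, so $B^\star_{s^\star} \subseteq B(\hat{c}, 3r_{s^\star}) \subseteq B(\hat{c}, 5r_{s^\star})$, the capacity bound holds by choice of $\hat{c}$, $\hat{c}$ is available, and the fifth condition is preserved by the case assumption. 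Otherwise (Case~2), $\hat{c} = c^\star_{j^\star}$ for some $j^\star \in [k] \setminus (I_1 \cup \{s^\star\})$, and then taking $b=1$ with $j = j^\star$ (conditional probability $\tfrac{1}{2k}$) the $b=1$ branch yields the extension $(j^\star, B(c^\star_{j^\star}, 5r_{j^\star}))$: this ball trivially contains $B^\star_{j^\star}$, its center has the correct capacity, and it preserves the fifth condition because the $k$ optimal centers are distinct. Since exactly one of the two cases occurs, the overall success probability is at least $\tfrac{1}{2k^2}$.

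The main subtlety, and the reason for the coin-flipping step, is precisely Case~2: without it, one would naively add the ball $B(\hat{c}, 5r_{s^\star})$ with index $s^\star$, whose center $\hat{c} = c^\star_{j^\star}$ would violate the fifth condition of a valid configuration. The remedy exploits the observation that the offending point $c^\star_{j^\star}$ is automatically a perfectly good center for the optimal ball $B^\star_{j^\star}$ itself (at radius $r_{j^\star}$), so the algorithm can simply switch and add index $j^\star$ in place of $s^\star$; the cost of identifying $j^\star$ is the second $1/k$ factor in the success probability.
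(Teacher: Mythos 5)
Your proposal is correct and follows essentially the same route as the paper: pick an uncovered point $x$, guess its optimal index, take the available point of maximum capacity within distance $r_s$ of $x$, and randomly branch on whether that point happens to be an unused optimal center, yielding success probability $1/2k^2$. The only differences are cosmetic (you use radii $5r_{s}$ and $5r_{j}$ where the paper uses $3r_i$ and $r_j$, and you fold the case $\hat{c}=c^\star_{s^\star}$ into the first branch), all of which stay within the ``radius at most $5r_i$'' requirement of a valid configuration, so the argument goes through unchanged.
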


\begin{proof}
	Let $x$ be any point in $P$ not covered by the union of the balls in $\mathcal{B}_1$ and $i$ be the index such that $\mu^\star(x) = B^\star_i$. Let $c$ be the available potential center in $P$ at distance at most $r_i$ from $x$ which maximises the value of $\eta_c$. If $c$ is not a center of some $B^\star_j$ for $j \not \in (I_1 \cup I_2 )$, then the ball $B_i$ of center $c$ and radius $3r_i$ is an approximate ball of $B^\star_i$ and thus adding $B_i$ to $\mathcal{B}_1$ and $i$ to $I_1$ yields a valid configuration. If $c$ is a center of some $B^\star_j$ for $j \not \in (I_1 \cup I_2 )$, then adding the ball $B_j$ of center $c$ and radius $r_j$ to $\mathcal{B}_1$ and $j$ to $I_1$ also yields a valid configuration. 

	The algorithm will then pick uniformly at random an index $i' \in [k]$, then decide uniformly at random whether the available center $c'$ at distance at most $r_{i'}$ is a center of some $B^\star_j$ for some $j \not \in (I_1 \cup I_2 )$. If it decides negatively, then the algorithm will output $s := i'$ and $B_s$ the ball of center $c'$ and radius $3r_{i'}$. If it decides positively, then the algorithm will then also pick uniformly at random and index $j' \in [k]$ and output $s := j'$ as well as $B_s$ the ball of center $c'$ and radius $r_{j'}$. 

	The algorithm then suceeds if $i' = i$, if it decides correctly if $c'$ is a center of some $B^\star_j$ and if $j' = j$ in the case where it is. Overall this is true with probability at least $\frac{1}{k \cdot 2 \cdot k}$, which ends the proof. 
\end{proof}

The first phase of the algorithm  consists of applying the algorithm from Lemma~\ref{lem:nonuni_first}  until the union of the balls in $\mathcal{B}_1$ covers all the points in $P$. The next two lemmas are used in the next phase of the algorithm. 

\begin{lemma}\label{lem:nonuni_second}
	If $(I_1, I_2=\emptyset,\mathcal{B}_1,\mathcal{B}_2=\emptyset)$ is a valid configuration such that the balls in $\mathcal{B}_1$ cover the points of $P$ and there exist two indices $i \in I_1$ and $j \in [k] \setminus (I_1 \cup I_2)$ such that $B_i$ and $B^\star_j$ intersect and $r_j \geq 5 r_i$, then there exists a randomized algorithm that in linear time and with probability at least $1/2k$, finds an index $t \in [k] \setminus (I_1 \cup I_2)$ and a ball $B_t$ such that $(I_1 \cup \{t\}, I_2,\mathcal{B}_1 \cup \{B_t\},\mathcal{B}_2)$ is a valid configuration.
\end{lemma}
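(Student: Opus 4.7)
The plan is to adapt the greedy-then-guess strategy of Lemma~\ref{lem:nonuni_first}, exploiting the fact that the target radius $r_j$ dominates $r_i$ enough to absorb the diameter of the already-chosen ball $B_i$ into the slack provided by a radius of $5r_j$. The key geometric observation is that, since $B_i$ has radius at most $5r_i$ and meets $B^\star_j$, we have $\dist(x_i, c^\star_j) \leq 5r_i + r_j$, where $x_i$ is the center of $B_i$. Consequently, any point $x \in B(x_i, 5r_i + r_j)$ lies at distance at most $2(5r_i + r_j) \leq 4r_j$ from $c^\star_j$ (using $r_j \geq 5r_i$), so $B(x, 5r_j) \supseteq B^\star_j$. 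Moreover, because $c^\star_j$ itself lies in $B(x_i, 5r_i + r_j)$ and---by the last condition of a valid configuration---has not yet been used as a center, the available $\eta$-maximizer $x$ in this ball satisfies $\eta_x \geq \eta_{c^\star_j} \geq |(\mu^\star)^{-1}(B^\star_j)|$, making $B(x, 5r_j)$ an approximate ball of $B^\star_j$.

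I would iterate deterministically over every $i \in I_1$, incurring only a polynomial overhead in running time. For each such $i$, the algorithm performs two random choices: pick $t \in [k] \setminus (I_1 \cup I_2)$ uniformly at random, and flip a fair coin to choose between a ``normal'' branch and a ``coincidence'' branch. Both branches begin by computing $x$, the available center---one not already used as the center of a ball in $\mathcal{B}_1 \cup \mathcal{B}_2$---inside $B(x_i, 5r_i + r_t)$ that maximizes $\eta_x$. The normal branch then outputs $B(x, 5r_t)$, while the coincidence branch outputs $B(x, r_t)$, the latter implicitly asserting $x = c^\star_t$ and hence that the output is $B^\star_t$ itself. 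In either case the index $t$ is added to $I_1$. Combined with the deterministic outer loop, the success probability for the correct guess of $t$ and branch is at least $\tfrac{1}{k} \cdot \tfrac{1}{2} = \tfrac{1}{2k}$.

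For correctness, fix the good pair $(i, j)$ supplied by the hypothesis. In the generic scenario, where the $\eta$-maximizer $x$ computed with $t = j$ is not the center of any untreated optimal ball, the normal branch produces the approximate ball $B(x, 5r_j) \supseteq B^\star_j$; every validity condition is preserved, most crucially the center-reservation condition, which holds precisely by the genericity assumption. In the non-generic scenario, $x$ coincides with $c^\star_s$ for some $s \in [k] \setminus (I_1 \cup I_2 \cup \{j\})$, and the remedy is instead to guess $t = s$ and take the coincidence branch, producing $B^\star_s$ directly; since $B^\star_s$ is itself an approximate ball of $B^\star_s$ with the correct capacity, and occupies its own optimal center, it too preserves all validity conditions.

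The main obstacle is the last validity condition on center reservations: without care, the greedy choice could consume $c^\star_s$ as the center of an approximate ball of some different index, permanently blocking the eventual recovery of $B^\star_s$ by the procedures of Lemma~\ref{lem:nonuni_first}. The two-branch coin flip, matched to either the intended target (normal branch) or the coincidentally-hit optimal index (coincidence branch), is precisely what resolves this obstruction, at the cost of only a factor $2$ in the success probability.
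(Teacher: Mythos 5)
Your overall strategy matches the paper's: center a search ball at $x_i$ with radius $5r_i+r_j$, observe that it contains $c^\star_j$, take the available capacity-maximizer $x$ inside it, output a ball of radius $5r_j$ around $x$ in the generic case, and treat separately the ``coincidence'' case where $x$ happens to be the optimal center of some untreated index; the generic branch is argued correctly (your bounds $\dist(x,c^\star_j)\le 2(5r_i+r_j)\le 4r_j$ and $\eta_x\ge \eta_{c^\star_j}\ge |(\mu^\star)^{-1}(B^\star_j)|$ are exactly the paper's).

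There is, however, a genuine gap in your coincidence branch. In your algorithm the search ball is $B(x_i,5r_i+r_t)$, so its radius depends on the guessed output index $t$. In the non-generic scenario you define $s$ as the untreated index whose center $c^\star_s$ equals the maximizer $x$ computed in $B(x_i,5r_i+r_j)$, and you propose to recover it by guessing $t=s$ and taking the coincidence branch. But with $t=s$ the algorithm computes the $\eta$-maximizer in a \emph{different} ball $B(x_i,5r_i+r_s)$: if $r_s<r_j$ this ball may not even contain $c^\star_s$, and in any case its maximizer $x'$ need not be $c^\star_s$. Then $B(x',r_s)$ need not contain $B^\star_s$, $\eta_{x'}$ need not be at least $|(\mu^\star)^{-1}(B^\star_s)|$, and $x'$ could itself be yet another untreated optimal center, so the last condition of a valid configuration can fail; the assertion that the coincidence branch ``produces $B^\star_s$ itself'' is therefore unjustified. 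The paper avoids this by fixing the geometry once --- the search ball is always $B(x_i,5r_i+r_j)$ and the point $x$ is computed once --- and randomizing only over which case holds and, in the coincidence case, over the label $j'$ of the optimal ball centered at that same $x$, outputting $B(x,r_{j'})$. Your argument is repaired by the same decoupling: tie the search radius to the hypothesis index $j$ rather than to the guessed output index, and in the coincidence branch return $B(x,r_t)$ around the already-computed $x$. A secondary point: iterating deterministically over all $i\in I_1$ does not by itself produce the single pair $(t,B_t)$ the lemma requires, since validity cannot be tested against the unknown optimum; like the paper, you must in effect assume the hypothesis pair $(i,j)$ is known to the procedure (or guess it, at a further loss in the success probability).
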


\begin{proof}
	Let $x_i$ denote the center of $B_i$, and $B'$ be the ball of center $x_i$ and radius $5r_i + r_j$. Because $B_i$ is an approximate ball of $B^\star_i$, and $B^\star_j$ and $B_i$ intersect, we have that $B'$ contains $c^\star_j$. Let $x$ be the potential center of $B'$ which maximises $\eta_x$. If there exists an index $j' \in [k] \setminus (I_1 \cup I_2)$, such that the ball $B^\star_{j'}$ is centered at $x$, then $t := j'$ and the ball $B_{t}$ of center $x$ and radius $r_{j'}$ satisfy the property of the lemma (remember that $\mathcal{B}_2=\emptyset$). If not, then the ball $B_j$ at center $x$ and of radius $2 \cdot (5r_i + r_j) + r_j$ is an approximate ball of $B^\star_j$ of radius at most $5r_j$. Indeed, because the ball at  center $x_i$ and of radius $(5r_i + r_j)$ contains both $x$ and $c^\star_j$, it means that the ball at center $x$ and of radius $ 2(5r_i + r_j)$ contains $c^\star_j$ and thus $ B^\star_j \subseteq B_j$. Again, as $\mathcal{B}_2=\emptyset$, then $t := j$ and $B_t := B_j$ satisfy the properties of the lemma.
	
	The algorithm therefore consists of choosing uniformly at random which of the two cases is true. In the first case it also chooses uniformly at random an index $j_1 \in [k]$ and outputs $t:= j_1$ as well as the ball $B_t$ of center $x$ and radius $r_{j_1}$. In the second case it outputs $t := j$ as well as the ball $B_t$ of center $x$ and radius $2 \cdot (5r_i + r_j) + r_j$. The previous discussion implies that the algorithm succeeds if it chooses correctly between the two cases, and in the first case if $j_1 = j'$. Overall, the probability of success is at least $1/2k$. 
\end{proof}

\begin{lemma}\label{lem:nonuni_third}
	Suppose $(I_1, I_2,\mathcal{B}_1,\mathcal{B}_2)$ is a valid configuration with the property that the balls in $\mathcal{B}_1$ cover the points of  $P$ and for every $i \in [k] \setminus(I_1 \cup I_2)$ and $j \in I_1$, such that $B_j$ and $B^\star_i$ intersect, $r_i \leq 5 r_j$. Then there exists a randomized algorithm that in linear time and with probability at least $1/6(k+1)^2$, either finds an index $t \in [k] \setminus I_1$ and a ball $B_t$ such that $(I_1 \cup \{t\}, I_2=\emptyset,\mathcal{B}_1 \cup \{B_t\},\mathcal{B}_2=\emptyset)$ is a valid configuration, or finds an index $s \in [k] \setminus (I_1 \cup I_2)$ and a ball $B_s$ such that $(I_1, I_2 \cup \{s\},\mathcal{B}_1,\mathcal{B}_2 \cup \{B_s\})$ is a valid configuration.
\end{lemma}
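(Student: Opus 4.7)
The plan is to design a randomized procedure that guesses a constant-size case label together with at most two random indices $s,s' \in [k]$, and that outputs either a replacement ball to append to $\mathcal{B}_2$ (the $I_2$-branch) or an approximate ball to append to $\mathcal{B}_1$ (the $I_1$-branch with the prescribed reset of $I_2, \mathcal{B}_2$). The structural insight driving the construction is that, under the hypothesis $r_j \leq 5 r_i$ whenever $B_i$ meets $B^\star_j$, any unknown optimal ball $B^\star_s$ is fully contained in $P_s$; this lets us define a deterministic candidate $B_s$ as a ``$\min(\eta,\,\text{intersection size})$'' maximizer, and whenever this candidate happens to meet some $B^\star_{s'}$, the same collision point lets us construct an approximate ball for $B^\star_{s'}$.

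Two preparatory observations are needed. First, for $s \in [k]\setminus(I_1 \cup I_2)$ and $i \in T_s$, a short triangle inequality (picking $y \in B^\star_s \cap B_i$ and using $r_s \leq 5r_i$) shows $B^\star_s \subseteq \ext(B_i, 10 r_i)$; combined with $B^\star_s \cap B_i = \emptyset$ for $i \in I_1 \setminus T_s$ and the $\mathcal{B}_1$-coverage of $P$, this gives $T_s \neq \emptyset$ and $B^\star_s \subseteq P_s$. Second, for every $s' \in [k]\setminus I_1$ the point $c^\star_{s'}$ is available as a center: condition~(5) excludes it from centers of $\mathcal{B}_1$, and if $c^\star_{s'}=x_i$ for some $i \in I_2$ then $c^\star_{s'} \in B_i \cap B^\star_{s'}$, contradicting condition~(4).

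The algorithm then sets $x_s$ to be the available point in $P_s \setminus \bigcup_{i \in I_2} B(x_i, r_i + r_s)$ maximizing $\min\bigl(\eta_x,\,|B(x, r_s) \cap P_s|\bigr)$. Condition~(4) forces $d(c^\star_s, x_i) > r_i + r_s$ for every $i \in I_2$, so $c^\star_s$ is a valid competitor and the maximum is at least $|(\mu^\star)^{-1}(B^\star_s)|$; moreover $B_s := B(x_s, r_s)$ is disjoint from every $B_i \in \mathcal{B}_2$ by construction. If $B_s$ is also disjoint from $B^\star_{s'}$ for every $s' \in [k]\setminus I_1$, the algorithm returns $(s, B_s)$ as the $I_2$-update. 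Otherwise $B_s$ meets some $B^\star_{s'}$ at a point $y \in P$ (possibly with $s' = s$), and the algorithm returns $(s', B(c, 3 r_{s'}))$ as the $I_1$-update, where $c$ is the available point within distance $r_{s'}$ of $y$ maximizing $\eta_c$. The triangle inequality through $y$ gives $B^\star_{s'} \subseteq B(c, 3 r_{s'})$, and since $c^\star_{s'}$ is an available competitor within distance $r_{s'}$ of $y$ we get $\eta_c \geq \eta_{c^\star_{s'}}$, so this is an approximate ball of radius at most $5 r_{s'}$.

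The probability bound $1/(6(k+1)^2)$ is obtained by randomizing $s, s' \in [k] \cup \{\bot\}$ and the case label from a six-element set, which, following the template of Lemma~\ref{lem:nonuni_first} and Lemma~\ref{lem:nonuni_second}, accommodates the sub-case where the chosen $c$ coincides with $c^\star_{s''}$ for some $s'' \in [k]\setminus I_1$, so that the algorithm uses the ball $B(c, r_{s''}) = B^\star_{s''}$ in place of $B(c, 3r_{s'})$ to keep condition~(5) intact. The hardest part of the formal write-up is verifying that all five valid-configuration conditions hold after each of the two branches --- in particular the self-intersection situation $s' = s$ which correctly routes the algorithm to the $I_1$-branch, and the condition~(5) check just described. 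Computing $P_s$ without knowing $T_s$ is handled by enumerating the $2^{|I_1|} \leq 2^k$ choices of $T \subseteq I_1$, an overhead absorbed into the global $2^{\cO(k^2\log k)}$ running time.
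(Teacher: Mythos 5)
Your overall frame (define $P_s$, show $B^\star_s\subseteq P_s$, greedily pick a $\min\{\eta_x,|B(x,r_s)\cap P_s|\}$-maximizer, and switch to an $I_1$-update when the candidate collides with an optimal ball) matches the paper, but the collision branch has a genuine gap, and it is exactly the point the paper's proof is built around. When your candidate $B_s$ intersects some $B^\star_{s'}$ with $s'\notin I_1$, you center the rescue ball at ``the available point within distance $r_{s'}$ of $y$'', where $y\in B_s\cap B^\star_{s'}$. The algorithm cannot compute or cheaply guess such a witness $y$: it depends on the unknown optimal ball, and guessing it costs a $1/n$ factor, which is incompatible with the claimed $1/6(k+1)^2$ success probability (and, over the $\cO(k^2)$ applications, with FPT time). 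The only anchor the algorithm actually has is its own center $x_s$; the only ball it can certify contains $c^\star_{s'}$ is $B(x_s,r_s+r_{s'})$, and the resulting approximate ball has radius $2(r_s+r_{s'})+r_{s'}$, which is at most $5r_{s'}$ only when $r_{s'}\geq r_s$. So the case of a collision with a \emph{smaller} optimal ball ($r_{s'}<r_s$, $s'\notin I_1$) is not handled: you can neither add $B_s$ to $\mathcal{B}_2$ (condition (4) fails) nor build an affordable approximate ball for $B^\star_{s'}$. This is precisely why the paper does not use a single candidate: it constructs up to $k+1$ candidate centers $x_1,\dots,x_{k+1}$, pairwise at distance at least $4r_s$, so that each optimal ball of radius less than $r_s$ can meet at most one candidate ball; by pigeonhole some candidate avoids all of them, and if instead some candidate is hit by a ball with $r_{s'}\geq r_s$ (or $c^\star_s$ lands near a candidate), that collision can be converted into an $I_1$-update anchored at a known point. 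Your proposal contains no analogue of this far-apart family and pigeonhole step, and without it the single-candidate argument does not go through.

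Two smaller inaccuracies: condition (4) only says the point sets $B^\star_s$ and $B_i$ are disjoint, which gives $\dist(c^\star_s,x_i)>r_i$ but not $\dist(c^\star_s,x_i)>r_i+r_s$, so your claim that $c^\star_s$ survives the exclusion of $\bigcup_{i\in I_2}B(x_i,r_i+r_s)$ (needed for the lower bound on the maximizer) is not justified as stated; and enumerating $T_s\subseteq I_1$ independently at each of the $\cO(k^2)$ applications gives a $2^{\cO(k^3)}$ blow-up, not something absorbed into $2^{\cO(k^2\log k)}$, so that bookkeeping also needs more care.
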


\begin{proof}
    Let $i$ be an element of $[k] \setminus(I_1 \cup I_2)$ and let $T_i$ denote the set of indices $j \in I_1$ such that $B_j \cap B^\star_i$ is non-empty. By the hypothesis of the lemma, we have that $r_i \leq 5 r_j$ for each element $j \in T_i$. In particular, it means that $B^\star_i \subseteq \ext(B_j,10r_j)$ for every $j \in T_i$. Let $P_i$ denote the intersection of all those sets $ \ext(B_j,10r_j)$ where we removed $B_s$ for all $s \in I_1 \setminus T_i$. By definition, we have that $(\mu^\star)^{-1}(B^\star_i) \subseteq P_i$ and moreover, as $(I_1, I_2,\mathcal{B}_1,\mathcal{B}_2)$ is a valid configuration, $c^\star_i$ is an available center in $P_i$ and $B^\star_i$ do not intersect any ball $B_j$ with $j \in I_2$. 

    We will now construct the set of points $x_1, \dots, x_{\ell}$, as well as a set of balls of radius $r_i$ $B_{x_1}, \dots, B_{x_{\ell}}$ inductively as follows. Let $x_1$ be the available center in $P_i$ such that, denoting $B_{x_1}$ the ball at center $x_1$ and of radius $r_i$, $B_{x_1}$ is disjoint from all the elements in $\mathcal{B}_2$ and $s_{x_1}=\min\{\eta_{x_1}, |B_{x_1} \cap P_i|\}$ is maximized. Now for any $j \in [k]$, assuming $x_1, \dots, x_j$ as well as the balls $B_{x_1}, \dots B_{x_j}$ have been constructed, we define $x_{j+1}$ to be the available center in $P_i$ such that $x_{j+1}$ is at distance at least $4r_i$ from any other $x_s$ with $s \in [j]$, and denoting $B_{x_{j+1}}$ the ball at center $x_{j+1}$ and of radius $r_i$, $B_{x_{j+1}}$ is disjoint from all the elements in $\mathcal{B}_2$ and $s_{x_{j+1}}=\min\{\eta_{x_{j+1}}, |B_{x_{j+1}} \cap P_i|\}$ is maximized. The process stops when either $\ell = k+1$, or all the available centers of $P_i$ are at distance less than $4r_i$ from $x_1, \dots, x_{\ell}$. 

    Suppose first that an index $s \in [k+1]$ exists such that $c^\star_i$ is at a distance less than $4r_i$ from $x_s$. Without loss of generality, let us assume that $s$ is the smallest such index. In particular, for every $s' <s$, $c^\star_i$ is at a distance at least $4r_i$ from $x_{s'}$. Moreover, we know that $B^\star_i \subseteq T_i$, which means that by choice of $x_s$, $s_{x_s} \geq |(\mu^\star)^{-1}(B^\star_i)|$. This means that the ball $B_i$ of center $x_s$ and radius $5r_i$ is an approximate ball of $B^\star_i$, and thus, if $x_s$ is not a center $c^\star_j$ for some $j \in [k] \setminus I_1$ different from $i$, then $(I_1 \cup \{i\}, I_2=\emptyset,\mathcal{B}_1 \cup \{B_{i}\},\mathcal{B}_2=\emptyset)$ is a valid configuration, in particular as $\mathcal{B}_2=\emptyset$. If $x_s$ is a center $c^\star_j$ for $j \in [k] \setminus I_1$, then denoting $B_j$ the ball of center $x_s$ and radius $r_j$, we have that $(I_1 \cup \{j\}, I_2=\emptyset,\mathcal{B}_1 \cup \{B_{j}\},\mathcal{B}_2=\emptyset)$ is a valid configuration. From now on we will assume that this case does not occur. In particular, it means that $\ell = k+1$.  

    Suppose now that there exists an index $i' \in [k] \setminus I_1$ such that $r_{i'} \geq r_i$ and $B^\star_{i'}$ intersects $B_{x_s}$ for some $s \in [k+1]$. In that case, the ball of center $x_s$ and radius $r_i + r_{i'}$ contains $c^\star_{i'}$. Therefore, if $x$ denotes the available center inside that ball that maximises $\eta_x$, we have $\eta_x \geq \eta_{c^\star_{i'}}$. Finally, since $r_i \leq r_{i'}$, this means that the ball $B_{i'}$ of center $x$ and radius $2(r_i + r_{i'}) + r_{i'} \leq 5 r_{i'}$ is an approximate ball of $B^\star_{i'}$. There might be several options for $i'$, but we can make an arbitrary choice. Thus, if $x$ is not a center $c^\star_j$ for some $j \in [k] \setminus I_1$ different from $i'$, then $(I_1 \cup \{i'\}, I_2=\emptyset,\mathcal{B}_1 \cup \{B_{i'}\},\mathcal{B}_2=\emptyset)$ is a valid configuration. If $x$ is a center $c^\star_j$ for $j \in [k] \setminus I_1$, then denoting $B_j$ the ball of center $x$ and radius $r_j$, we have that $(I_1 \cup \{j\}, I_2=\emptyset,\mathcal{B}_1 \cup \{B_{j}\},\mathcal{B}_2=\emptyset)$ is a valid configuration.

    Suppose now that for every index $i' \in [k] \setminus I_1$ such that $r_{i'} \geq r_i$, we have that $B^\star_{i'}$ is disjoint from every ball in $B_{x_1}, \dots, B_{x_{k+1}}$ and $c^\star_i$ is at distance $4r_i$ from every $x_s$ with $s \in [k+1]$. For every other index $i'' \in [k] \setminus I_1$, we know that $B^\star_{i''}$ can only intersect a single ball of $B_{x_1}, \dots, B_{x_k+1}$. Indeed, all the $x_1, \dots, x_{k+1}$ are at distance at least $4r_i$ of one another by definition and $r_{i''} < r_i$. In particular this means that there is an index $s \in [k+1]$ such that no ball $B^\star_{i'}$ intersects $B_{x_s}$ for $i \in [k] \setminus I_1$. In particular, it means that $x_s$ is different from all the centers $c^\star_j$ with $j \in [k] \setminus I_1$. Moreover, when $x_s$ and $B_{x_s}$ were chosen, we know that  $c^\star_i$ is an available center in $P_i$ at distance $4r_i$ from the points $x_1, \dots, x_{s-1}$. This means in particular that $\eta_{x_s}$ and $|B_x \cap P_i|$ are both larger than $|(\mu^\star)^{-1}(B^\star_i)|$. Moreover, we know by choice that $B_{x_s}$ doesn't intersect any element of $\mathcal{B}_2$, and thus $(I_1, I_2 \cup\{i\},\mathcal{B}_1, \mathcal{B}_2 \cup \{B_{x_s}\})$ is a valid configuration.

    Finally, the algorithm consists of constructing the $x_1, \dots, x_{\ell}$ and $B_{x_1}, \dots, B_{x_{\ell}}$ and deciding between the three cases randomly, and in the last two cases picks uniformly the index $i'$ and then outputs the described ball and index. In the last two cases, it also needs to decide if there exists $i''$ such that the potential center considered is actually $c^\star_{i''}$ and in which case pick that index uniformly at random as well. Overall, the probability of success of this algorithm is at least $1/6(k+1)^2$.  
\end{proof}

We are now ready to prove our main theorem.

\begin{proof}[Proof of Theorem~\ref{thm:general_nonuniform}]
Let us describe the algorithm. First, it applies Lemma \ref{lem:guess_radius_general} to obtain an approximation $r_i$ of each $r^\star_i$ with probability at least $\frac{\epsilon^k}{k^k\cdot n^2}$. Then the algorithm initialize a valid configuration $(I_1 = \emptyset, I_2 = \emptyset, \mathcal{B}_1 = \emptyset, \mathcal{B}_2= \emptyset )$ and run the algorithm of Lemma \ref{lem:nonuni_first} at most $k$ times, until $\mathcal{B}_1$ covers all the points of $P$. At each step, the probability of success is at least $1/k^2$, so in total at least $1/k^{2k}$. 
Then the algorithm enters into the second phase. This phase is divided into multiple steps. 

In the beginning of each step, we maintain the invariant that the current configuration is valid with $I_2 = \emptyset$ and $\mathcal{B}_2= \emptyset$. Each step then consists of a series of applications of Lemma  \ref{lem:nonuni_second} followed by a series of applications of Lemma  \ref{lem:nonuni_third}. The current step ends when an index is added to $I_1$ by the application of Lemma \ref{lem:nonuni_third}, and hence at that point $I_2 = \emptyset$ and $\mathcal{B}_2= \emptyset$, or $I_1 \cup I_2 = [k]$. We go to the next step (maintaining the invariant), unless $I_1\cup I_2=[k]$ in which case the algorithm terminates. 

In a step, the algorithm decides which lemma to apply as long as $I_2=\emptyset$. Otherwise, it applies only Lemma \ref{lem:nonuni_third}. If $I_2=\emptyset$, it randomly decides if  there exists indices $i \in I_1$ and $j \in [k] \setminus (I_1 \cup I_2)$ such that $B^\star_i$ and $B^\star_j$ intersect and $r_j \geq 5r_i$. In which case, the algorithm applies Lemma \ref{lem:nonuni_second} to increase the size of $|I_1|$ in linear time and with probability at least $1/2^2k$. If no such pair of indices exists, then the algorithm applies Lemma \ref{lem:nonuni_third} to increase $|I_1 \cup I_2|$ or $|I_1|$ in linear time and with probability at least $1/6(k+1)^2$. 

\begin{claim}
Assuming the algorithm made all the correct random choices, it terminates with a valid configuration $(I_1, I_2,\mathcal{B}_1,\mathcal{B}_2)$ such that $I_1 \cup I_2 = [k]$ after $O(k^2)$ applications of Lemma \ref{lem:nonuni_second} and \ref{lem:nonuni_third}.  
\end{claim}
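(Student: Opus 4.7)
The plan is a straightforward counting argument based on two bounded quantities: $|I_1|$ and $|I_1\cup I_2|$, each at most $k$.

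First I would bound the number of steps. By the algorithm's description, a step ends precisely when Lemma~\ref{lem:nonuni_third} adds an index to $I_1$ (which also resets $(I_2,\mathcal{B}_2)$ to $(\emptyset,\emptyset)$) or when $I_1\cup I_2=[k]$ triggers termination. Since the former strictly increases $|I_1|$ and $|I_1|\leq k$, there are at most $k+1$ steps overall. The step invariant ($I_2=\emptyset$ and $\mathcal{B}_2=\emptyset$ at the start, with $\mathcal{B}_1$ covering $P$) is preserved across steps precisely because the $I_1$-growing branch of Lemma~\ref{lem:nonuni_third} outputs a configuration with $I_2=\emptyset$ and $\mathcal{B}_2=\emptyset$.

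Next I would bound the number of lemma applications within a single step. Because the invariant at the start of a step forces $I_2=\emptyset$ and Lemma~\ref{lem:nonuni_second} is invoked only when $I_2=\emptyset$, every application of Lemma~\ref{lem:nonuni_second} in a step precedes the first application of Lemma~\ref{lem:nonuni_third}. Each application of Lemma~\ref{lem:nonuni_second} strictly increases $|I_1|$, so at most $k$ such invocations occur per step; each application of Lemma~\ref{lem:nonuni_third} strictly increases $|I_1\cup I_2|$ (either by growing $I_2$, or by growing $I_1$ and closing the step), so at most $k$ invocations of it occur per step. Hence a step uses $O(k)$ lemma calls, and summing over $O(k)$ steps yields the claimed $O(k^2)$ total.

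Finally, I would argue that termination occurs with the desired configuration. Conditional on all random choices being correct, each invocation of Lemma~\ref{lem:nonuni_second} or Lemma~\ref{lem:nonuni_third} preserves validity by its statement, so the only thing to verify is that the loop cannot stall before $I_1\cup I_2=[k]$. The case split is exhaustive: whenever $I_2=\emptyset$, either there exist $i\in I_1$ and $j\in[k]\setminus I_1$ with $B_i\cap B^\star_j\neq\emptyset$ and $r_j\geq 5r_i$ (so Lemma~\ref{lem:nonuni_second} applies), or the hypothesis of Lemma~\ref{lem:nonuni_third} holds outright since $\mathcal{B}_1$ covers $P$; once $I_2\neq\emptyset$, Lemma~\ref{lem:nonuni_third} continues to apply until the step closes. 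The only mildly delicate point is keeping the invariants straight across the reset of $(I_2,\mathcal{B}_2)$, which I expect to be the main bookkeeping obstacle, but this reset is already built into the conclusion of Lemma~\ref{lem:nonuni_third}'s $I_1$-growing branch.
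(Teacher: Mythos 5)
Your proposal is correct and follows essentially the same route as the paper: bound the number of steps by the monotone growth of $|I_1|$, bound the lemma applications within a step by the growth of $|I_1\cup I_2|$ (the paper folds your two per-step counts into one), and use the complementarity of the hypotheses of Lemmas~\ref{lem:nonuni_second} and~\ref{lem:nonuni_third} to rule out stalling. The only point the paper spells out that you leave implicit is why Lemma~\ref{lem:nonuni_third} ``continues to apply'' once $I_2\neq\emptyset$: its hypothesis persists because $I_1$ and $\mathcal{B}_1$ are frozen for the remainder of the step while $[k]\setminus(I_1\cup I_2)$ only shrinks, so the condition of Lemma~\ref{lem:nonuni_second}, once false, stays false.
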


\begin{proof}
First, we argue about the maximum number of applications of the two lemmas. Note that in each step, we add at least one index to $I_1$ and then only go to the next step. Also, once an index is added to $I_1$ it is never removed. Thus, the total number of steps is at most $k$. Also, in each step, everytime we apply a lemma, the size of $I_1\cup I_2$ gets increased, which can be at most $k$. Thus, in each step we will apply the lemmas at most $k$ times in total. Hence, the total number of applications of both lemmas is $O(k^2)$. 

Next, we prove that the algorithm terminates with the desired configuration. Fix a step. Note that if $I_2=\emptyset$ and we make correct choices, we can correctly apply a lemma and make progress. This is true, as the conditions in the two lemmas are complementary. Now, if $I_2\ne\emptyset$, then we have applied Lemma \ref{lem:nonuni_third} at least once. This implies for every $i \in [k] \setminus(I_1 \cup I_2)$ and $j \in I_1$, such that $B_j$ and $B^\star_i$ intersect, $r_i \leq 5 r_j$. Hence, the condition of Lemma \ref{lem:nonuni_second} is false for the current set $I_1$. Now, we do not change $I_1$ throughout a step once we apply Lemma \ref{lem:nonuni_third}, except at the last time, in which case we go to the next step emptying $I_2$. Thus, once $I_2\ne\emptyset$, throughout the step, it holds that for every $i \in [k] \setminus(I_1 \cup I_2)$ and $j \in I_1$, such that $B_j$ and $B^\star_i$ intersect, $r_i \leq 5 r_j$. Hence, we can always apply Lemma \ref{lem:nonuni_third} and make progress. 

Now, consider the last step, we prove that at the end of this step $I_1\cup I_2=[k]$. By the above discussion, this step ends either if $I_1\cup I_2$ becomes $[k]$ or an index is added to $I_1$. In the latter case, we go to the next step. However, as the current step is the last one, it must be the case that $I_1\cup I_2=[k]$. This completes the proof of the claim, as we always maintain a valid configuration. 
\end{proof}

If the algorithm made all the correct random choices, by the above claim together with Lemma \ref{lem:nonuni_approx}, we can get a $15$-approximate solution from $\mathcal{B}_1$ and $\mathcal{B}_2$. The algorithm runs in linear time and succeeds with probability at least $\frac{\epsilon^k}{k^k\cdot n^2} \cdot \frac{1}{k^{2k}} \cdot \frac{1}{(6(k+1)^2)^{k^2}}$. This means that running the previous algorithm $ 2^{\cO(k^2\log k)}\cdot n^2$ times, we obtain a $(15+\epsilon)$-approximation algorithm with constant probability. Lastly, it is not hard to derandomize this algorithm by performing exhaustive searches in each step instead of making decisions randomly. The time bound still remains the same. 
\end{proof}

\subsection{{\radii} with Uniform Capacities in General Metrics}
In this subsection we show how to get a better approximation in the case where the capacities are uniform. 

\uniform*

\vspace{2mm}

\begin{tcolorbox}[colback=white!5,colframe=white!75!black]
From now on, let $\mathcal{B}^\star := \{B^\star_1, \cdots, B^\star_k \}$ denote the set of balls of a hypothetical optimal solution, $\mu^\star : P \rightarrow \mathcal{B}^\star$ be the associated assignment and for all $i \in [k]$, let $r^\star_i$ denote the radius of the ball $B^\star_i$. Let $\OPT$ be the optimal cost. Throughout this section, $n$ will denote the size  of $P$.
\end{tcolorbox}
\vspace{2mm}

As in the non uniform case, Lemma \ref{lem:guess_radius_general} implies that we can assume that the algorithm knows an approximate value $r_i$ of $r^\star_i$ for all $i \in [k]$.
Let $\mathcal{B}^\star_1$ denote the set of balls $B^\star_i$ of $\mathcal{B}^\star$ such that $(\mu^\star)^{-1}(B^\star_i) \geq U/k$ and $\mathcal{B}^\star_2 =  \mathcal{B}^\star \setminus \mathcal{B}^\star_1$. That is, $\mathcal{B}^\star_1$ are those balls which are assigned ``close to its capacity'', hence heavy, and remaining are in some sense light balls.  Since, there is at most $2^k$ ways of partitioning $\mathcal{B}^\star$ into $(\mathcal{B}^\star_1, \mathcal{B}^\star_2)$, by adding a factor $2^k$ to the running time, we can assume that the algorithm knows this partition. In the remaining of the section, it will be implicit that the partition of $\mathcal{B}^\star$ into $(\mathcal{B}^\star_1, \mathcal{B}^\star_2)$ is know. The following easy remark will be essential to our approach.

\begin{rmk}\label{rk:sampling_for_big}
	Let $B$ be a ball containing at least $U/k$ points of $P$. A point uniformly sampled from $P$ has probability at least $1/k^2$ to belong to $B$. 
\end{rmk}

\begin{proof}
	Indeed, since $B$ contains at least $U/k$ points and we assumed that $n \leq kU$, $B$ contains more than $n/k^2$ points of $P$, which completes the proof. 
\end{proof}

In particular, Remark~\ref{rk:sampling_for_big} implies that by doing $k$ uniform sampling,  we obtain with probability at least $1/k^{2k}$ a point in each ball of $\mathcal{B}^\star_1$. Further, because of Lemma \ref{lem:guess_radius_general} we know the radius, say $r_j$, of each ball and thus taking a ball of radius $2r_j$ around these sampled points, 
would cover everything $\mathcal{B}^\star_1$ would cover. In particular, if  $\mathcal{B}^\star_1 = \mathcal{B}^\star$, then this immediately gives a $(2+ \epsilon)$-approximation running in time 
$\cO(k^{2k}\cdot n^{3})$.

What remains to understand now are the balls of $\mathcal{B}^\star \setminus \mathcal{B}^\star_1 = \mathcal{B}^\star_2$. While we might not be able to sample inside these balls, the fact that they are assigned less than $U/k$ points each, allows us to take the union of these balls without breaking the capacity constraints. For example, replacing all the balls in $\mathcal{B}^\star_2$ with {\em one ball} containing every point in $P$ is still a valid solution to the problem, even though the increase in radius might be too big to give a good approximation. The idea will be to do this strategy only for some carefully selected subsets of $P$. 

For this purpose, we define a graph $G^\star$ where the vertices $v_i$ correspond to the balls $B^\star_i$ of $\mathcal{B}^\star$ and two vertices $v_i$ and $v_j$ are adjacent if the corresponding balls $B^\star_i$ and $B^\star_j$ intersect. For our purpose, it is sufficient that the algorithm knows the  vertices in each connected component of $G^\star$. Since, the number of connected components is at most $k$, by multiplying  $k^k$ to the running time, we can assume that we know  the connected components of $G^\star$.  

{\bf Therefore, for cleaner presentation we will assume that the connected components of $G^\star$, as well as the bipartition $(\mathcal{B}^\star_1, \mathcal{B}^\star_2)$ of $\mathcal{B}^\star$ are known in the rest of the section}.

Abusing notations, we will identify vertices of $G^\star$ as balls of $\mathcal{B}^\star$. For a connected component $C$, the \textit{radius} of $C$, denoted by $\rad(C)$, is defined as the sum of the $r_i$ for all the balls $B^{\star}_i$ in $C$. The next lemma can easily be proved by an induction on the number of balls inside $C$.

\begin{lemma}\label{lem:component_and_radius}
	If $C$ is a connected component of $G^\star$, then the set of points $X$ contained inside the balls of $C$ is such that the maximum distance between any two points of $X$ is at most $2\cdot \rad(C)$.
\end{lemma}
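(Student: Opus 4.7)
The plan is to prove the lemma by induction on the number of balls in the connected component $C$, as suggested in the statement.

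For the base case, when $C$ consists of a single ball $B^\star_i$ of radius $r_i$, any two points in $B^\star_i$ are at distance at most $2r_i = 2 \cdot \rad(C)$ by the triangle inequality through the center $c^\star_i$.

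For the inductive step, suppose the statement holds for every connected component with fewer than $m$ balls, and let $C$ be a component with $m$ balls. I would take a spanning tree of $C$ (viewed as a subgraph of $G^\star$) and pick a leaf $B^\star_j$ of that spanning tree, with radius $r_j$, whose unique neighbour in the tree is some ball $B^\star_k$ in the remaining component $C' := C \setminus \{B^\star_j\}$. The component $C'$ has $m-1$ balls and is still connected (because we removed a leaf of a spanning tree), so by the induction hypothesis its covered point set $X'$ has diameter at most $2 \cdot \rad(C')$. Note $\rad(C) = \rad(C') + r_j$.

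Now I would pick any two points $x, y \in X$ and bound $\textsf{dist}(x,y)$ via the triangle inequality. If both points lie in $X'$, the bound $2 \cdot \rad(C') \le 2 \cdot \rad(C)$ is immediate. If both lie in $B^\star_j$, the bound $2 r_j \le 2 \cdot \rad(C)$ follows from the base case reasoning. The remaining case is $x \in X'$ and $y \in B^\star_j$: since $B^\star_j$ and $B^\star_k$ intersect, pick a common point $p \in B^\star_j \cap B^\star_k \subseteq X'$. Then $\textsf{dist}(y,p) \le 2 r_j$ (both $y$ and $p$ lie in $B^\star_j$) and $\textsf{dist}(x,p) \le 2 \cdot \rad(C')$ by the induction hypothesis, so
\[
\textsf{dist}(x,y) \le \textsf{dist}(x,p) + \textsf{dist}(p,y) \le 2 \cdot \rad(C') + 2 r_j = 2 \cdot \rad(C),
\]
which closes the induction.

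This proof has no real obstacle; the only mildly delicate point is ensuring that removing a leaf of a spanning tree of $C$ leaves a connected subcomponent, so that the inductive hypothesis applies cleanly. Alternatively, one could skip induction entirely and argue directly: since $C$ is connected in $G^\star$, any two balls containing $x$ and $y$ are joined by a path $B^\star_{j_0}, B^\star_{j_1}, \ldots, B^\star_{j_\ell}$ of pairwise-intersecting balls, and chaining the triangle inequality through a chosen intersection point between each consecutive pair gives $\textsf{dist}(x,y) \le 2(r_{j_0} + r_{j_1} + \cdots + r_{j_\ell}) \le 2 \cdot \rad(C)$. I would likely present the inductive proof as it is what the statement suggests.
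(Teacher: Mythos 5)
Your proof is correct and follows exactly the route the paper indicates (the paper only states that the lemma ``can easily be proved by an induction on the number of balls inside $C$,'' which is precisely your argument, with the details filled in). The only cosmetic remark is that the induction hypothesis should formally be stated for connected \emph{sub-collections} of intersecting balls rather than connected components of $G^\star$, since removing a leaf leaves a connected subgraph, not a component -- your argument already uses it in this form.
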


In particular,  Lemma~\ref{lem:component_and_radius} implies that if $G^\star$ is connected, then removing all the balls in $\mathcal{B}^\star_2$ and taking {\em one ball} containing all the points of $P$ only adds $2 \cdot \OPT$
to the cost. However, if $G^\star$ is not connected, then we can do this procedure for every component $C$ containing a ball of $\mathcal{B}^\star_1$, as we are able to {\em sample a point in them} and thus the ball of radius $2\cdot (\rad(C))$ around this sampled point cover all the points in $C$ and, in particular, contains all the balls of $\mathcal{B}^\star_2 \cap C$. By doing this for all components of $C$ containing elements of both $ \mathcal{B}^\star_1 $ and $ \mathcal{B}^\star_2 $, we are able to approximate all the balls of $\mathcal{B}^\star_2$ contained in those components by paying an extra $2 \cdot \OPT$. What remains to approximate are the balls in $\mathcal{B}^\star_2$ that are not in a connected component of $C$ containing an element of $\mathcal{B}^\star_1$. Let this set of balls be called  $\mathcal{B}^\star_3$. 

Let us start by precisely describing the way the algorithm guesses approximate balls for $\mathcal{B}^\star\setminus \mathcal{B}^\star_3$. We will say that a list $L_1$ of balls \textit{approximates} $\mathcal{B}^\star_1$ if for every ball $B^\star_i \in \mathcal{B}^\star_1$, $L_1$ contains a ball $B_i$ such that $B^\star_i \subseteq B_i$ and $\rad(B_i) = 2 \cdot r_i$. We will also say that $B_i$ \textit{approximates} $B^\star_i$. Moreover, a list $L_2$ of balls is said to \textit{cover} $G^\star$ if for every connected component $C$ of $G^\star$ containing a ball of $ \mathcal{B}^\star_1$ and a ball of $ \mathcal{B}^\star_2$, there exists a ball $B_C \in L_2$ of radius $2 \cdot \rad(C)$ that contains all the points of $C$.

\begin{lemma}[{\bf Construction of $L_1\cup L_2$}]
\label{lem:general_first step}
	There exists an algorithm running in linear time that finds, with probability at least $\frac{1}{k^{2k}}$ a list $L_1$ approximating $\mathcal{B}^\star_1$ and a list $L_2$ covering $G^\star$. 
\end{lemma}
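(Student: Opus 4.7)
The plan is to construct $L_1$ and $L_2$ from at most $k$ uniform random samples of $P$, exploiting the fact that the algorithm already knows $\mathcal{B}^\star_1$ and the connected components of $G^\star$. Fix an arbitrary ordering $\mathcal{B}^\star_1=\{B^\star_{i_1},\dots,B^\star_{i_m}\}$ with $m\le k$, and draw $p_1,\dots,p_m$ independently and uniformly from $P$. For each $j\in[m]$, add the ball $B_{i_j}:=B(p_j,2r_{i_j})$ to $L_1$. Because $B^\star_{i_j}\in\mathcal{B}^\star_1$ has $|(\mu^\star)^{-1}(B^\star_{i_j})|\ge U/k$, Remark~\ref{rk:sampling_for_big} gives $\Pr[p_j\in B^\star_{i_j}]\ge 1/k^2$, and the $m$ draws are independent, so the events $\{p_j\in B^\star_{i_j}\}_{j\in[m]}$ all occur simultaneously with probability at least $k^{-2m}\ge k^{-2k}$.

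Conditional on this good event, the triangle inequality gives $B^\star_{i_j}\subseteq B(p_j,2r_{i_j})=B_{i_j}$, which has radius exactly $2r_{i_j}$, so $L_1$ approximates $\mathcal{B}^\star_1$ in the required sense. To build $L_2$, the algorithm iterates over the (known) connected components $C$ of $G^\star$ that contain a ball of $\mathcal{B}^\star_1$ and a ball of $\mathcal{B}^\star_2$; for each such $C$ there is at least one sampled index $j$ with $B^\star_{i_j}$ in $C$, and the algorithm inserts $B(p_j,2\cdot\rad(C))$ into $L_2$. Since $p_j$ lies inside a ball of $C$, Lemma~\ref{lem:component_and_radius} ensures that every point covered by $C$ is within distance $2\cdot\rad(C)$ of $p_j$, so the inserted ball has the prescribed radius and covers all points of $C$, meaning $L_2$ covers $G^\star$.

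The running time is linear: drawing $m$ uniform samples from $P$ takes $O(n)$, and emitting at most $2k$ balls of known radii contributes $O(k)$. The only source of randomness is the $m$ independent sampling steps, each succeeding with probability at least $1/k^2$ by Remark~\ref{rk:sampling_for_big}, which gives the claimed $k^{-2k}$ success bound. The only mildly delicate point is that no guessing is required to match samples to optimal balls: since the membership list $\mathcal{B}^\star_1$ is known up front, we fix the ordering $B^\star_{i_1},\dots,B^\star_{i_m}$ before sampling and bind the $j$-th sample to $B^\star_{i_j}$ by construction, so the sampling procedure directly produces the correctly labelled list.
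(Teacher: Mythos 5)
Your proposal is correct and follows essentially the same route as the paper's proof: one uniform sample per ball of $\mathcal{B}^\star_1$, a ball of radius $2r_i$ around each successful sample for $L_1$, a ball of radius $2\cdot\rad(C)$ around a sample lying in each mixed component for $L_2$ (via Lemma~\ref{lem:component_and_radius}), and the $k^{-2k}$ bound from Remark~\ref{rk:sampling_for_big} with independent draws. The only cosmetic difference is that the paper anchors the component ball at the sample of the smallest-index ball of $\mathcal{B}^\star_1$ in $C$, while you pick an arbitrary one, which changes nothing.
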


\begin{figure}[!ht]
		\begin{center}
			\includegraphics[width=.7\textwidth]{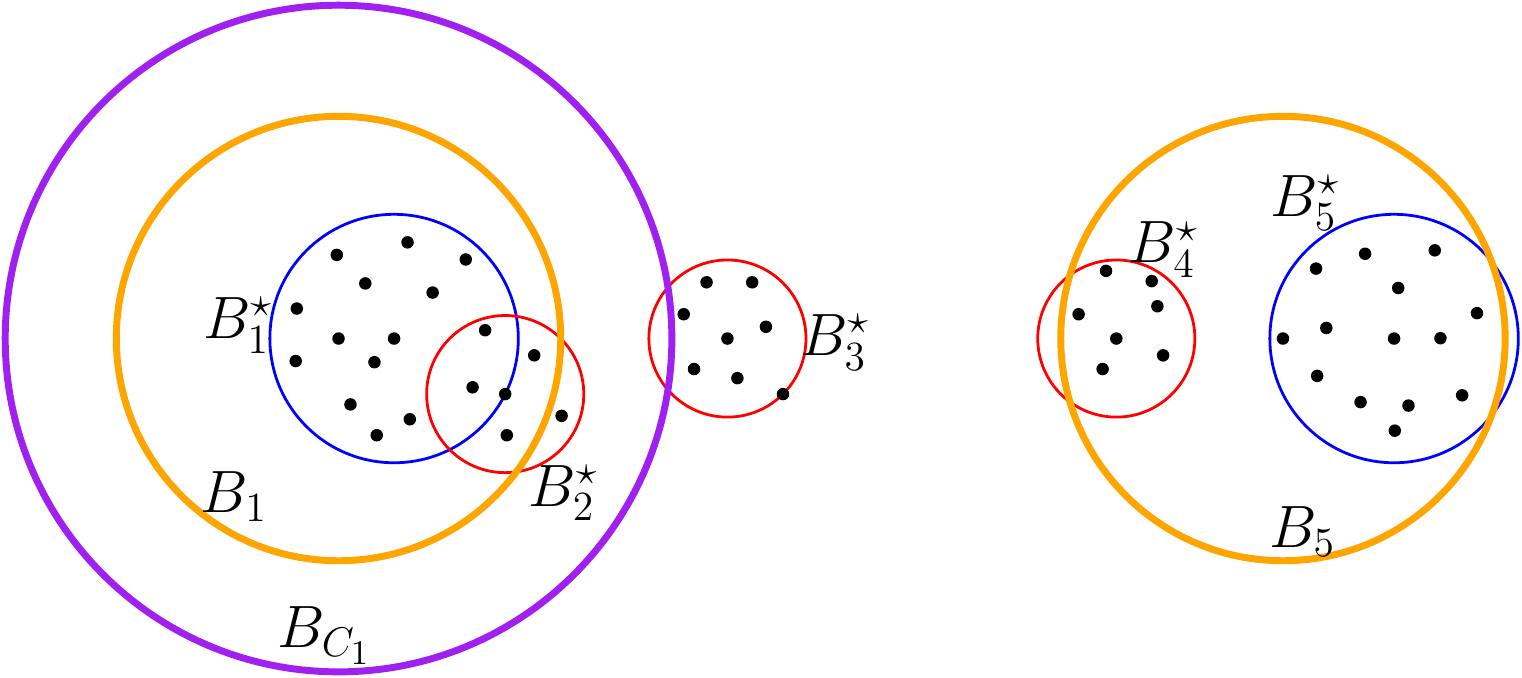}
			\vspace{-2mm}
			\caption{Figure showing construction of $L_1\cup L_2$. $G^\star$ has five components: $C_1=\{B^\star_1,B^\star_2\}$, $C_2=\{B^\star_3\}$, $C_3=\{B^\star_4\}$, $C_4=\{B^\star_5\}$. Also, $\mathcal{B}^\star_1=\{B^\star_1,B^\star_5\}$ and $\mathcal{B}^\star_2=\{B^\star_2,B^\star_3,B^\star_4\}$. For this example, $L_1=\{B_1,B_5\}$ and $L_2=\{B_{C_1}\}$.}
			\label{fig:fig1}
		\end{center}
\end{figure}

\begin{proof}
	This proof follows   from Remark \ref{rk:sampling_for_big}. For every $i$ such that $B^\star_i$ belongs to $\mathcal{B}^\star_1$, we sample uniformly at random a point $x$ in $P$ and take for $B_i$ the ball of radius $2r_i$ around $x$. Moreover, if $i$ is the smallest index such that $B^\star_i$ belongs to some component $C$ containing elements of $\mathcal{B}^\star_1$ and $\mathcal{B}^\star_2$, then we also take the ball $B_C$ of radius $2\cdot \rad(C)$ around $x$. (See Figure \ref{fig:fig1} for an example.) The algorithm succeeds if every sample is chosen from a distinct ball in $\mathcal{B}^\star_1$, so  in total with probability at least $\frac{1}{k^{2k}}$ the algorithm succeeds. 
\end{proof}

From now on, we assume that the algorithm from Lemma \ref{lem:general_first step} succeeded and we have a list $L_1$ approximating $\mathcal{B}^\star_1$ and a list $L_2$ covering $G^\star$. We cover the rest of the points not covered by balls in the lists $L_1$ and $L_2$ by a greedy algorithm that is described below. 
 \begin{description}
\item[Construction of $L_3$.]  At this stage, what remains is to select ``approximate  balls'' corresponding to balls in $\mathcal{B}^\star_3$. If there is a point of $P$ which is not covered in the union of the balls in $L_1$ and $L_2$, then it has to be covered by a ball in $\mathcal{B}^\star_3$ in the optimal solution.  When we say {\em not covered}, then we mean a point that is not in the union of balls in  $L_1$ and $L_2$. Further, note that all the points that are covered by balls in $L_1$ and $L_2$ {\em may not be assigned to these balls} because of the capacity constraint. 

So, for the greedy algorithm, we do as follows. Select a point not covered by already selected balls, 
guess which ball it belongs to (so in particular the radius $r_i$ of it) and take the ball of radius $2r_i$ around that point. We do this greedily until all points of $P$ are covered, we get a list $L_3$ of balls such that all the points of $P$ are covered. (See Figure \ref{fig:fig2} for an example.)
\end{description}

Formally, if $L_1$ and $L_2$ are two lists of balls such that $L_1$ approximates $\mathcal{B}^\star_1$ and $L_2$ covers $G^\star$, a list $L_3$ \textit{complements} $L_1$ and $L_2$ if the balls of $L_1, L_2$ and $L_3$ covers all points of $P$ and every ball $B_j$ of $L_3$ is associated to a unique ball $B^\star_j$ in $\mathcal{B}^\star_3$ such that $B^\star_j \subseteq B_j$, $\rad(B_j) = 2\cdot  r_j$.

\begin{lemma}\label{lem:l3}
	Let $L_1, L_2$ be two list of balls such that $L_1$ approximates $\mathcal{B}^\star_1$ and $L_2$ covers $G^\star$. There exist a polynomial time algorithm that finds, with probability at least $\frac{1}{k^k}$, a list $L_3$ complementing $L_1$ and $L_2$.
\end{lemma}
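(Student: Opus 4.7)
The plan is to run the greedy procedure sketched just before the lemma statement and show it succeeds with the claimed probability. Initialise $L_3 := \emptyset$. As long as some point $p \in P$ is not covered by $L_1 \cup L_2 \cup L_3$, pick such a $p$, guess uniformly at random an index $j \in [k]$, and add to $L_3$ the ball $B_j$ of radius $2 r_j$ around $p$. Stop when the union of $L_1 \cup L_2 \cup L_3$ covers $P$.

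The key structural observation is that any uncovered point $p$ must lie in some $B^\star_j \in \mathcal{B}^\star_3$. Indeed, if $p \in B^\star_j$ with $B^\star_j \in \mathcal{B}^\star_1$, then the approximating ball in $L_1$ (which contains $B^\star_j$) already covers $p$. If $p \in B^\star_j$ with $B^\star_j \in \mathcal{B}^\star_2$ and $B^\star_j$ lies in a connected component $C$ of $G^\star$ that also contains a ball of $\mathcal{B}^\star_1$, then by definition of $L_2$ covering $G^\star$, the ball $B_C \in L_2$ of radius $2\cdot\rad(C)$ contains every point of $C$, hence $p$. The only remaining possibility is $B^\star_j \in \mathcal{B}^\star_3$, as required. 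Conditioned on the guess $j$ being the correct one for some such $B^\star_j$ containing $p$, the triangle inequality yields $B^\star_j \subseteq B(p, 2 r_j) = B_j$, and $\rad(B_j) = 2 r_j$ by construction.

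Termination and uniqueness of the association are immediate: once a ball $B_j$ is added for some $B^\star_j \in \mathcal{B}^\star_3$, every point of $B^\star_j$ becomes covered, so no subsequent uncovered point can lie in the same $B^\star_j$. Hence each iteration consumes a distinct element of $\mathcal{B}^\star_3$, and after at most $|\mathcal{B}^\star_3| \le k$ iterations all points of $P$ are covered by $L_1 \cup L_2 \cup L_3$. This also shows the procedure runs in polynomial time, since each iteration is dominated by scanning $P$ to find an uncovered point.

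For the success probability, in each of the at most $k$ iterations the random guess is correct with probability $1/k$, so by independence all guesses are simultaneously correct with probability at least $1/k^k$. Conditioned on that event, the list $L_3$ produced satisfies all the requirements of a list complementing $L_1$ and $L_2$, which proves the lemma. The only subtlety I expect to need care with is formally arguing that the guess for $j$ can always be realised as a uniform draw from $[k]$ (not just from the unused indices of $\mathcal{B}^\star_3$, which the algorithm does not know); since $|\mathcal{B}^\star_3| \le k$, drawing uniformly from $[k]$ only weakens the per-iteration probability to $1/k$, which is what the analysis above already uses.
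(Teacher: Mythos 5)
Your proposal is correct and follows essentially the same route as the paper: greedily pick an uncovered point, guess its optimal ball's index uniformly from $[k]$, add the ball of radius $2r_j$ around the point, and bound the success probability by $1/k^k$ over at most $k$ iterations. You actually spell out details the paper leaves implicit (that an uncovered point must lie in a ball of $\mathcal{B}^\star_3$, and that the association to balls of $\mathcal{B}^\star_3$ is injective because each added ball covers its optimal ball entirely), so no gap.
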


\begin{proof}
	The algorithm starts with $L_3 := \emptyset$ and as long as there exists a point $x$ not covered by $L_1 \cup L_2 \cup L_3$, picks uniformly at random an index $i$ and add the ball $B_i$ of center $x$ and radius $2r_i$ to $L_3$. 

	The algorithm outputs the desired set if at every step, the index $i$ selected and the uncovered point $x$ are such that $\mu^\star(x) = B^\star_i$. Overall since there is at most $k$ steps, we have that the probability of success is at least $\frac{1}{k^k}$.
\end{proof}

Assume now that we have the sets $L_1, L_2$ and $L_3$ as described in Lemma \ref{lem:l3}. Note that, while the union of the balls in these sets covers $P$, this does not imply that we can find an assignment $\mu$ of points to balls satisfying the capacity constraints. Thus, we might have to select more balls in order to not only cover all the points but also be able to find an assignment satisfying the capacity constraint.  Further, note that if points in $P$ cannot be assigned to balls in $L_1\cup L_2\cup L_3$  without satisfying capacity constraints, then there exists a ball in $\mathcal{B}^\star_3$ corresponding to which an approximate ball has not been selected in 
$L_1\cup L_2\cup L_3$. Formally, let $L_1, L_2$ and $L_3$ be sets of balls such that $L_1$ approximates $\mathcal{B}^\star_1$, $L_2$ covers $G^\star$ and $L_3$ complements $L_1$ and $L_2$, we say that a ball $B^\star_j \in \mathcal{B}^\star_3$ is \textit{untouched} by $L_1, L_2$ and $L_3$ if no ball $B_i$ of $L_3$ is associated to $B^\star_j$. Let $I(L_1, L_2, L_3)$ denote the set of indices of balls untouched by $L_3$.

 \begin{description}
 \item[Increase in Radius:]
 Let $i \in I(L_1, L_2, L_3)$. Remember that $r_i$ is the approximated radius of the ball  $B^\star_i$ and the points of $B^\star_i$ must {\em entirely be covered} in the balls of $L_1 \cup L_2 \cup L_3$. Suppose first that there exists a ball $B$ of $L_2$ or $L_3$ intersecting $B^\star_i$. If this is the case, then we will increase the radius of the said ball $B$ by $2 \cdot r_i$ so that $B^\star_i$ is now entirely contained in $B$. Moreover, since the balls in $L_2$ and $L_3$ will only be assigned points originally assigned to $\mathcal{B}^\star_2$ by $\mu^\star$, it is possible to assign the points of $ (\mu^\star)^{-1}(B^\star_i)$ to this ball as well. 
We will prove it formally in Lemma~\ref{lem:good_approx_general}. 

 \end{description}
Formally, if $L_1, L_2$ and $L_3$ are lists of balls such that $L_1$ approximates $\mathcal{B}^\star_1$, $L_2$ covers $G^\star$ and $L_3$ complements $L_1$ and $L_2$, then we denote $ I_1(L_1, L_2, L_3)$ the set of of indices $i \in I(L_1, L_2, L_3)$ such that $B^{\star}_i$ intersects a ball $B$ of $L_2 \cup L_3$. We also let $ I_2(L_1, L_2, L_3) = I(L_1, L_2, L_3) \setminus I_1(L_1, L_2, L_3)$, and for an index $i \in I_1(L_1, L_2, L_3)$ by $B[L_1, L_2, L_3](i)$ some arbitrary ball of $L_2 \cup L_3$ intersecting $B^\star_i$.

\begin{description}

\item[Construction of $L_4$.] What remains now is to consider the case where $i \in I_2(L_1, L_2, L_3)$, which means that $B^\star_i \in\mathcal{B}^\star_3 $ is covered by balls in $L_1$ only. Suppose first that there is a ball $B_j$ in $L_1$ containing some points of $B^\star_i$ and such that the connected component $C$ containing $B^\star_j$ does not contain any ball in $\mathcal{B}^\star_2$. Recall that $B_j$ is {\em larger than} $B^\star_j$, so it is possible for $B_j$ to intersect $B^\star_i$ but $B^\star_j$ need not intersect $B^\star_i$. 
In order to {\em not count} $r_i$ multiple times, if there are multiple choices for $B_j$, we choose one as follows. If there exists a $B_j$, such that  the component in $G^\star$, containing the  corresponding $B^\star_{j}$, does not contain any ball of $\mathcal{B}^\star_2$, then we assign $B^\star_i$ to this $B_j$,  and say that $B^\star_i$ is \textit{associated} with  $B_j$. If there are several choices for such $B_j$, we choose one arbitrarily. Further, if {\em no such} $B_j$ exists, then we do not {\em assign any ball} to this $B^\star_i$. 

For every $B_j$ in $L_1$ such that $B^\star_{j_1}, \dots, B^\star_{j_w}$ denote the set of balls in $\mathcal{B}^\star_3$ associated to $B_j$, we define $T_j$ to be the ball centered at the center of $B_j$ and of radius $2\cdot (\rad(B^\star_j) + \sum_{i \in [w]} \rad(B^\star_{j_i}))$. Note that all the $B^\star_{j_i}$ are contained in $T_j$, as they intersect with $B_j$, and hence we will be able to assign these balls to $T_j$.
 Let $L_4$ be the list of all the $T_j$ for all the balls in $L_1$ such that some balls of $\mathcal{B}^\star_3$ are associated with them. (See Figure \ref{fig:fig2} for an example.)
\end{description}

\begin{figure}[!t]
		\begin{center}
			\includegraphics[width=.7\textwidth]{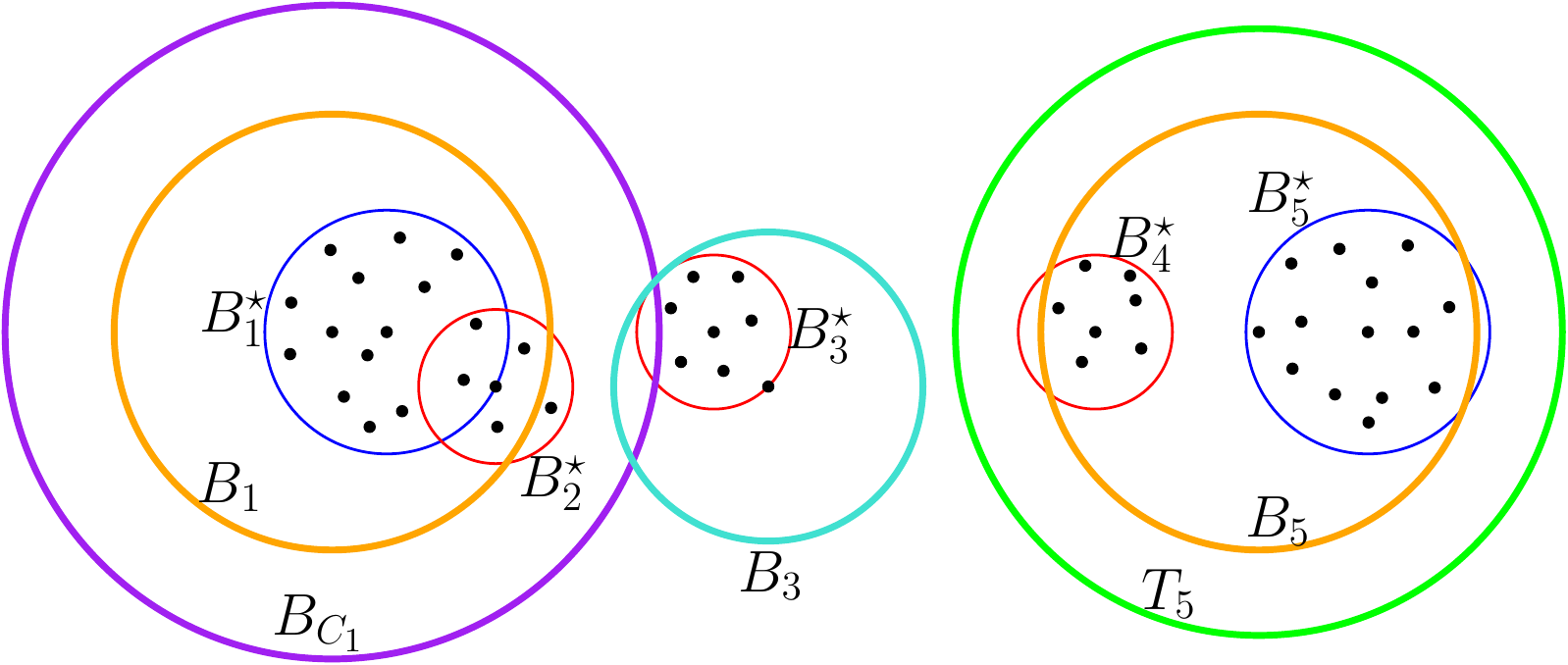}
			\vspace{-2mm}
			\caption{Figure showing construction of $L_3$ and $L_4$ for the example in Figure \ref{fig:fig1}. Here  $L_3=\{B_3\}$ and $L_4=\{T_5\}$.}
			\label{fig:fig2}
		\end{center}
\end{figure}

Observe that for some balls in $B^\star_i \in\mathcal{B}^\star_3 $, we did not associate any balls, and hence did not include a ball corresponding to them in $L_4$. Let us understand what kind of balls these are, why we did not associate any ball in $L_4$, and how we take care of them. Let  $B^\star_i \in \mathcal{B}^\star_3$  
be covered by some balls $B_{i_1}, B_{i_2}, \dots, B_{i_s}$ of $L_1$ for some $i_s \in [k]$. Moreover, each of the $B_{i_j}$ is such that the corresponding $B^\star_{i_j}$ belongs to some component $C_j$ containing some ball of $\mathcal{B}^\star_2$. Therefore, $L_2$ contains a ball $B_{C_j}$ and we already pay $2\cdot \rad(B^\star_{i_j})$ for the radius of $B_{C_j}$. This is {\em the reason}  we cannot take a ball of radius $2\cdot (\rad(B^\star_{i_j}) + \rad(B^\star_i))$ to approximate $B^\star_i$ as we just did with $L_4$, as this will imply that the solution would cost at least $6 \cdot \rad(B^\star_{i_j})$.
This is the reason, we did not add any new ball.   Let us remark that this strategy will indeed give a factor $6+\epsilon$ approximation for our problem.  We will show how we can ``cover'' the points covered by $B^\star_i$ by already selected balls. 
So for every $j \in [s]$, we will assign all the points of $X_j = B^\star_i \cap B_{i_j}$ to $B_{i_j}$. So the points covered by $B^\star_i$ is reassigned to balls in $B_{i_1}, B_{i_2}, \dots, B_{i_s}$. 
This is always possible because $ B^\star_{i_j} \subseteq B_{C_j}$. However,  by doing this we increase the number of points assigned to $B_{C_j}$ but it still remains smaller than $\frac{U}{k} \cdot | \mathcal{B}^\star_2| \leq U$. It means that we can cover as well as assign the points covered by  $B^\star_i$ using balls in $L_1 \cup L_2 \cup L_3$. 

Formally, if $L_1, L_2$ and $L_3$ are lists of balls such that $L_1$ approximates $\mathcal{B}^\star_1$, $L_2$ covers $G^\star$ and $L_3$ complements $L_1$ and $L_2$, then $I_3(L_1, L_2, L_3)$ denote the set of indices $i \in I_2(L_1, L_2, L_3)$ such that there exists an index $j \in [k]$ with $B^{\star}_i \cap B_j \not = \emptyset$ and $B^{\star}_j$ is in a component of $G^\star$ without any element of $\mathcal{B}_2$. Moreover, for every $i \in I_3(L_1, L_2, L_3)$, we let $B'[L_1, L_2, L_3](i)$ denote that index $j$ (in case of multiple choices, we pick an arbitrary one). For any $j \in [k]$ such that $B^{\star}_j \in \mathcal{B}_1$, we let $A(L_1, L_2, L_3)(j)$ denote the set of indices in $I_3(L_1, L_2, L_3)(j)$ such that $B'[L_1, L_2, L_3](i) = j$.

The next two lemmas formalize the previous discussion. First let us show that the set of balls defined until now indeed form a valid solution.

\begin{lemma}\label{lem:good_approx_general}
	Let $L_1, L_2$ and $L_3$ be list of balls such that $L_1$ approximates $\mathcal{B}^\star_1$, $L_2$ covers $G^\star$ and $L_3$ complements $L_1$ and $L_2$.  
	\begin{itemize}
		\item Let $L'_2$ be the set that contains for every $B \in L_2$, if we note $i$ the index in $I_1(L_1, L_2, L_3)$ with $B[L_1, L_2, L_3](i) = B$ that maximises $r_i$, the ball $\ext(B, 2r_i)$ if this index exists. If this index doesn't exists, then $L'_2$ contains $B$.
		\item Let $L'_3$ be the set that contains for every $B \in L_3$, if we note $i$ the index in $I_1(L_1, L_2, L_3)$ with $B[L_1, L_2, L_3](i) = B$ that maximises $r_i$, the ball $\ext(B, 2r_i)$ if this index exists. If this index doesn't exists, then $L'_3$ contains $B$.
		\item Let $L_4$ be the set obtained by taking, for every index $j \in [k]$ such that $B^{\star}_j \in \mathcal{B}_1$ and the connected component $C$ containing $B^\star_j$ does not contain any ball in $\mathcal{B}^\star_2$, the ball $T_j$ centered at the center of $B_j$ the ball approximating $B^{\star}_j$ in $L_1$ and of radius $2\cdot (\rad(B^\star_j) + \sum_{i \in A(L_1, L_2, L_3)(j) } \rad(B^\star_{j_i}))$.
	\end{itemize}

	The set $ \mathcal{L} = L_1 \cup L'_2 \cup L'_3 \cup L_4$ is a valid solution to the \radii problem defined by $(P,\dist)$ of value $4 \cdot \sum_{i\in [k]} r_i$

\end{lemma}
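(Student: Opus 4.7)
Coverage of $P$ is immediate: since $L_3$ complements $L_1$ and $L_2$, the union $L_1\cup L_2\cup L_3$ already covers $P$, and every ball of $L_2$ (resp.\ $L_3$) is contained in its extension in $L'_2$ (resp.\ $L'_3$), so $\mathcal{L}$ covers $P$ as well. The real work is to bound the sum of radii and to exhibit a capacity-respecting assignment $\mu:P\to\mathcal{L}$ obtained as a modification of $\mu^\star$.

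The plan for the cost bound is a charging argument. Partition $[k]$ into $S_1=\{i:B^\star_i\in\mathcal{B}^\star_1\}$, $S_2=\{i:B^\star_i\in\mathcal{B}^\star_2\setminus\mathcal{B}^\star_3\}$, and $S_3=\{i:B^\star_i\in\mathcal{B}^\star_3\}$, and further split $S_1=S_1^{pure}\cup S_1^{mix}$ according as the component of $B^\star_i$ in $G^\star$ contains an element of $\mathcal{B}^\star_2$ or not. I will then show the following contributions, each charged against a multiple of $r_i$: $L_1$ contributes $2r_i$ for every $i\in S_1$; $L_2$ contributes $2\rad(C)=2\sum_{j\in C}r_j$ per mixed component $C$, hence $2r_i$ for every $i\in S_1^{mix}\cup S_2$; $L_3$ contributes $2r_i$ for every $i\in S_3\setminus I(L_1,L_2,L_3)$; the extensions producing $L'_2\cup L'_3$ use only indices $i\in I_1(L_1,L_2,L_3)$, and each such $i$ is the maximiser for at most one ball, so they contribute at most $2r_i$ per $i\in I_1$; and $L_4$ contributes $2r_j+2\sum_{i'\in A(j)}r_{i'}$ per active $j$, charged as $2r_j$ for $j\in S_1^{pure}$ and $2r_{i'}$ for $i'\in I_3$. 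Summing these, each $r_i$ with $i\in S_1$ is charged at most $4r_i$ (either from $L_1$ and $L_2$, or from $L_1$ and $L_4$), while each other $r_i$ that is charged is charged at most $2r_i$; indices in $I_2\setminus I_3$ are not charged at all. This yields $\sum_{B\in\mathcal{L}}\rad(B)\le 4\sum_{i\in[k]}r_i$.

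For the assignment, I construct $\mu$ case by case. Points in $(\mu^\star)^{-1}(B^\star_i)$ with $i\in S_1$ go to $B_i\in L_1$; capacity is fine because $B^\star_i\subseteq B_i$ and $|(\mu^\star)^{-1}(B^\star_i)|\le U$. For each mixed component $C$, all the points of $(\mu^\star)^{-1}(B^\star_i)$ with $B^\star_i\in\mathcal{B}^\star_2\cap C$ are sent to the (possibly extended) ball of $L_2$ covering $C$; since every such $B^\star_i$ has fewer than $U/k$ assigned points and at most $k$ such balls can be pooled together, the total stays below $U$. For $i\in S_3\setminus I$ we use the dedicated ball of $L_3$ (or its extension in $L'_3$). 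For $i\in I_1$, $B^\star_i$ is entirely contained in the extension of $B[L_1,L_2,L_3](i)$ because that extension enlarges the radius by $2r_{i_{\max}}\ge 2r_i$ and the original ball already intersects $B^\star_i$; we route $(\mu^\star)^{-1}(B^\star_i)$ there, again keeping the load bounded by $U$ since each such $B^\star_i$ contributes $<U/k$ points. For $i\in I_3$, the construction of $T_j$ guarantees $B^\star_i\subseteq T_j$, so $(\mu^\star)^{-1}(B^\star_i)$ can be sent to $T_j$, which absorbs at most $U/k\cdot(|A(j)|+1)\le U$ points.

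The main obstacle is the case $i\in I_2\setminus I_3$: these optimal balls are covered only by balls $B_{i_j}\in L_1$ whose $B^\star_{i_j}$ lie in mixed components, and they receive no dedicated ball in $\mathcal{L}$. For such an $i$ I will follow the informal description preceding the lemma: split $(\mu^\star)^{-1}(B^\star_i)$ according to which mixed component $C_j$ covers each point, and reroute each part to the corresponding $B_{C_j}\in L_2$ (or its extension in $L'_2$). The delicate point is to verify that these rerouted points actually lie inside $B_{C_j}$, for which I will rely on Lemma~\ref{lem:component_and_radius} applied to the component containing $B^\star_{i_j}$ together with the fact that the centre of $B_{C_j}$ was sampled inside the same component; a small triangle-inequality computation, combined with the extension by $2r_{i_{\max}}$ when the maximiser comes from such an $i$, closes the argument. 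Capacity feasibility of $B_{C_j}$ then follows from the bound $|\mathcal{B}^\star_2|\cdot U/k\le U$ augmented by the $O(k)$ extra balls of $\mathcal{B}^\star_3$ it may absorb, each of size $<U/k$.
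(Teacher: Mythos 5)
Your cost accounting and most of the assignment cases match the paper's proof, but the one step you yourself flag as the ``main obstacle'' --- the indices $i \in I_2(L_1,L_2,L_3)\setminus I_3(L_1,L_2,L_3)$ --- is handled in a way that does not work, and this is precisely the crux of the lemma. For such an $i$, the ball $B^\star_i$ lies in a component of $G^\star$ containing no ball of $\mathcal{B}^\star_1$, so $B^\star_i$ is disjoint from every optimal ball of the mixed component $C_j$; it only meets the \emph{enlarged} ball $B_{i_j}\in L_1$ (of radius $2r_{i_j}$), not $B^\star_{i_j}$ itself. Hence a point $x\in B^\star_i\cap B_{i_j}$ is not a point ``of $C_j$'', and Lemma~\ref{lem:component_and_radius} gives no control over it: $x$ can be at distance up to roughly $2\rad(C_j)+2r_{i_j}$ from the center of $B_{C_j}$, i.e.\ outside $B_{C_j}$. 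The rescue you propose, ``the extension by $2r_{i_{\max}}$ when the maximiser comes from such an $i$'', is not available: the extensions defining $L'_2$ are triggered only by indices in $I_1(L_1,L_2,L_3)$, and an index $i\in I_2$ by definition has $B^\star_i$ disjoint from every ball of $L_2\cup L_3$, so it never induces an extension of $B_{C_j}$ (and $r_{i_{\max}}$ over $I_1$ has nothing to do with $r_{i_j}$). If instead you enlarged $B_{C_j}$ by the needed $2r_{i_j}$, you would charge $i_j\in H_1$ a total of $6r_{i_j}$, which is exactly the ``factor $6$'' trap the paper points out before the lemma; either way the claimed bound $4\sum_i r_i$ is lost or coverage fails.

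The paper closes this case by a swap rather than a direct reroute: the points of $B^\star_i\cap B_{i_j}$ are assigned to $B_{i_j}$ itself (they do lie there), which may overload $B_{i_j}$; then an equal number of points of $(\mu^\star)^{-1}(B^\star_{i_j})$ --- which lie in $B^\star_{i_j}$, hence among the points of $C_j$, hence inside $B_{C_j}$ --- are moved from $B_{i_j}$ to $B_{C_j}$. Capacity of $B_{C_j}$ is preserved because every point it ends up holding is charged to the fiber of some ball of $\mathcal{B}^\star_2$, so its load is at most $|\mathcal{B}^\star_2|\cdot U/k\le U$, and no extra radius is paid anywhere. Your proposal is missing this exchange argument; as written, the rerouted points need not be covered by the ball you send them to, so the assignment $\mu$ you construct is not valid.
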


\begin{proof}
	Note that, by construction, $L'_2$ (resp. $L'_3$) contains a copy of each ball of $L_2$  (reps. $L_3$) with the additional property that some of these balls get extended.  
	By definition of complementing, the balls in $L_1 \cup L_2 \cup L_3$ cover $P$ and therefore so does $L_1 \cup L'_2 \cup L'_3 \cup L_4$. To show that this is indeed a valid solution to the problem, we will define a valid assignment $\mu: P \rightarrow \mathcal{L}$. Recall, that $\mathcal{B}^\star := \{B^\star_1, \cdots, B^\star_k \}$ denotes  the set of balls of a hypothetical optimal solution and $\mu^\star : P \rightarrow \mathcal{B}^\star$ is the associated assignment.

	\begin{itemize}
	\setlength{\itemsep}{-2pt}
		\item Let $x$ be a point of $P$ with $\mu^\star(x) = B^\star_i$ and $B^\star_i$ belongs to $\mathcal{B}^\star_1$. In that case, we know that there exists a ball $B_i \in L_1$ such that $B^\star_i \subseteq B_i$, and we let $\mu(x) = B_i$. 
		\item Let $x$ be a point of $P$ with $\mu^\star(x) = B^\star_i$ where $B^\star_i$ is a ball of $\mathcal{B}^\star_2$ such that, in $G^\star$, the connected component $C$ containing $B^\star_i$ contains a ball of $\mathcal{B}^\star_1$. In that case, $L'_2$ contains a ball $B_C$ containing all the points in a ball of $C$ and we let $\mu(x) = B_C$.
		\item Let $x$ be a point of $P$ with $\mu^\star(x) = B^\star_i$ where $B^\star_i$ is a ball of $\mathcal{B}^\star_2$ entirely contained in some ball $B \in L'_2 \cup L'_3$. We set $\mu(x) = B$.
		\item Let $x$ be a point of $P$ such that $\mu^\star(x) = B^\star_i$ for some ball $B^\star_i$ of $\mathcal{B}^\star_2$ where there exists a ball $B_j$ of $L_4$ containing it. In that case, we let $\mu(x) = B_j$.
	\end{itemize}
	
\noindent 	
	Until now, no ball has been assigned more than $U$ points. Indeed, balls in $L_1$ get assigned exactly the same set of points as their corresponding ball in $\mathcal{B}^\star_1$ and balls in $L'_2 \cup L'_3 \cup L_4$ only gets assigned points that were assigned by $\mu^\star$ to balls in $\mathcal{B}^\star_2$. 
	
	The points for which $\mu(x)$ has not been defined so far are the points covered by some ball in $B^\star_i \in \mathcal{B}^\star_2$ where $B^\star_i$ itself is covered by some balls $B_{i_1}, B_{i_2}, \dots, B_{i_s}$ of $L_1$.
%
%
%
	Moreover each of the $B_{i_j}$ are such that the corresponding $B^\star_{i_j}$ belongs to some component $C_j$ containing some ball of $\mathcal{B}^\star_2$ which means that $L_2$, and thus $L'_2$, contains a ball $B_{C_j} $ covering all points of $C_j$. Therefore, for any such ball $B^\star_i$ and ball $B_{i_t} \in L_1$ corresponding to some $B^\star_{i_t}$ in $\mathcal{B}^\star_1$ we let $\mu(x) = B_{i_t}$ for every $x \in  B_{i_t} \cap B^\star_i$.  
Doing this, we have defined $\mu$ for all $P$, however {\em some balls} of $L_1$ might have been assigned more than $U$ points, we will show how we can {\em reassign} these points so that no capacity constraint is violated. To do so, we will make use of the ball $B_{C_j} \in L'_2$, the ball corresponding to the component $C_j$. 
Next we explain the reassignment procedure.  

Let $B_i$ be a ball violating the capacity constraint, corresponding to $B^\star_i$ in $\mathcal{B}^\star_1$.  By construction there exists a ball $B_C$ in $L'_2$ containing all the points of the component $C$ containing $B^\star_i$. In particular, $B_C$ contains $B^\star_i$. Let $B^\star_{i_1}, \dots, B^\star_{i_t}$ denote the set of balls in $\mathcal{B}^\star_2$ such that the points in $ B^\star_{i_t} \cap B_i$ have been assigned to $B_i$ in the previous step. Note that, at this moment, 
$$|\mu^{-1}(B_i)| = |(\mu^\star)^{-1}(B^\star_i)| + \sum_{j \in t}  |B^\star_{i_j} \cap B_i|.$$ 
Let $X$ be a set of $(\mu^\star)^{-1}(B_i)$ of size the minimum of $|(\mu^\star)^{-1}(B^\star_i)|$ and $\sum_{j \in t}  |(\mu^\star)^{-1}(B^\star_{i_j}) \cap B_i|$, and set $\mu(X) = B_C$. After doing this for all balls $B_i$ of $L_1$ that were exceeding capacity, it is easy to see that $\mu^{-1}(B_i)$ has size at most $U$. Moreover, while we increased the number of points assigned to some $B_C \in L_2$, it does not exceed 
$|\mathcal{B}^\star_2| \cdot \frac{U}{k} \leq k$, which concludes the proof that $\mu$ is a valid assignment.

Finally we give an upper bound on the value of $ \mathcal{L} $ as a solution. To achieve this, we partition $[k]$ into:

\begin{itemize}
	\item $H_1$ -- the set of of indices $i$ such that the ball $B^\star_i$ is in $\mathcal{B}^\star_1$ and is contained in a component $C$ of $G^\star$ where $C$ contains a ball of $\mathcal{B}^\star_2$;
	\item $H_2$ -- the set of of indices $i$ such that the ball $B^\star_i$ is in $\mathcal{B}^\star_1$ and is contained in a component $C$ of $G^\star$ where $C$ contains no ball of $\mathcal{B}^\star_2$;
	\item $F_1$ -- the set of indices $i$ such that there exists a ball $B_i$ in $L_3$ associated to $B^{\star}_i$. 
	\item $F_2$ -- the set of of indices $i$ such that the ball $B^\star_i$ is in $\mathcal{B}^\star_2$ and is contained in a component $C$ of $G^\star$ where $C$ contains a ball of $\mathcal{B}^\star_1$;
	\item $I_1(L_1, L_2, L_3)$
	\item $I_3(L_1, L_2, L_3)$ 
	\item $F_5$ the rest. 
\end{itemize}

	\begin{claim}
		The following are true:
		\begin{enumerate}
			\item $\rad(L_1) \leq 2 \cdot (\sum_{i \in H_1} r_i + \sum_{i \in H_2} r_i)$
			\item $\rad(L'_2) + rad(L'_3) \leq 2 \cdot ( \sum_{i \in H_1} r_i + \sum_{i \in F_2} r_i +  \sum_{i \in F_1} r_i + \sum_{i \in I_1(L_1, L_2, L_3)} r_i)$
			\item $\rad(L_4) \leq 2 \cdot (\sum_{i \in H_2} r_i + \sum_{i \in I_3(L_1, L_2, L_3)} r_i)$
		\end{enumerate}
	\end{claim}
	
	\begin{proof}
		The fact that $\rad(L_1) \leq 2 \cdot (\sum_{i \in H_1} r_i + \sum_{i \in H_2} r_i)$ follows from the definition of $L_1$. Moreover, by definition of $L_2$, we have that $\rad(L_2) \leq 2 \cdot (\sum_{i \in H_1} r_i + \sum_{i \in F_2} )r_i$. Likewise by definition of $L_3$ we have that $\rad(L_3) \leq 2 \cdot \sum_{i \in F_1} r_i$. Moreover, the difference between $L_2 \cup L_3$ and $L'_2 \cup L'_3$ is that some balls get extended by $2r_i$ for some $i \in I_1(L_1, L_2, L_3)$. Since each element of $I_1(L_1, L_2, L_3)$ contributes to the increase of at most one ball, we obtain indeed that $\rad(L'_2) + rad(L'_3) \leq 2 \cdot ( \sum_{i \in H_1} r_i + \sum_{i \in F_2} r_i +  \sum_{i \in F_1} r_i + \sum_{i \in I_1(L_1, L_2, L_3)} r_i)$.
		Finally, the fact $\rad(L_4) \leq 2 \cdot (\sum_{i \in H_2} r_i + \sum_{i \in I_3(L_1, L_2, L_3)} r_i)$ directly follows from the definition of $L_4$ and the fact that $B'[L_1, L_2, L_3](i)$ is unique for every $i \in I_3(L_1, L_2, L_3)$. 
	\end{proof}
	Using the previous claim we can verify that $\rad(\mathcal{L}) \leq 4 \cdot \sum_{i \in [k]} r_i$ , which ends the proof. 
\end{proof}

Finally, we are ready to prove Theorem~\ref{thm:general} 

\begin{proof}[Proof of Theorem~\ref{thm:general}]

	The algorithm starts by applying the algorithm Lemma \ref{lem:guess_radius_general} to obtain some values $r_1, \dots, r_k$. Then the algorithm will uniformly at random partition $[k]$ into two sets $(I_1,I_2)$. With probability at least $1/2^k$, $I_1$ is exactly the set of indices $i$ such that $B^{\star}_i \in \mathcal{B}^{\star}_1$. Then the algorithm will uniformly at random partition $[k]$ into $k$ subsets $C_1, \dots, C_k$. With probability at least $1/k^k$ every set $C_j$ corresponds to the set of indices of the balls inside the $j$-th connected component of $G^{\star}$. 
	
	Assuming it knows the partition $(\mathcal{B}^{\star}_1, \mathcal{B}^{\star}_2)$, the connected components of $\mathcal{G}^{\star}$, the algorithm will then run the algorithm of Lemma \ref{lem:general_first step} to obtain two sets $L_1$ and $L_2$. Assuming the run of the algorithm of Lemma \ref{lem:guess_radius_general} was succesfull, with probability at least $\frac{1}{k^{2k}}$, $L_1$ approximates $\mathcal{B}^\star_1$ and $L_2$ covers $G^\star$. Then the algorithm will run the algorithm of Lemma \ref{lem:l3} and obtain a list $L_3$ which, with probability at least $\frac{1}{k^k}$, complements $L_1$ and $L_2$. Finally, the algorithm then applies Lemma \ref{lem:good_approx_general} to obtain a valid solution of value $4 \cdot \sum_{i\in [k]} r_i$. To do so, assuming $L_1$ approximates $\mathcal{B}^\star_1$, $L_2$ covers $G^\star$ and $L_3$ complements $L_1$ and $L_2$, the algorithm only needs to know: 
	
	\begin{itemize}
		\item the set $I_1(L_1, L_2, L_3)$;
		\item the set $I_3(L_1, L_2, L_3)$;
		\item the value of $B[L_1, L_2, L_3](i)$ for each $i \in I_1(L_1, L_2, L_3)$; and
		\item the set $A(L_1,L_2,L_3)(j)$ for all index $j$ such that the ball $B^\star_j$ is in $\mathcal{B}^\star_1$ and is contained in a component $C$ of $G^\star$ where $C$ contains no ball of $\mathcal{B}^\star_2$. 
	\end{itemize}

	To do so, the algorithm will pick uniformly at random a value for all these sets and the probability of success is at least $1/k^{\cO(k)}$. 

	Overall, the algorithm outputs a valid solution of value $4 \cdot \sum_{i\in [k]} r_i$ if all its uniform sampling and call to the algorithms of Lemmas \ref{lem:guess_radius_general}, \ref{lem:general_first step} and \ref{lem:l3} are successful. All this can be done in linear time and the probability that all these events are true is at least $\frac{1}{n^2 \cdot (k)^{\cO(k)}} $. This means that, running the algorithm  $(\frac{k}{\epsilon})^{\cO(k)}n^2$  and outputting the valid solution with smallest value will give us a valid solution of value $4 \cdot \sum_{i\in [k]} r_i \leq 4 \cdot (1 + \epsilon) OPT$ with constant probability, which ends the proof.
\end{proof}
\subsection{The Bi-criteria Approximation}

The bi-criteria approximation is a by product of our random sampling scheme described in the previous subsection. We use the same approach of guessing the 2-approximate balls of $\mathcal{B}^\star_1$. However, we slightly modify the definition of $\mathcal{B}^\star_1$. Let $\mathcal{B}^\star_1$ denote the set of balls $B^\star_i$ of $\mathcal{B}^\star$ such that $(\mu^\star)^{-1}(B^\star_i) \geq \epsilon U/k$ and $\mathcal{B}^\star_2 =  \mathcal{B}^\star \setminus \mathcal{B}^\star_1$. Then we can trivially extend Lemma \ref{lem:general_first step} to obtain the following lemma. 

\begin{lemma}
\label{lem:general_first step bi}
	There exists a polynomial time algorithm running in linear time that finds, with probability at least $\frac{\epsilon^k}{k^{2k}}$, a list $L_1$ that contains for every ball $B^\star_i \in \mathcal{B}^\star_1$ a ball $B_i$ containing $B^\star_i$  such that $\rad(B_i) = 2 \cdot \rad(B^\star_i)$.
\end{lemma}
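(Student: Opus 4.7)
The plan is to reuse the proof of Lemma~\ref{lem:general_first step} essentially verbatim, tracking how the relaxed threshold $\epsilon U/k$ in the new definition of $\mathcal{B}^\star_1$ degrades the per-sample success probability. First, I would re-derive the analogue of Remark~\ref{rk:sampling_for_big}: under the standing assumption $|P|\le kU$, any ball in the new $\mathcal{B}^\star_1$ contains at least $\epsilon U/k$ points of $P$, hence at least an $\epsilon/k^2$ fraction of $P$. Consequently, a single uniform sample from $P$ lands in a fixed ball $B^\star_i\in\mathcal{B}^\star_1$ with probability at least $\epsilon/k^2$.

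Second, as in Lemma~\ref{lem:general_first step}, I would assume that the algorithm has already obtained approximate radii $\{r_1,\dots,r_k\}$ via Lemma~\ref{lem:guess_radius_general} and that the bipartition $(\mathcal{B}^\star_1,\mathcal{B}^\star_2)$ is known (this costs only an extra factor $2^k$ in the running time, which is absorbed in the stated probability). For every index $i$ such that $B^\star_i\in\mathcal{B}^\star_1$, the algorithm draws independently a point $x_i$ uniformly at random from $P$ and outputs the ball $B_i$ of radius $2r_i$ centered at $x_i$. Whenever $x_i\in B^\star_i$, the triangle inequality immediately yields $B^\star_i\subseteq B_i$ and $\rad(B_i)=2r_i$, which matches the conclusion of the lemma (up to the identification of $r_i$ with the approximate radius of $B^\star_i$ used throughout the section).

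Third, I would combine these two steps to bound the overall success probability. Since there are at most $k$ balls in $\mathcal{B}^\star_1$ and the samples are independent, the event that every $x_i$ falls into its corresponding $B^\star_i$ occurs with probability at least $(\epsilon/k^2)^k=\epsilon^k/k^{2k}$, matching the statement. The whole procedure involves at most $k$ uniform draws from $P$ and a constant-time construction per ball, so it runs in linear time.

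The only point that requires any care, and hence the main (minor) obstacle, is verifying that the weakening of the sampling threshold from $U/k$ to $\epsilon U/k$ degrades each per-index success probability by exactly a factor of $\epsilon$, so that the product over $k$ indices loses precisely an $\epsilon^k$ factor compared to Lemma~\ref{lem:general_first step}; there is no new structural idea involved beyond this substitution.
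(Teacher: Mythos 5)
Your proposal is correct and is essentially the paper's own argument: the paper simply notes that Lemma~\ref{lem:general_first step} extends trivially, i.e.\ one samples a uniform point per ball of $\mathcal{B}^\star_1$ (now containing at least $\epsilon U/k \ge \epsilon n/k^2$ points since $n\le kU$) and takes the ball of radius $2r_i$ around it, giving success probability at least $(\epsilon/k^2)^k=\epsilon^k/k^{2k}$. The only nitpick is your remark that the $2^k$ guess of the bipartition is ``absorbed in the stated probability''---it is not; as in the paper, knowledge of $(\mathcal{B}^\star_1,\mathcal{B}^\star_2)$ is a standing assumption handled outside the lemma (or by deterministic enumeration at an extra $2^k$ runtime factor), but this does not affect the substance of your proof.
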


We apply the above lemma to compute the list $L_1$. Assume that $L_1$ has the desired property. Let $P'$ be the set of points not contained in the union of the balls of $L_1$. Note that the size of $P'$ can be at most $(\epsilon U/k)\cdot k=\epsilon U$, as the points in $P'$ can only be contained in balls of $\mathcal{B}^\star_2$. We apply the following procedure to select a set of balls to cover the points of $P'$. For the time being, assume that we know the index of the ball in $\mathcal{B}^\star_2$ that is assigned with $p\in P'$ in $\mu^\star$. 

Take any point $p\in P'$. Let $p$ be  assigned to $B_j\in \mathcal{B}^\star_2$ with radius $r_j$. Add the ball $B$ centered at $p$ of radius $2r_j$ to a list $L_2$. Remove the points of $B$ from $P'$. Repeat the process until $|\mathcal{B}^\star_2|$ balls are added to $L_2$ or $P'$ is exhausted. 

Note that as per our assumption, that we know the optimal assignment of the points in $P'$, the above procedure finds at most $|\mathcal{B}^\star_2|$ balls such that the union of the balls in $L_2$ contains all the points if $P'$. Also, it is easy to remove the assumption by enumerating all possible assignments of the points of $P'$ which are chosen as centers. The number of such assignments is at most $k^{\cO(k)}$. 

We can also prove the existence of a valid assignment of the points in $P$ to the balls in $L=L_1\cup L_2$ such that each ball is assigned at most $(1+\epsilon)U$ points. For each index $i$ of the balls in $\mathcal{B}^\star_1$, assign the points in $(\mu^\star)^{-1}(B^\star_i)$ to $B_i\in L_1$. Note that the number of points that are not yet assigned is at most $\sum_{B\in \mathcal{B}^\star_2} |(\mu^\star)^{-1}(B)| \le (\epsilon U/k)\cdot k=\epsilon U$. Assign each remaining point to an arbitrary ball of $L$ that contains it. 

By repeating Lemma \ref{lem:general_first step bi} $\cO((k^2/\epsilon)^k)$ times, we can ensure that $L_1$ has the desired property. The procedure of covering the points of $P'$ runs in $k^{\cO(k)} \cdot n^{\cO(1)}$ time. Thus, in total our algorithm takes $(k^2/\epsilon)^{\cO(k)}\cdot n^{\cO(1)}$ time. 

\genbi*
\section{{\radii}: Euclidean Metric}
\label{sec:Euclidean}
In this section, we consider the uniform-capacitated version. Thus, all the capacities are $U$. Let $B^\star_1, \dots, B^\star_k$ denote the optimal solution and $r^\star_i$ the radius of $B^\star_i$ for every $i \in [k]$. Without loss of generality, we can assume that $r^\star_1 \geq r^\star_2, \dots, \ge r^\star_k$. As in the general case, we can guess an approximation of the radii. 

\begin{lemma}\label{lem:guess_Rd}
	There exists a randomized algorithm running in $O(nd)$ time and with probability at least $\frac{\epsilon^k}{(2k)^k\cdot n^2}$ outputs a set $r_1, \dots r_k$ of reals such that for every $i \in [k]$, we have $r_i^\star \leq r_i$ and $\sum_{i \in [k]} r_i \leq (1 + \epsilon)  \sum_{i \in [k]} r^\star_i$.
\end{lemma}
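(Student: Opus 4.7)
The plan is to follow the same template as Lemma~\ref{lem:guess_radius_general}, but to replace the step that guesses $r^\star_1$ exactly by a step that brackets $r^\star_1$ up to a factor of $2$ using a single random pair of input points. The subtlety is that in the Euclidean setting the optimal center need not be a point of $P$, so $r^\star_1$ cannot be enumerated directly among distances between input points.

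Without loss of generality, assume $r^\star_1 \ge r^\star_2 \ge \dots \ge r^\star_k$; the degenerate case $r^\star_1 = 0$ (where every optimal ball is a singleton) is handled separately by returning $r_i = 0$ for all $i$. Otherwise the set $S_1 := (\mu^\star)^{-1}(B^\star_1) \subseteq B^\star_1$ has at least two points and admits a diameter-realising pair $(p^\star, q^\star)$. The algorithm draws $p, q \in P$ uniformly and independently; with probability at least $1/n^2$ we have $\{p, q\} = \{p^\star, q^\star\}$, and I condition on this event. Setting $D_s := \|p-q\|$, the inclusion $p, q \in B^\star_1$ gives $D_s \le 2r^\star_1$. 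Conversely, $r^\star_1$ may be taken to equal the radius of the minimum enclosing ball of $S_1$ (optimality of the solution), and the MEB radius of any finite point set is at most its diameter; hence $r^\star_1 \le D_s$. Therefore $D_s/2 \le r^\star_1 \le D_s$.

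With this two-sided bracket in hand, I partition $[0, D_s]$ into $\lceil 2k/\epsilon \rceil$ equal subintervals, each of length at most $\epsilon D_s/(2k) \le \epsilon r^\star_1 /k$ (using $r^\star_1 \ge D_s/2$). Since $r^\star_i \in [0, r^\star_1] \subseteq [0, D_s]$ for every $i$, there is a subinterval containing $r^\star_i$. The algorithm guesses a subinterval uniformly at random for each $i$ and sets $r_i$ to its right endpoint. When all $k$ guesses are correct, $r^\star_i \le r_i \le r^\star_i + \epsilon r^\star_1 / k$, and summing over $i$ gives $\sum_i r_i \le \sum_i r^\star_i + \epsilon r^\star_1 \le (1+\epsilon) \sum_i r^\star_i$, as required.

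The probability that the pair sampling and all $k$ interval guesses succeed is at least $\frac{1}{n^2}\cdot \left(\frac{\epsilon}{2k}\right)^k = \frac{\epsilon^k}{(2k)^k \cdot n^2}$, and the whole procedure runs in $O(nd)$ time (dominated by reading the input and computing one Euclidean distance). The main obstacle relative to Lemma~\ref{lem:guess_radius_general} is obtaining the upper bound $r^\star_1 \le D_s$: in the general metric case one could take $r^\star_1$ to be a pairwise distance of $P$, but here we must appeal to the geometric fact that the MEB radius is bounded by the point-set diameter, which costs an additional factor of $2$ in each per-radius guess (giving $(2k/\epsilon)^k$ guesses overall, in place of the $(k/\epsilon)^k$ of the general metric case).
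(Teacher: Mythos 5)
Your proposal is correct and follows essentially the same route as the paper: sample a random pair of points, condition on it being the diameter pair of the points assigned to the largest optimal ball, use it to bracket $r^\star_1$ within a factor of $2$, and then guess each radius on a grid of $O(k/\epsilon)$ values of width $O(\epsilon r^\star_1/k)$. If anything, your accounting is slightly more careful than the paper's (you justify the upper bound via the MEB-radius-versus-diameter fact and choose grid width $\epsilon D_s/(2k)$, which genuinely yields the stated $(1+\epsilon)$ factor), so there is nothing to fix beyond negligible rounding in the success probability.
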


\begin{proof}
	The algorithm starts by picking uniformly at random a pair of points $x,y$ of $P$. With probability $1/n^2$, they correspond to the pair of points of $B^\star_1$ of maximum distance. In particular, it means that $r^\star_1 \leq 2 d(x,y)$. This means that guessing each $r^\star_i$ up to a precision of $\frac{\epsilon d(x,y)}{k}$ would yield the desired result. However, since $r^\star_i \leq r^\star_1 \leq 2 d(x,y)$, there are only $r^\star_i \cdot \frac{k}{\epsilon d(x,y)} \leq \frac{2k}{\epsilon} $ choices for each $r^\star_i$, which gives $(\frac{2k}{\epsilon})^k$ possible total choices. 
\end{proof}


Badoui~et al.~\cite{Badoui2002} used the following lemma to find a core-set for the $k$-center problem in $R^d$.

\begin{lemma}\label{lem:corest_next}
	Let $S$ be a set of points in $R^d$ and $B$ be the minimum enclosing ball of $S$. Noting, $c$ and $r$ the center and radius of $B$, we have that for any $ \epsilon >0$, if $p$ is a point at distance more than $(1 + \epsilon) r$ from $c$, then the minimum enclosing ball of $S \cup \{p\}$ has radius at least $(1 + \epsilon^2/16)r$.  
\end{lemma}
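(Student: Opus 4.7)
\medskip
\noindent\textbf{Proof plan for Lemma~\ref{lem:corest_next}.} The plan is to combine the classical optimality characterization of the minimum enclosing ball with an elementary Pythagorean-type computation. Let $B'=B(c',r')$ denote $\MEB(S\cup\{p\})$. If $c'=c$, then $r'\ge |p-c|>(1+\epsilon)r\ge (1+\epsilon^{2}/16)\,r$ and we are done. So the main work is the case $c'\neq c$.

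\smallskip
\noindent\textbf{Step 1: extract a ``far side'' point of $S$.} I would first recall the standard characterization: since $B(c,r)=\MEB(S)$, the center $c$ lies in the convex hull of the boundary set $\{q\in S:|q-c|=r\}$; equivalently, for every unit vector $v$ there exists a boundary point $q$ with $(q-c)\cdot v\le 0$ (otherwise a small perturbation of $c$ in the direction $v$ would strictly decrease every boundary distance, contradicting minimality). Applied to the unit vector $v=(c'-c)/\|c'-c\|$, this yields a point $q\in S$ with $|q-c|=r$ and $(q-c)\cdot(c'-c)\le 0$.

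\smallskip
\noindent\textbf{Step 2: a Pythagorean inequality.} Expanding, and using $(q-c)\cdot(c'-c)\le 0$,
\[
|q-c'|^{2}=|q-c|^{2}+|c-c'|^{2}-2(q-c)\cdot(c'-c)\ \ge\ r^{2}+|c-c'|^{2}.
\]
Since $q\in S\subseteq S\cup\{p\}$ and $B'$ encloses $S\cup\{p\}$, we get $r'^{\,2}\ge r^{2}+|c-c'|^{2}$.

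\smallskip
\noindent\textbf{Step 3: use the far point $p$ via the triangle inequality.} Since $p\in B'$,
\[
(1+\epsilon)r < |p-c|\ \le\ |p-c'|+|c'-c|\ \le\ r'+|c'-c|,
\]
so $|c'-c|>(1+\epsilon)r-r'$. If $r'\ge(1+\epsilon)r$ we are already done, so we may assume $(1+\epsilon)r-r'>0$ and substitute into the inequality of Step~2:
\[
r'^{\,2}\ \ge\ r^{2}+\bigl((1+\epsilon)r-r'\bigr)^{2}.
\]
Writing $r'=(1+\delta)r$ with $\delta>0$ and simplifying yields $2\delta(1+\epsilon)\ge \epsilon^{2}$, i.e.\ $\delta\ge \epsilon^{2}/\bigl(2(1+\epsilon)\bigr)$. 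For the regime $\epsilon\le 1$ that matters in later applications this already gives $\delta\ge \epsilon^{2}/4\ge \epsilon^{2}/16$; a short case analysis handles larger $\epsilon$ (the bound is only used in the small-$\epsilon$ regime, and for $\epsilon\ge 1$ one trivially has $r'\ge r+\text{const}\cdot r$ from Step~2). Either way, $r'\ge (1+\epsilon^{2}/16)\,r$, completing the proof.

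\smallskip
\noindent\textbf{Main obstacle.} The only non-routine ingredient is Step~1, the characterization of MEB centers as convex combinations of contact points; the rest is an algebraic manipulation. I would therefore spend most of the care on stating that characterization cleanly (possibly as a small sublemma, since the set of boundary points may be infinite only when $|S|=1$, a degenerate case which is handled separately) and on verifying that the direction $(c'-c)/\|c'-c\|$ is indeed what lets us extract a useful ``far side'' witness.
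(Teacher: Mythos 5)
The paper never proves this lemma---it is imported verbatim from B\u{a}doiu et al.\ \cite{Badoui2002}---so the right comparison point is the standard B\u{a}doiu--Clarkson argument, and your Steps 1--3 are exactly that argument, carried out correctly. The optimality characterization (the center of $\MEB(S)$ lies in the convex hull of its contact points, hence for the direction $v=(c'-c)/\|c'-c\|$ there is a contact point $q$ with $(q-c)\cdot v\le 0$) is the right key fact, the Pythagorean bound $r'^2\ge r^2+|c-c'|^2$ and the triangle-inequality bound $|c-c'|>(1+\epsilon)r-r'$ are both sound, and the algebra indeed yields the sharper conclusion $r'\ge\bigl(1+\tfrac{\epsilon^2}{2(1+\epsilon)}\bigr)r$, which dominates $(1+\epsilon^2/16)r$ whenever $\epsilon\le 7$. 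Since the paper only invokes the lemma with small $\epsilon$ (e.g.\ in Lemma~\ref{lem:sampling_99}, where balls are extended by $\epsilon r_i$ and the conclusion is a growth of $\epsilon^2 r_i/32$ per step), your proof fully covers the regime in which the lemma is actually used.

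The one genuine flaw is your closing claim that ``a short case analysis handles larger $\epsilon$.'' It cannot: for large $\epsilon$ the statement as written (for all $\epsilon>0$) is simply false. Take $S=\{(-r,0),(r,0)\}$, so $c$ is the origin and $\MEB(S)$ has radius $r$, and put $p=((1+\epsilon)r+\delta,0)$ for tiny $\delta>0$; then $\MEB(S\cup\{p\})$ has radius about $(1+\epsilon/2)r$, which is below $(1+\epsilon^2/16)r$ once $\epsilon>8$. So your fallback ``$r'\ge r+\mathrm{const}\cdot r$ from Step~2'' can never reach the quadratic-in-$\epsilon$ target, and no case analysis will. The honest fix is to state (or prove) the lemma for $\epsilon\le 1$ (or record your stronger bound $1+\epsilon^2/(2(1+\epsilon))$, which is correct for every $\epsilon>0$ and implies the paper's form in the range where it is applied); with that caveat your proof is complete and, if anything, more self-contained than the paper, which only cites the reference.
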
 

In particular, it allows them to find a $(1 + \epsilon)$-approximation algorithm for $k$-center in $2^{\cO(k \log(k)/\epsilon^2)}$ time. Another interesting consequence of Lemma \ref{lem:corest_next} is the following lemma:

\begin{lemma}\label{lem:coreset}
	Let $S$ be a set of points in $R^d$ and $B$ be the minimum enclosing ball of $S$. There exists a set $S' \subseteq S$ of size $\cO(1/\epsilon^2)$ such that, noting $B'$ the minimum enclosing ball of $S'$, the $(1+ \epsilon)$ expansion of $B'$ contains $B$.
\end{lemma}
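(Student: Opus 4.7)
The plan is a standard greedy coreset construction using Lemma~\ref{lem:corest_next} as the main engine. First I would pick a pair of points $p_1, p_2 \in S$ whose distance $D$ equals (or is a constant-factor approximation to) the diameter of $S$, and initialise $S_0 := \{p_1, p_2\}$, so that $\MEB(S_0)$ has radius $D/2$. Writing $r$ for the radius of $B$, we have $r \le D \le 2r$, so the starting radius is already within a constant factor of the target $r$.

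Next I would iterate: having built $S_i$ with $B_i := \MEB(S_i)$ of centre $c_i$ and radius $r_i$, if every point of $S$ lies within distance $(1+\epsilon)r_i$ of $c_i$, stop and return $S' := S_i$; otherwise pick any $p \in S$ at distance more than $(1+\epsilon)r_i$ from $c_i$, set $S_{i+1} := S_i \cup \{p\}$, and repeat. By Lemma~\ref{lem:corest_next}, each such step multiplies the MEB radius by at least $1 + \epsilon^2/16$.

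The termination bound is a one-line radius accounting. After $t$ steps, $r_t \ge (1+\epsilon^2/16)^t \cdot D/2$; on the other hand, since $S_t \subseteq S$, monotonicity of the minimum enclosing ball gives $r_t \le r \le D$. Combining yields $(1+\epsilon^2/16)^t \le 2$, hence $t \le \log 2 / \log(1+\epsilon^2/16) = \cO(1/\epsilon^2)$ (using $\log(1+x) \ge x/2$ for small $x$), so $|S'| = 2 + t = \cO(1/\epsilon^2)$. At termination $\ext(B', \epsilon \rad(B'))$ contains $S$, and hence — up to adjusting $\epsilon$ by a constant factor if one wants to pass strictly from ``contains $S$'' to ``contains $B$'' — the conclusion of the lemma follows.

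The main, and really the only nontrivial, ingredient is the invocation of Lemma~\ref{lem:corest_next}; everything else is book-keeping. The one subtle point worth flagging is the choice of the initial pair: without a near-diametral start one only gets $\cO(\log(r/r_0)/\epsilon^2)$ iterations, so the constant-factor control $r_0 \ge D/2 \ge r/2$ is essential to obtain an $\cO(1/\epsilon^2)$ bound on $|S'|$ rather than something logarithmically worse.
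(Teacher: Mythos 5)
Your greedy construction and the radius-doubling count are exactly the intended route (the paper states this lemma without proof, as a direct consequence of Lemma~\ref{lem:corest_next} via the Badoiu et al.\ coreset argument), and they correctly establish that after $\cO(1/\epsilon^2)$ insertions every point of $S$ lies in the $(1+\epsilon)$-expansion of $B'=\MEB(S')$. The gap is your final sentence: passing from ``the expansion contains $S$'' to ``the expansion contains $B=\MEB(S)$'' is \emph{not} a constant-factor adjustment of $\epsilon$. Containing all of $S$ only forces $\rad(B)\le(1+\epsilon)\rad(B')$; it does not control the displacement between the two centers, which in the worst case is of order $\sqrt{\epsilon}\cdot\rad(B')$ (from $\|c-c'\|^2\le (1+\epsilon)^2 r'^2-r^2$), so one only gets $B$ inside a $(1+\Theta(\sqrt{\epsilon}))$-expansion of $B'$. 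Concretely, take $h=\sqrt{\epsilon}$ and $S=\{(0,1),(0,-1),(h,\sqrt{1+h^2}),(h,-\sqrt{1+h^2})\}\subseteq\R^2$ with $S'=\{(0,1),(0,-1)\}$: this $S'$ is a legitimate output of your procedure (the pair $(0,\pm1)$ is a $2$-approximate diametral pair, and every point of $S$ is within $\sqrt{1+2\epsilon}\le 1+\epsilon$ of the origin, so the loop halts immediately), yet $\MEB(S)=B((h,0),\sqrt{1+h^2})$ contains the point $(h+\sqrt{1+h^2},0)$, which is at distance $\sqrt{\epsilon}+\sqrt{1+\epsilon}>1+\epsilon$ from the origin. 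So the $(1+\epsilon)$-expansion of $B'$ covers $S$ but not $B$.

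Patching this inside your framework by running the loop with parameter $\delta=\Theta(\epsilon^2)$ would repair the containment of $B$ but blows the size up to $\cO(1/\delta^2)=\cO(1/\epsilon^4)$, which is weaker than the claimed $\cO(1/\epsilon^2)$. To prove the lemma as literally stated you need an additional argument bounding the distance from the center of $\MEB(S')$ to the center of $\MEB(S)$ by $\cO(\epsilon\,\rad(B))$ (together with $\rad(B')\ge(1-\cO(\epsilon))\rad(B)$), which the greedy termination condition alone does not give. Alternatively, note that what the paper actually uses later (Lemmas~\ref{lem:sampling_99} and \ref{lem:cover}) is the weaker guarantee your argument does prove: the sets $\ext(\MEB(S_i),\epsilon r_i)$ are required to \emph{cover points}, with the expansion measured additively against the optimal radius $r_i\ge\rad(\MEB(S_i))$, not to contain the minimum enclosing ball itself.
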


 Lemma \ref{lem:corest_next} implies that for every $i \in [k]$ there is a set $S'_i$ of size $\cO(1/\epsilon^2)$ such that the $(1 + \epsilon)$ expansion of the minimum enclosing ball of $S'_i$ contains $B_i$. Using this observation, one can find a trivial algorithm for \radii. This algorithm tries all possible values of the $k$ sets $S'_i$, which is at most $n^{\cO(k/\epsilon^2)}$.

A very natural question is to find an approximation algorithm for the problem in $R^d$ with better running time. In this section, we first obtain a $(2 +\epsilon)$-approximation running in time $f(k,\epsilon)\cdot n^{\cO(1)}$ using a mix of ideas from Theorem \ref{thm:general} and Lemma \ref{lem:coreset}. Then we show how to use a simple greedy argument in order to obtain a $(1 + \epsilon)$-approximation in $f(k,\epsilon, d)\cdot  n^{\cO(1)}$ time. 

\subsection{$(2 + \epsilon)$-approximation}

The goal of this subsection is to prove the following Theorem. 

\twoapprox*

Let $B^\star_1, \dots B^\star_k$ denote the balls of an optimal solution and $\mu^\star$ the assignment associated to this solution. Note that we can assume that each $B^\star_i$ is the minimum enclosing ball of $(\mu^\star)^{-1}(B^\star_i)$ and that we know the radius $r_i$ of every $B^\star_i$.
As in Theorem \ref{thm:general}, we will distinguish between balls which get assigned a lot of points by $\mu^\star$ and the others. Let $\mathcal{B}^\star_1$ be the set of balls $B^\star_i$ such that $|(\mu^\star)^{-1}(B^\star_i)| \geq U/k$ and $\mathcal{B}^\star_2$ the rest. The next lemma shows the important properties of balls in $\mathcal{B}^\star_1$. 

\begin{lemma}\label{lem:sampling_99}
	For every integer $i \in [k]$ such that $B^\star_i \in \mathcal{B}^\star_1$, there exists a randomized algorithm running in time $O(nd/\epsilon^3)$ that outputs with probability $(\frac{1}{2k^2})^{32/\epsilon^2}$ a subset $S_i$ of $B^\star_i$ such that if $B = \MEB(S_i)$, then $\ext(B,\epsilon r_i)$ contains all but $U/2k$ points of $B^\star_i$.
\end{lemma}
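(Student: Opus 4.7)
The plan is to show that the required subset $S_i$ can be obtained by simple uniform random sampling from $P$, together with an analysis that combines the ``heavy ball'' condition on $B^\star_i$ with the coreset growth lemma (Lemma \ref{lem:corest_next}). The algorithm will sequentially sample $t = 32/\epsilon^2$ points $p_1, \dots, p_t$ uniformly from $P$, producing an increasing sequence $T_1 \subseteq \dots \subseteq T_t$; the output $S_i$ will be one of the prefixes $T_j$. The runtime $O(nd/\epsilon^3)$ is consistent with taking $t$ samples and, at each step, performing $O(nd)$ work to update/check the minimum enclosing ball (using standard incremental MEB computations in $d$ dimensions, with an overall factor of $1/\epsilon^2$ absorbed by the number of iterations).

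The central observation feeding the analysis is that, since $B^\star_i \in \mathcal{B}^\star_1$ we have $|B^\star_i| \geq U/k$, and since we may assume $|P| \leq kU$ (a YES-instance), a uniform sample from $P$ lies in $B^\star_i$ with probability at least $1/k^2$. I call the $j$-th sample \emph{useful} if, at the start of that iteration, either (a) $\ext(\MEB(T_{j-1}), \epsilon r_i)$ already covers all but $U/(2k)$ points of $B^\star_i$, or (b) $p_j$ lies in the ``uncovered'' subset of $B^\star_i$, namely $B^\star_i \setminus \ext(\MEB(T_{j-1}), \epsilon r_i)$. In case (b) this uncovered subset has size at least $U/(2k)$, so its density in $P$ is at least $U/(2kn) \geq 1/(2k^2)$. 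Thus each iteration is useful with probability at least $1/(2k^2)$, and independence across samples gives all $t$ samples useful with probability at least $(1/(2k^2))^{32/\epsilon^2}$, matching the claimed bound.

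Conditioned on all samples being useful, I will argue that case (a) must occur at some step $j^\star \leq t$: whenever case (b) fires, the new sample $p_j$ satisfies $\dist(p_j, c_{j-1}) > r_{j-1} + \epsilon r_i \geq (1+\epsilon) r_{j-1}$ (using $r_{j-1} \leq r_i$ because $T_{j-1} \subseteq B^\star_i$), so Lemma \ref{lem:corest_next} forces $\rad(\MEB(T_j)) \geq (1+\epsilon^2/16)\rad(\MEB(T_{j-1}))$. Combined with an elementary base case for the first two samples (where the radius grows from $0$ to at least $\epsilon r_i/2$), a straightforward geometric bound shows that after at most $32/\epsilon^2$ consecutive case-(b) events the radius would exceed $r_i$, contradicting $T_{j} \subseteq B^\star_i$. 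Therefore case (a) triggers at some $j^\star$, and the prefix $T_{j^\star-1}$ simultaneously satisfies $T_{j^\star-1} \subseteq B^\star_i$ and the covering property.

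The main technical obstacle I expect to wrestle with is the \emph{non-monotonicity} of the covering condition: if $T \subseteq T'$, then $\MEB(T')$ may shift its center and fail to contain $\MEB(T)$ as a region, so the property at step $j^\star$ need not persist at step $t$. My plan to handle this is to let the algorithm output a prefix rather than the full sample set: it will simply try each of the $t$ prefixes $T_1, \dots, T_t$ as a candidate $S_i$, adding only a $\mathrm{poly}(1/\epsilon)$ multiplicative factor to the running time while preserving the success probability. The caller will then (in the larger algorithm) use each candidate in turn, which costs another small factor that is absorbed into the overall $2^{\cO((k/\epsilon^2)\log(k/\epsilon))}$ bound in Theorem~\ref{thm:twoapprox}.
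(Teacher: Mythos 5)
Your proposal follows essentially the same route as the paper's proof: uniform sampling, the observation that the still-uncovered part of $B^\star_i$ has at least $U/2k$ points while $|P|\le kU$ (giving a per-step success probability of at least $\frac{1}{2k^2}$), and Lemma~\ref{lem:corest_next} to bound the number of rounds. Your prefix-output device for the non-monotonicity of the covering condition is a harmless repackaging of the paper's choice to fold ``deciding when to stop'' into the success probability.

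There is, however, a quantitative flaw in the step-counting, which is exactly what the exponent $32/\epsilon^2$ in the probability bound rests on. You invoke Lemma~\ref{lem:corest_next} only through the weakened multiplicative form $\rad(\MEB(T_j)) \ge (1+\epsilon^2/16)\,\rad(\MEB(T_{j-1}))$. Starting from radius $\epsilon r_i/2$ after the first two case-(b) samples, $32/\epsilon^2$ such multiplicative steps reach only about $e^{2}\cdot\epsilon r_i/2$, which is far below $r_i$ for small $\epsilon$, so no contradiction with $T_j\subseteq B^\star_i$ occurs within $32/\epsilon^2$ steps; your accounting only caps the number of case-(b) steps at $\Theta(\epsilon^{-2}\log(1/\epsilon))$, which does not yield the stated probability $(\frac{1}{2k^2})^{32/\epsilon^2}$. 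The fix is the stronger, additive consequence used (implicitly) in the paper: a case-(b) point lies at distance more than $r_{j-1}+\epsilon r_i=(1+\epsilon')r_{j-1}$ from the current center with $\epsilon'=\epsilon r_i/r_{j-1}\ge\epsilon$, so Lemma~\ref{lem:corest_next} gives an additive radius increase of at least $(\epsilon'^2/16)\,r_{j-1}=\epsilon^2 r_i^2/(16\,r_{j-1})\ge\epsilon^2 r_i/16$ per step; since $\MEB(T_j)\subseteq B^\star_i$ keeps the radius at most $r_i$, at most $16/\epsilon^2\le 32/\epsilon^2$ case-(b) steps can occur (the paper budgets $\epsilon^2 r_i/32$ per step, leaving slack for the approximate MEB). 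Relatedly, your claim of $O(nd)$ per-step \emph{exact} incremental MEB maintenance is optimistic in high dimension; as in the paper, one should compute a $(1+O(\epsilon))$-approximate MEB in $O(nd/\epsilon)$ time per step \cite{DBLP:journals/jea/KumarMY03} and absorb the approximation error into the constants, which is also compatible with the $O(nd/\epsilon^3)$ running time.
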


\begin{proof}
	Remember that we can assume that $|P| \leq k\cdot U$. The algorithm starts by sampling uniformly at random a point $x$ in $P$. Since $B^\star_i \in \mathcal{B}^\star_1$, with probability at least $1/k^2$, $x \in B^\star_i$. Suppose from now on that this is the case. If the ball $B$ of center $x$ and radius $\epsilon r_i$ contains all but $U/2k$ points of $B^\star_i$, then $S_i :=\{x\}$ satisfies the property of the lemma. If not, it means that $U/2k$ points of $B^\star_i$ are outside this ball. Let $y$ be an element selected uniformly at random in $P \setminus B$. With probability at least $1/2k^2$, $y \in B^\star_i$. Suppose that it is the case and let $S_i:= S_i \cup \{y\}$. Note that, $\MEB(S_i)$ has radius at least $\frac{\epsilon r_i}{2}$ since $x$ and $y$ are at a distance at least $ \epsilon r_i$. We can now repeat the process of sampling outside of $\ext(\MEB(S_i),\epsilon r_i)$ to obtain another point of $B^\star_i$ with probability at least $\frac{1}{2k^2}$ until $\ext(\MEB(S_i),\epsilon r_i)$ contains all but $U/2k$ points of $B^\star_i$. 
	
	Moreover, since after sampling $y$, $\MEB(S_i)$ has radius at least $\frac{\epsilon r_i}{2}$, Lemma \ref{lem:corest_next} implies that $\MEB(S_i)$ increases its radius by at least $\epsilon^2r_i/32$ at each step. Therefore, this procedure must end in at most $32/\epsilon^2$ steps. The probability that this algorithm succeeds is equal to the probability that the sampling succeeds at each step and that the algorithm decides correctly if $S$ is the desired output. Overall, this gives a probability of success of at least  $(\frac{1}{2k^2})^{32/\epsilon^2}$. Moreover, the algorithm needs to compute at each step an approximation of the minimum enclosing ball $S_i$, which can be done in $O(nd/\epsilon) $ time  using the result in \cite{DBLP:journals/jea/KumarMY03} and removing the points inside this balls from the sampling pool. Overall this gives a running time of $O(nd/\epsilon^3)$.
	 
\end{proof}

Applying the previous result to every ball in $\mathcal{B}^\star_1$ allows us to obtain the following result.

\begin{lemma}\label{lem:cover}
	There exists an algorithm, running in time $O(\frac{knd}{\epsilon^3})$ that outputs, with probability at least $(\frac{1}{4k^3})^{32k/\epsilon^2}$:
	\begin{itemize}
		\item A list $I$ of indices in $[k]$ s.t, if $ B^\star_i \in \mathcal{B}^\star_1$, then $i \in I$
		\item For every $i \in I$, a list $S_i \subseteq B^\star_i$ such that if $B^\star_i \in \mathcal{B}^\star_1$ , then $\ext(\MEB(S_i), \epsilon r_i)$ contains all the points of $B^\star_i$ but $U/2k^2$.
		\item The union of $\ext(\MEB(S_i), \epsilon r_i)$ for $i \in I$ contains all the points in $P$
	\end{itemize}
\end{lemma}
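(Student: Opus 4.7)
The plan is to prove Lemma \ref{lem:cover} by invoking Lemma \ref{lem:sampling_99} once for each $i \in [k]$, followed by a short greedy pass to guarantee property (3). For every $i$, the algorithm runs a tightened version of the sampling procedure whose stopping criterion demands missing only $U/(2k^2)$ points of $B^\star_i$ instead of $U/(2k)$; since the relevant uncovered set inside $P$ is smaller by a factor of $k$, the per-step success probability drops from $1/(2k^2)$ to $1/(2k^3)$, while the $O(1/\epsilon^2)$ bound on the number of sampling steps (Lemma \ref{lem:corest_next}) is preserved. Conditioned on success for every $i$ with $B^\star_i \in \mathcal{B}^\star_1$, the joint probability is then at least $(1/(4k^3))^{32k/\epsilon^2}$, and the total running time is $k$ invocations of an $O(nd/\epsilon^3)$ procedure, as claimed. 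Properties (1) and (2) then hold by construction: $I$ is the set of indices for which the sampling procedure was run, and each corresponding $S_i$ satisfies the tightened coverage guarantee.

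Property (3) is handled by a greedy repair phase: while some $x \in P$ is uncovered, guess the index $j$ with $\mu^\star(x) = B^\star_j$ (cost $1/k$) and append $x$ to $S_j$, which forces $x \in \MEB(S_j \cup \{x\}) \subseteq \ext(\MEB(S_j), \epsilon r_j)$ and so covers $x$. By Lemma \ref{lem:corest_next} every insertion in which the new point lies outside the current $\epsilon r_j$-extension grows $\MEB(S_j)$ by a factor of at least $1 + \epsilon^2/16$, and since $S_j \subseteq B^\star_j$ keeps the MEB radius bounded by $r_j$, only $O(1/\epsilon^2)$ insertions are needed per ball, hence $O(k/\epsilon^2)$ in total. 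These greedy guesses contribute the same order of random choices as the sampling phase and are absorbed into the stated probability bound.

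The main obstacle will be checking that these greedy insertions do not destroy the bulk-coverage guarantee of property (2) for $\mathcal{B}^\star_1$ indices, since inserting a point may shift the center of $\MEB(S_j)$ and conceivably push some previously covered point out of the extension. The plan is to first prioritize insertions into indices $j$ with $B^\star_j \in \mathcal{B}^\star_2$ whenever the uncovered point lies there, so that most insertions never touch a $\mathcal{B}^\star_1$ index. For the residual case of inserting into a $\mathcal{B}^\star_1$ index, any newly displaced points are charged to the $U/(2k^2)$ slack already built into property (2): since $S_j \subseteq B^\star_j$ throughout and $\MEB(S_j)$ has radius at most $r_j$, one can bound the number of points of $B^\star_j$ lying in the symmetric difference of two consecutive extended MEBs, and combined with the $O(1/\epsilon^2)$ cap on insertions per ball this keeps the total displacement within the tightened slack. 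Putting everything together yields all three claimed properties within the stated probability and running time.
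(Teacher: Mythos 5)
Your overall route is the same as the paper's: run the sampling procedure of Lemma~\ref{lem:sampling_99} once for every ball of $\mathcal{B}^\star_1$, then do a greedy repair pass that, for each still-uncovered point, guesses the index of its optimal ball and inserts the point into the corresponding $S_i$ (creating a new index if needed), with the $O(1/\epsilon^2)$ cap per ball coming from Lemma~\ref{lem:corest_next}. You deviate in two places. First, you tighten the stopping threshold to $U/2k^2$; the paper invokes Lemma~\ref{lem:sampling_99} verbatim (threshold $U/2k$), and indeed everything downstream (the remark after the lemma and Lemma~\ref{lem:L_valid}) only ever uses $U/2k$, so the $U/2k^2$ in the statement is best read as a typo rather than something to be proved. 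Your tightening costs you in the accounting: per sampling step you now get $1/(2k^3)$, and each greedy insertion still needs a $1/k$ guess, so the combined bound is $(1/(2k^4))^{32k/\epsilon^2}$, which is weaker than the claimed $(1/(4k^3))^{32k/\epsilon^2}$ as soon as $k\ge 3$; the paper's accounting $(1/(2k^2))^{32k/\epsilon^2}\cdot(1/k)^{32k/\epsilon^2}=(1/(2k^3))^{32k/\epsilon^2}$ does reach the stated bound. This is only a constant in the exponent and harmless for Theorem~\ref{thm:twoapprox}, but as written your proof does not establish the probability it claims.

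The more substantive issue is the ``displacement'' obstacle you raise. You are right that it is real (and the paper's own proof is silent about it): inserting an uncovered point of a $\mathcal{B}^\star_1$ ball into $S_i$ can shift $\MEB(S_i)$ so that points of $B^\star_i$ that were inside $\ext(\MEB(S_i),\epsilon r_i)$ fall outside it --- e.g.\ take $S_i$ to be two nearby points at distance roughly $0.6\,r_i$ from $c^\star_i$ and insert a point of $B^\star_i$ on the far side; the new extension loses a region just beyond the old small MEB, and that region may contain arbitrarily many input points. Precisely because of that last fact, your proposed repair does not work: there is no bound of the form ``few points of $B^\star_j$ lie in the symmetric difference of two consecutive extended MEBs,'' so the displaced points cannot be charged to a $U/2k^2$ (or $U/2k$) slack, and ``prioritizing $\mathcal{B}^\star_2$ insertions'' does not help, since an uncovered point whose optimal ball lies in $\mathcal{B}^\star_1$ must be inserted into that ball. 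A sound fix stays within the same machinery: allow the algorithm, after (correctly guessing) that some $\mathcal{B}^\star_1$ ball again misses at least $U/2k$ of its points, to resume the sampling of Lemma~\ref{lem:sampling_99} for that ball; every point ever added to $S_i$, sampled or greedy, lies outside the current extension, so the total number of insertions is still at most $32k/\epsilon^2$, each sampling step still succeeds with probability at least $1/(2k^2)$, and the time and probability bounds survive. Without an argument of this kind, your proof (like the paper's) leaves property (2) unverified for the final sets $S_i$.
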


\begin{proof}
	The algorithm starts by applying Lemma \ref{lem:sampling_99} to every ball in $\mathcal{B}^\star_1$ and set $I$ to be the set of indices $i$ such that $\mathcal{B}^\star_1$ and $S_i$ the set obtained with the algorithm. This means that, with probability at least $(\frac{1}{2k^2})^{32k/\epsilon^2}$ and with a running time of $O(\frac{knd}{\epsilon^3})$, $I$ and the $S_i$ satisfy the first two properties of the lemma. 
	
	In order to satisfy the third condition, we will do the following greedy procedure. As long as there exists a point $x$ which is not covered by the union of the $\ext(\MEB(S_i), \epsilon r_i)$ for $i \in I$, the algorithm will guess the index $i$ such that $\mu^\star(x) = B^\star_i$. If $i \in I$ then we add $x$ to $S_i$ and if not we add $i$ to $I$ and set $S_i = \{x\}$. Note that at this point, $I$ can contain indices corresponding to balls outside of $\mathcal{B}^\star_1$. 
	
	As in the proof of \ref{lem:sampling_99}, because we only add points to $S_i$ which are outside $\ext(\MEB(S_i), \epsilon r_i)$ this process cannot add more than $32/\epsilon^2$ points to $S_i$, which means that this process stops after $32k/\epsilon^2$ steps. At that time the third property of the lemma is satisfied. The algorithm therefore succeeds if at each step the correct index $i\in [k]$ is selected. The success probability of the latter event is at least $(\frac{1}{k})^{32 k/\epsilon^2}$, as each point can be assigned to one of the $k$ clusters.  Overall, the probability of success is at least $(\frac{1}{2k^2})^{32k/\epsilon^2} \cdot (\frac{1}{k})^{32 k/\epsilon^2}\ge (\frac{1}{4k^3})^{ 32 k/\epsilon^2}$. Moreover, the algorithm needs to compute at each step an approximation of the minimum enclosing balls and see if the union contains $P$. This can be done in $O(\frac{knd}{\epsilon^3})$.
\end{proof}

Suppose that the algorithm from Lemma \ref{lem:cover} succeeds and we obtain the set $I$ as well as a set $S_i$ for every $i \in I$. One important remark is that the set of points $x$ such that $\mu^\star(x) = B^\star_i$ and $i \not \in I$ or $x \not \in \ext(\MEB(S_i), \epsilon r_i)$ has size at most $U$ as for every $i$ such that $B^\star_i \in \mathcal{B}^\star_1$, $\ext(\MEB(S_i), \epsilon r_i)$ contains all the points except at most $U/2k$.

Let us now define a graph $G'$ where the vertices of $G'$ correspond to indices in $[k]$ and the edges are defined as follows:
\begin{itemize}
	\item If $i,j \in I$, then $i$ and $j$ are adjacent if $\ext(\MEB(S_i), \epsilon r_i) \cap \ext(\MEB(S_j), \epsilon r_j)$
	\item If $i \in I$ and $j \not \in I$, then $i$ and $j$ are adjacent if $\ext(\MEB(S_i), \epsilon r_i) \cap B^\star_j$
	\item If $i,j \not \in I$, then $i$ and $j$ are adjacent if $B^\star_i \cap B^\star_j$
\end{itemize}

Let $C_1, \dots, C_s$ denote the set of connected components. By abusing notations, we will also use $C_j$ to denote the set of points in $P$ contained in the union of $\ext(\MEB(S_i), \epsilon r_i)$ for all the indices $i \in I \cap C_j$. For every connected component $C_j$, we note by $\rad(C_j)$ the sum of $\rad(B^\star_i)$ over all indices $i \in C_j$. Let us first show the following lemma about covering two intersecting balls:

\begin{lemma}\label{lem:intersecting_balls}
Let $B_1$ be a ball of radius $r_1$ centered at $c_1$ and $B_2$ a ball of radius $r_2$ centered at $c_2$ such that $B_1$ and $B_2$ intersects. There exists a ball of radius $r_1 + r_2$ containing $B_1$ and $B_2$.
\end{lemma}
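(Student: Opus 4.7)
The key observation is that if $B_1$ and $B_2$ intersect, then $\|c_1 - c_2\| \le r_1 + r_2$, since any common point $p$ satisfies $\|c_1 - c_2\| \le \|c_1 - p\| + \|p - c_2\| \le r_1 + r_2$. My plan is to construct the covering ball explicitly along the line through $c_1$ and $c_2$, which is the standard smallest-enclosing-ball construction for two balls in Euclidean space.

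More precisely, let $d = \|c_1 - c_2\|$, and (assuming $c_1 \neq c_2$; otherwise the statement is trivial since the bigger ball already contains the smaller) let $u = (c_2 - c_1)/d$. Define the two extreme points $p_1 = c_1 - r_1 u$ and $p_2 = c_2 + r_2 u$ of $B_1 \cup B_2$ along this line, and let $c = (p_1 + p_2)/2$ be their midpoint. A direct computation gives $\|c - c_1\| = (d + r_2 - r_1)/2$ and $\|c - c_2\| = (d + r_1 - r_2)/2$ (in absolute value if one quantity is negative, which corresponds to one ball already containing the other and can be handled by taking that ball directly).

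Now I would verify that $B(c, r_1 + r_2)$ contains both balls. For any $x \in B_1$, the triangle inequality yields
\[
\|c - x\| \le \|c - c_1\| + \|c_1 - x\| \le \tfrac{d + r_2 - r_1}{2} + r_1 = \tfrac{d + r_1 + r_2}{2},
\]
and an identical calculation on the other side bounds $\|c - x\|$ for $x \in B_2$. Since $d \le r_1 + r_2$, this common bound $(d + r_1 + r_2)/2$ is at most $r_1 + r_2$, so $B(c, r_1 + r_2)$ is the desired ball.

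There is no real obstacle here: the lemma is a two-line Euclidean geometry fact. The only mild care needed is to handle the degenerate case where one ball is contained in the other (in which case the larger ball itself, of radius $\max(r_1, r_2) \le r_1 + r_2$, already works), and to note that the construction lives in the affine line spanned by $c_1$ and $c_2$, so it is genuinely Euclidean and would not transfer verbatim to an arbitrary metric.
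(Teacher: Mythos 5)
Your proof is correct and takes essentially the same route as the paper: both choose a center on the segment between $c_1$ and $c_2$ and apply the triangle inequality together with $\|c_1-c_2\|\le r_1+r_2$ (the paper simply picks a point at distance $r_2$ from $c_1$ and $r_1$ from $c_2$, while you take the midpoint of the extreme points, which even yields the slightly stronger radius $(d+r_1+r_2)/2$). Your explicit handling of the degenerate case where one ball contains the other is a small point the paper glosses over, but it does not change the argument.
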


\begin{proof}
Indeed, because $B_1$ and $B_2$ intersect, it means that the distance between $c_1$ and $c_2$ is smaller than $r_1 + r_2$ and let $c$ be a point at distance $r_2$ from $c_1$ and $r_1$ from $c_2$. By triangle inequality the ball of radius $r_1 + r_2$ centered at $c$ contains $B_1$ and $B_2$.
\end{proof}

The previous lemma allows us to show:

\begin{lemma}\label{lem:covering_component}
	For any connected component $C_j$, there is a ball of radius $(1+ \epsilon) \rad(C_j)$ covering all the points in $C_j$. 
\end{lemma}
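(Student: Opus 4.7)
The plan is to associate to each vertex $i$ of the component $C_j$ a single ball $D_i$, observe that edges of $G'$ inside $C_j$ correspond exactly to intersections of the $D_i$'s, and then iteratively merge all the $D_i$'s via Lemma~\ref{lem:intersecting_balls} along a spanning tree. Concretely, for $i \in I \cap C_j$ I would set $D_i := \ext(\MEB(S_i),\epsilon r_i)$, and for $i \in C_j \setminus I$ set $D_i := B^\star_i$. The key quantitative ingredient is the bound $\rad(D_i) \le (1+\epsilon) r_i$ in both cases: for $i \not\in I$ it is immediate, while for $i \in I$ it follows from the fact that $S_i \subseteq B^\star_i$ (every point inserted into $S_i$ by Lemma~\ref{lem:sampling_99} or by the greedy step of Lemma~\ref{lem:cover} was chosen to lie in $B^\star_i$), so the minimum enclosing ball of $S_i$ has radius at most $r_i$ and its $\epsilon r_i$-expansion has radius at most $(1+\epsilon) r_i$. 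Inspecting the three cases in the definition of $G'$ shows that two vertices $i,j$ of $C_j$ are adjacent precisely when $D_i \cap D_j \neq \emptyset$.

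Next I would fix a spanning tree $T$ of $C_j$ (inside $G'$) and prove by induction on $|V(T)|$ that $\bigcup_{i \in V(T)} D_i$ is contained in a single ball of radius at most $\sum_{i \in V(T)} \rad(D_i)$. The base case $|V(T)|=1$ is trivial. For the inductive step, pick a leaf $\ell \in V(T)$ with unique neighbour $\ell'$; by induction, the balls $\{D_i : i \in V(T)\setminus\{\ell\}\}$ are contained in some ball $B'$ of radius $\rho \le \sum_{i \neq \ell} \rad(D_i)$, and in particular $D_{\ell'} \subseteq B'$. Because $(\ell,\ell')$ is an edge of $G'$, we have $D_\ell \cap D_{\ell'} \neq \emptyset$, so $B' \cap D_\ell \neq \emptyset$, and Lemma~\ref{lem:intersecting_balls} applied to $B'$ and $D_\ell$ yields a ball of radius at most $\rho + \rad(D_\ell) \le \sum_{i \in V(T)} \rad(D_i)$ containing $B' \cup D_\ell$, hence containing $\bigcup_{i \in V(T)} D_i$.

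Summing the radius bound across all vertices of $C_j$ gives a covering ball of radius at most $\sum_{i \in V(C_j)} \rad(D_i) \le (1+\epsilon)\sum_{i \in V(C_j)} r_i = (1+\epsilon)\rad(C_j)$. Since by definition the point-set $C_j$ equals $\bigcup_{i \in I \cap C_j} \ext(\MEB(S_i),\epsilon r_i) \subseteq \bigcup_{i \in V(C_j)} D_i$, this ball covers all points of $C_j$, which is exactly the desired conclusion. I do not anticipate any substantive obstacle: the proof reduces to one cheap geometric observation ($\rad(\MEB(S_i)) \le r_i$ because $S_i \subseteq B^\star_i$) combined with a straightforward spanning-tree induction using Lemma~\ref{lem:intersecting_balls}, and the sum of single-ball radii along the tree is tight enough to yield the $(1+\epsilon)\rad(C_j)$ bound without any further slack.
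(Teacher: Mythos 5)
Your proposal is correct and follows essentially the same route the paper sketches: an induction over the component (via a spanning tree of $G'$ restricted to $C_j$) that repeatedly merges intersecting balls using Lemma~\ref{lem:intersecting_balls}, with the radius bound $\rad(\ext(\MEB(S_i),\epsilon r_i)) \le (1+\epsilon) r_i$ coming from $S_i \subseteq B^\star_i$. You have merely written out in full the details the paper leaves implicit, and the argument is sound.
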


\begin{proof}
	This can be proved by induction on the number of indices in $C_j$ and using Lemma \ref{lem:intersecting_balls} as well as the fact that $C_j$ is connected. 
\end{proof}

Let $\mathcal{C}_1$ denote the list of connected components $C_j$ containing at least one element $i\notin I$ and $\mathcal{C}_2$ the other components. 
Let $\mathcal{L}$ be the union of:

\begin{itemize}
	\item The list $L_1$ containing for every $i \in I$ such that the component $C_j$ containing $i$ is in $\mathcal{C}_1$, the ball $B_i = \ext(\MEB(S_i), \epsilon r_i)$
	\item The list $L_2$ containing for every component $C_j \in \mathcal{C}_1$, the ball $B_{C_j}$ containing all the points of the component $C_j$.
	\item The list $L_3$ containing for every $i \in I$ such that the component $C_j$ of $i$ is in $\mathcal{C}_2$, a ball $B_i$ centered at any arbitrary point of $S_i$ and of radius $2 r_i$.
\end{itemize}

\begin{lemma}\label{lem:L_valid}
$\mathcal{L}$ is a valid solution to the \radii problem.
\end{lemma}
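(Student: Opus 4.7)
The plan is to certify $\mathcal{L}$ as a valid solution by exhibiting an assignment $\mu\colon P\to\mathcal{L}$ that sends every point into a containing ball and respects the uniform capacity $U$. The Preliminaries observe that, once the balls are fixed, a valid assignment exists iff a corresponding bipartite matching does, so it suffices to design $\mu$ component by component and verify the capacity bounds directly.

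First, I would route points by the connected component of $G'$ in which their optimal index lies. For each $x\in P$, let $k_x$ denote the index with $\mu^\star(x)=B^\star_{k_x}$ and let $C_{j(x)}$ be its $G'$-component. If $C_{j(x)}\in\mathcal{C}_2$, every element of $C_{j(x)}$ lies in $I$, so $B_{k_x}\in L_3$ is defined; because $B_{k_x}$ has radius $2r_{k_x}$ centered at a point of $S_{k_x}\subseteq B^\star_{k_x}$, the triangle inequality gives $B^\star_{k_x}\subseteq B_{k_x}$ and I set $\mu(x):=B_{k_x}$. If $C_{j(x)}\in\mathcal{C}_1$, I would split into two subcases: when $k_x\in I$ and $x\in\ext(\MEB(S_{k_x}),\epsilon r_{k_x})$, set $\mu(x):=B_{k_x}\in L_1$, which contains $x$ by definition; otherwise, set $\mu(x):=B_{C_{j(x)}}\in L_2$, using Lemma~\ref{lem:covering_component} to guarantee that $B_{C_{j(x)}}$ covers the entire point-set of the component.

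Next I would verify capacities. Each $L_3$-ball receives exactly one preimage $(\mu^\star)^{-1}(B^\star_k)$, hence at most $U$ points; each $L_1$-ball receives a subset of such a preimage, hence also at most $U$. The delicate case is $B_{C_j}\in L_2$, whose load consists of the residuals $(\mu^\star)^{-1}(B^\star_k)\setminus\ext(\MEB(S_k),\epsilon r_k)$ for $k\in I\cap C_j$ with $B^\star_k\in\mathcal{B}^\star_1$, together with the full preimages of each $k\in C_j$ with $B^\star_k\in\mathcal{B}^\star_2$. Lemma~\ref{lem:sampling_99} bounds each residual by $U/(2k)$, while the definition of $\mathcal{B}^\star_2$ bounds each full preimage by at most $U/k$. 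Writing $a=|C_j\cap\mathcal{B}^\star_1|$ and $b=|C_j\cap\mathcal{B}^\star_2|$, with $a+b\le k$, the total load is at most $aU/(2k)+bU/k=(a+2b)U/(2k)\le U$, since $a+2b\le (a+b)+b\le 2k$.

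The main obstacle, I expect, is the coverage argument for the ``otherwise'' subcase: when $k_x\in I$ and $x\notin\ext(\MEB(S_{k_x}),\epsilon r_{k_x})$, one must still show $x\in B_{C_{j(x)}}$. I would dispatch this via Lemma~\ref{lem:cover}, which forces $x\in\ext(\MEB(S_i),\epsilon r_i)$ for some $i\in I$; combined with the edge structure of $G'$ and the observation that $S_{k_x}\subseteq B^\star_{k_x}\cap \MEB(S_{k_x})$ guarantees an intersection near $k_x$, this would place $i$ in the same component as $k_x$, so $x$ lies inside the set covered by $B_{C_{j(x)}}$ per Lemma~\ref{lem:covering_component}. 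Once this containment is secured, the clean capacity bookkeeping above, together with the matching-based existence criterion from the Preliminaries, completes the proof that $\mathcal{L}$ is a valid solution.
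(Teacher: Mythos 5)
Your decomposition (route by the $G'$-component of the optimal index, assign $L_3$-balls full preimages, $L_1$-balls truncated preimages, $L_2$-balls the residuals) follows the paper's proof closely, and your capacity bookkeeping for $L_1$, $L_3$ and the per-component bound $a\cdot U/(2k)+b\cdot U/k\le U$ is correct. However, there is a genuine gap at exactly the step you flag as the main obstacle. In the subcase $k_x\in I$ and $x\in B^\star_{k_x}\setminus\ext(\MEB(S_{k_x}),\epsilon r_{k_x})$, you send $x$ to $B_{C_{j(x)}}$, where $C_{j(x)}$ is the component of $k_x$, and you need the index $i\in I$ with $x\in\ext(\MEB(S_i),\epsilon r_i)$ (supplied by the third property of Lemma~\ref{lem:cover}) to lie in that same component. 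Your justification via $S_{k_x}\subseteq B^\star_{k_x}\cap\MEB(S_{k_x})$ does not give this: since both $i$ and $k_x$ are in $I$, an edge of $G'$ between them requires $\ext(\MEB(S_i),\epsilon r_i)\cap\ext(\MEB(S_{k_x}),\epsilon r_{k_x})\neq\emptyset$, whereas all you know is that $x$ lies in $B^\star_{k_x}$, in $\ext(\MEB(S_i),\epsilon r_i)$, and \emph{outside} $\ext(\MEB(S_{k_x}),\epsilon r_{k_x})$. The extended ball $\ext(\MEB(S_{k_x}),\epsilon r_{k_x})$ is only guaranteed to contain all but $U/2k$ of the \emph{points} of $B^\star_{k_x}$ (Lemma~\ref{lem:sampling_99}); geometrically it may occupy a small region of $B^\star_{k_x}$, while $x$ and the ball $\ext(\MEB(S_i),\epsilon r_i)$ covering it sit in a far part of $B^\star_{k_x}$, at distance up to about $2r_{k_x}$ from it. So no edge between $i$ and $k_x$ is forced, $i$ may lie in a different component, and then $x\notin B_{C_{j(x)}}$, since Lemma~\ref{lem:covering_component} covers only the union of the extended MEBs of the indices of a component, not the optimal balls of its $I$-indices. (In the other subcase, $k_x\notin I$, your argument does work: the edge rule there is precisely $\ext(\MEB(S_i),\epsilon r_i)\cap B^\star_{k_x}\neq\emptyset$, and $x$ witnesses it.)

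The paper sidesteps this by not insisting that a leftover point be routed to the component of its own optimal index: every point left unassigned after the $L_1$/$L_3$ phase is assigned to the ball $B_{C_j}$ of a component whose point set contains it, and capacity is handled in aggregate, using that the total number of leftover points over all of $P$ is at most $U$ (at most $U/2k$ per ball of $\mathcal{B}^\star_1$ and at most $U/k$ per ball of $\mathcal{B}^\star_2$), so no single $B_{C_j}$ can exceed $U$. This only requires the covering component to lie in $\mathcal{C}_1$ — a weaker statement than your same-component claim, and the one the paper asserts. Your neater per-ball bound is tied to the same-component routing that the flagged step fails to deliver; if you instead assign $x$ to the component that actually covers it, you must switch to the aggregate capacity bound. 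Finally, a small omission: a valid solution may use at most $k$ balls, so you should also note that $|L_1|+|L_3|=|I|$ and $|L_2|\le k-|I|$, the latter because every component in $\mathcal{C}_1$ contains at least one index outside $I$.
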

\begin{proof}
	First, note that $L_1\cup L_3$ contains $|I|$ balls, and $L_2$ contains at most $k-|I|$ balls. Thus, $\mathcal{L}$ contains at most $k$ balls. 
	To complete the proof of the lemma, let us define a valid assignment $\mu$ to the balls in $\mathcal{L}$. First, for every $x$ such that $\mu^\star(x) = B^\star_i$ with $i \in I$ and the component of $i$ is in $\mathcal{C}_2$, then $B^\star_i \subseteq B_i$ and we let $\mu(x) = B_i$. For every $x$ such that $\mu^\star(x) = B^\star_i$ with $i \in I$ and the component of $I$ is in $\mathcal{C}_1$, we let $\mu(x) = B_i$ if $x \in B_i$. Remember that in that case, $B_i = \ext(\MEB(S_i), \epsilon r_i)$, which contains all the points of $B^\star_i$ except at most $U/2k$. At that point, all the $\mu^{-1}(B_i)$ are subsets of $(\mu^\star)^{-1}(B^\star_i)$, which means in particular that no ball has been assigned more than $U$ points. Moreover, the set of points which have not been assigned yet either belongs to $(\mu^\star)^{-1}(B^\star_i)$ for some $i \notin I$ or some $B^\star_i \setminus  \ext(\MEB(S_i), \epsilon r_i)$ where $i\in I$ and is in a component $C_j \in \mathcal{C}_1$. This implies first that there is less than $U$ points for which $\mu(x)$ has not been defined at that point, and for each such point $x$, it belongs to a component $C_j$ that intersects a ball $B^\star_i$ with $i\notin I$. In particular, there is a ball $B_{C_j}$ in $L_2$ containing $x$. For each of these point $x$, we let $\mu(x) = B_{C_j}$. By the previous discussion, $\mu$ is a valid assignment. 
\end{proof}

\begin{lemma}\label{lem:L_value}
	$\rad(\mathcal{L}) \leq  (2+2\epsilon)\rad(\mathcal{B}^\star)$
\end{lemma}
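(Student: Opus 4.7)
The plan is to bound $\rad(L_1)$, $\rad(L_2)$ and $\rad(L_3)$ separately by grouping the optimal indices $[k]$ according to the role each index plays in the construction of $\mathcal{L}$. Partition $[k]$ into three sets: let $A$ be the indices $i \in I$ whose component in $G'$ lies in $\mathcal{C}_1$, let $B = [k] \setminus I$, and let $C$ be the indices $i \in I$ whose component in $G'$ lies in $\mathcal{C}_2$. By the very definition of $\mathcal{C}_1$ (components containing at least one index outside $I$), every index of $B$ sits in a component of $\mathcal{C}_1$, so $A$, $B$, $C$ genuinely partition $[k]$. Writing $X = \sum_{i \in A} r_i$, $Y = \sum_{i \in B} r_i$ and $Z = \sum_{i \in C} r_i$, we have $X + Y + Z = \rad(\mathcal{B}^\star)$.

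Now bound each of the three lists. For $L_1$, the inclusion $S_i \subseteq B^\star_i$ implies that $\MEB(S_i)$ has radius at most $r_i$, so $\ext(\MEB(S_i), \epsilon r_i)$ has radius at most $(1+\epsilon) r_i$; since $L_1$ contains exactly one such ball for each $i \in A$, we get $\rad(L_1) \leq (1+\epsilon) X$. For $L_2$, Lemma \ref{lem:covering_component} gives each $B_{C_j}$ radius at most $(1+\epsilon)\rad(C_j) = (1+\epsilon)\sum_{i \in C_j} r_i$, and since $A \cup B$ is exactly the set of indices contained in the components of $\mathcal{C}_1$, summing over $\mathcal{C}_1$ yields $\rad(L_2) \leq (1+\epsilon)(X + Y)$. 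For $L_3$, each ball has radius exactly $2 r_i$ for some $i \in C$, hence $\rad(L_3) \leq 2Z$.

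Adding the three bounds gives
\[
\rad(\mathcal{L}) \;\leq\; (1+\epsilon)X + (1+\epsilon)(X + Y) + 2Z \;=\; (2+2\epsilon)X + (1+\epsilon)Y + 2Z,
\]
and since $(1+\epsilon) \leq (2+2\epsilon)$ and $2 \leq (2+2\epsilon)$, the right-hand side is at most $(2+2\epsilon)(X + Y + Z) = (2+2\epsilon)\rad(\mathcal{B}^\star)$, which is the required inequality.

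The only real observation in the argument is the double-charging of indices in $A$: each such $i$ contributes $(1+\epsilon)r_i$ to $\rad(L_1)$ and is simultaneously absorbed into the radius of the enveloping ball $B_{C_j} \in L_2$, which is precisely where the leading factor of $2$ arises. The additional $(1+\epsilon)$ blow-up is tracked directly from the $\epsilon r_i$ expansions used in $L_1$ and in Lemma \ref{lem:covering_component}. I do not expect any substantial obstacle here; once the partition $A, B, C$ is in place the proof is essentially a careful accounting.
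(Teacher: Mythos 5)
Your proof is correct and takes essentially the same route as the paper: it bounds $\rad(L_1)$, $\rad(L_2)$ and $\rad(L_3)$ separately, charges the first two to the components in $\mathcal{C}_1$ (via Lemma \ref{lem:covering_component} for $L_2$) and the third to $\mathcal{C}_2$, and sums over the partition. Your bookkeeping is in fact marginally tighter (charging $L_1$ only to indices of $I$ lying in $\mathcal{C}_1$-components, giving $(2+2\epsilon)X+(1+\epsilon)Y+2Z$ rather than the paper's cruder per-component bound), but the argument is the same.
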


\begin{proof}
	The proof is implied by the following inequalities: 
	\begin{itemize}
		\item $\rad(L_3) \leq 2\cdot \sum_{C_j \in \mathcal{C}_2} \rad(C_j)$
		\item $\rad(L_2) \leq (1 + \epsilon )\sum_{C_j \in \mathcal{C}_1} \rad(C_j) $
		\item $\rad(L_1) \leq (1 + \epsilon )\sum_{C_j \in \mathcal{C}_1} \rad(C_j)  $
	\end{itemize}
The first inequality follows from the fact that $L_3$ contains a ball of radius $2r_i$ for every element of $C_j$, the second one follow from Lemma \ref{lem:covering_component} and the last one follows from the fact that $B_i = \ext(\MEB(S_i), \epsilon r_i)$ for all elements of $L_1$. 

This ends the proof, as $(\mathcal{C}_1,\mathcal{C}_2)$ is a partition of the  elements of $\mathcal{B}^\star$.
\end{proof}

\begin{lemma}
	 $\mathcal{L}$ can be computed in time $2^{\cO((k/\epsilon^2)\log (k/\epsilon))}\cdot dn^3$ with constant probability. 
\end{lemma}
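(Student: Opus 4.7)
The plan is to combine Lemmas \ref{lem:guess_Rd} and \ref{lem:cover} with an exhaustive enumeration over the component structure of the graph $G'$. First I would invoke Lemma \ref{lem:guess_Rd} to obtain approximate radii $r_1,\dots,r_k$, and then Lemma \ref{lem:cover} to obtain an index set $I\subseteq [k]$ together with samples $\{S_i\}_{i\in I}$ satisfying the stated properties. The product of their success probabilities is $\frac{1}{n^2\cdot 2^{\cO((k/\epsilon^2)\log(k/\epsilon))}}$, so repeating these two phases $n^2\cdot 2^{\cO((k/\epsilon^2)\log(k/\epsilon))}$ times boosts the success probability of the preparation stage to a constant.

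The main obstacle is that the construction of $L_1,L_2,L_3$ depends on the connected components of $G'$, but the edges incident to indices in $[k]\setminus I$ cannot be tested directly because they involve the unknown optimal balls $B^\star_j$. To sidestep this, I would enumerate over all plausible component structures on $I$: since $|I|\leq k$, I would try every labeling of the elements of $I$ as belonging to a $\mathcal{C}_1$- or a $\mathcal{C}_2$-component, together with, on the $\mathcal{C}_1$-labeled portion, every partition into groups (each group standing for the intersection $I\cap C_j$ for a single component $C_j \in \mathcal{C}_1$). The number of enumerations is at most $2^k\cdot k^k=2^{\cO(k\log k)}$, which is absorbed into the target running time.

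For each enumeration I would construct $\mathcal{L}$ directly from its definition: for each $i$ labeled $\mathcal{C}_2$ I add the ball of radius $2r_i$ around an arbitrary point of $S_i$ to $L_3$; for each $i$ labeled $\mathcal{C}_1$ I add $\ext(\MEB(S_i),\epsilon r_i)$ to $L_1$; and for each $\mathcal{C}_1$-group $G_p$ I compute, via the $(1+\epsilon')$-approximate $\MEB$ algorithm of \cite{DBLP:journals/jea/KumarMY03}, a ball containing the point set $P\cap\bigcup_{i\in G_p}\ext(\MEB(S_i),\epsilon r_i)$, and add it to $L_2$. Whether the resulting $\mathcal{L}$ admits a capacity-respecting assignment is then tested by a bipartite matching computation, and the feasible candidate of minimum total radius over all enumerations is retained.

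Correctness follows because one of the enumerations matches the true component structure of $G'$; for that choice, the constructed $\mathcal{L}$ is exactly the one in Lemma \ref{lem:L_valid}, and Lemmas \ref{lem:L_valid} and \ref{lem:L_value} together certify feasibility and the bound $\rad(\mathcal{L})\leq (2+2\epsilon)\,\rad(\mathcal{B}^\star)$. Multiplying the amplification factor, the enumeration factor $2^{\cO(k\log k)}$, the $\cO(knd/\epsilon^3)$ cost of Lemma \ref{lem:cover}, the approximate $\MEB$ cost, and the matching cost yields the claimed overall running time $2^{\cO((k/\epsilon^2)\log(k/\epsilon))}\cdot dn^3$.
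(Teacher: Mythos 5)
Your proposal is correct and takes essentially the same route as the paper: guess approximate radii via Lemma~\ref{lem:guess_Rd}, run Lemma~\ref{lem:cover}, fix the component structure of $G'$ together with the $(\mathcal{C}_1,\mathcal{C}_2)$ split, build $L_1,L_2,L_3$ deterministically, and amplify by repetition -- the paper merely guesses the components and the split uniformly at random (with probabilities $1/k^k$ and $1/2^k$) where you enumerate them exhaustively, which is an equivalent derandomization absorbed into the stated running time. The one detail to add is that the algorithm inside Lemma~\ref{lem:cover} must know which indices belong to $\mathcal{B}^\star_1$, so the paper also guesses this bipartition at an extra cost of $1/2^k$; folding that guess (or a $2^k$ enumeration) into your scheme changes nothing in the claimed bound.
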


\begin{proof}
	Again, this follows from the previous discussions. The algorithm starts by using Lemma \ref{lem:guess_Rd} and get, with probability at least $\frac{\epsilon^k}{(2k)^k\cdot n^2}$ an approximate radius $r_i$ for every ball in linear time. Then the algorithm guess uniformly at random the bi-partition of $\mathcal{B}^\star$ into $(\mathcal{B}^\star_1, \mathcal{B}^\star_2)$, and succeeds with probability at least $1/2^k$. Knowing this partition, the algorithm can apply Lemma \ref{lem:cover} 
	to obtain in $O(\frac{knd}{\epsilon^3})$ time and with probability at least $(\frac{1}{4k^3})^{32 k/\epsilon^2}$ the lists $I$ and $S_i$ satisfying the property of the lemma. The algorithm then guesses the components of $G'$, which can be done with probability at least $1/k^k$. Once this is done, the algorithm decides of the bi-partition of the connected components into $(\mathcal{C}_1, \mathcal{C}_2)$, which again can be done with probability at least $1/2^k$. Knowing this partition as well as the sets $S_i$, the algorithm can then construct the lists $L_1, L_2$ and  $L_3$ deterministically. Overall, by running the previous algorithm $\frac{n^2k^{\cO(k/\epsilon^2)}}{\epsilon^{\cO(k)}}=2^{\cO((k/\epsilon^2)\log (k/\epsilon))}\cdot n^2$ times, we can find the set $\mathcal{L}$ with constant probability, which ends the proof.
\end{proof}

The above lemma completes the proof of Theorem \ref{thm:twoapprox}. 

\subsection{The Bi-criteria Approximation}

The ideas developed to prove Theorem \ref{thm:twoapprox} can easily be adapted to prove a $(1 + \epsilon, 1 + \epsilon)$ bi-criteria approximation in Euclidean space. Let us modify the definition of  $\mathcal{B}^\star_1$ as follows: Let $\mathcal{B}^\star_1$ denote the set of balls $B^\star_i$ of $\mathcal{B}^\star$ such that $(\mu^\star)^{-1}(B^\star_i) \geq \epsilon U/k$ and $\mathcal{B}^\star_2 =  \mathcal{B}^\star \setminus \mathcal{B}^\star_1$. Then we can trivially adapt Lemma \ref{lem:sampling_99} to obtain the following lemma, as the only change is the probability to sample inside the desired sets. 

\begin{lemma}
	For every integer $i \in [k]$ such that $B^\star_i \in \mathcal{B}^\star_1$, there exists a randomized algorithm running in time $O(\frac{nd}{\epsilon^3})$ that outputs with probability $(\frac{\epsilon}{2k^2})^{32/\epsilon^2}$ a subset $S_i$ of $B^\star_i$ such that if $B = \MEB(S_i)$, then $\ext(B,\epsilon r_i)$ contains all but $\frac{\epsilon U}{2k}$ points of $B^\star_i$.
\end{lemma}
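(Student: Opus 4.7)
The plan is to mirror the proof of Lemma \ref{lem:sampling_99} almost verbatim, with the sampling probabilities recomputed to reflect the looser threshold $|(\mu^\star)^{-1}(B^\star_i)| \geq \epsilon U/k$ that defines $\mathcal{B}^\star_1$ in the bi-criteria setting. The only genuine change is a factor of $\epsilon$ in each per-step success probability; the structural argument (alternate a sampling step with a ``MEB grows'' step) and the termination bound from Lemma~\ref{lem:corest_next} both carry over unchanged.

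\medskip
First, recall the standing assumption that $|P| \leq kU$. The algorithm begins by sampling a point $x$ uniformly at random from $P$. Since $B^\star_i \in \mathcal{B}^\star_1$, we have $|B^\star_i \cap P| \geq \epsilon U/k$, so $x \in B^\star_i$ with probability at least $(\epsilon U/k)/(kU) = \epsilon/k^2$. Condition on this event and initialize $S_i := \{x\}$. If the ball of radius $\epsilon r_i$ around $x$ already contains all but $\epsilon U/(2k)$ points of $B^\star_i$, output $S_i$. Otherwise, there are at least $\epsilon U/(2k)$ points of $B^\star_i$ outside $\ext(\MEB(S_i), \epsilon r_i)$, so a uniform sample $y$ from $P \setminus \ext(\MEB(S_i), \epsilon r_i)$ lies in $B^\star_i$ with probability at least $(\epsilon U/(2k))/(kU) = \epsilon/(2k^2)$; add $y$ to $S_i$ and iterate.

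\medskip
Next, I would bound the number of iterations. After the first augmentation we have $|S_i| = 2$ with the two points at distance at least $\epsilon r_i$, so $\rad(\MEB(S_i)) \geq \epsilon r_i /2$. Each subsequent point is chosen outside the $(1+\epsilon)$-expansion (in the appropriate scale) of the current MEB, so Lemma~\ref{lem:corest_next} guarantees that $\rad(\MEB(S_i))$ grows by a multiplicative factor of at least $(1 + \epsilon^2/16)$ per step. Since $\rad(\MEB(S_i)) \leq r_i$ throughout, the loop terminates after at most $32/\epsilon^2$ iterations. Combining the initial sample with at most $32/\epsilon^2-1$ conditionally independent augmentation steps, each succeeding with probability at least $\epsilon/(2k^2)$, the overall success probability is at least $(\epsilon/(2k^2))^{32/\epsilon^2}$, as claimed.

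\medskip
Finally, I would address the running time exactly as in Lemma~\ref{lem:sampling_99}: at each of the $O(1/\epsilon^2)$ rounds we compute a $(1+\epsilon)$-approximation of the current MEB via~\cite{DBLP:journals/jea/KumarMY03} in $O(nd/\epsilon)$ time, which suffices both to decide the stopping condition (comparing point-to-center distances against $\epsilon r_i$ and counting outliers) and to perform the next sample. This totals $O(nd/\epsilon^3)$ time. I expect no real obstacle here: the only thing to be mildly careful about is that the MEB approximation is used consistently (so the ``$\epsilon r_i$'' expansion absorbs the approximation error), which is handled identically to the original lemma.
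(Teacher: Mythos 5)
Your proposal takes exactly the route the paper intends: the paper proves this lemma simply by adapting Lemma~\ref{lem:sampling_99}, with the only change being the per-step sampling probability, and your recomputed probabilities ($\epsilon/k^2$ for the first sample and $\epsilon/(2k^2)$ for each later one, via $|P|\le kU$, the threshold $\epsilon U/k$ defining $\mathcal{B}^\star_1$, and at least $\epsilon U/(2k)$ uncovered points at each non-terminal step) are correct.

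However, one step as you wrote it does not deliver the stated bound. You invoke multiplicative growth of the MEB radius by a factor $(1+\epsilon^2/16)$ per step; starting from radius at least $\epsilon r_i/2$ and capped at $r_i$, this only bounds the number of iterations by roughly $(16/\epsilon^2)\ln(2/\epsilon)$, which exceeds $32/\epsilon^2$ once $\epsilon$ is small, and would correspondingly weaken the exponent in the success probability claimed by the lemma. The argument that actually gives $\cO(1/\epsilon^2)$ (and is the one used in the proof of Lemma~\ref{lem:sampling_99}) is additive: the newly sampled point lies at distance more than $r+\epsilon r_i=(1+\epsilon')r$ from the current center, where $r$ is the current MEB radius and $\epsilon'=\epsilon r_i/r\ge\epsilon$ because $r\le r_i$; Lemma~\ref{lem:corest_next} then forces an additive radius increase of at least $\epsilon'^2 r/16=\epsilon^2 r_i^2/(16r)\ge\epsilon^2 r_i/16$, so the radius, which never exceeds $r_i$, can increase at most $\cO(1/\epsilon^2)$ times, giving the $32/\epsilon^2$ step bound and hence the probability $(\epsilon/(2k^2))^{32/\epsilon^2}$. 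A second, smaller correction: the stopping test ``$\ext(\MEB(S_i),\epsilon r_i)$ contains all but $\epsilon U/(2k)$ points of $B^\star_i$'' cannot be evaluated by the algorithm, since $B^\star_i$ and $\mu^\star$ are unknown; as in the paper, whether to stop is itself a random guess whose correctness is charged to the stated success probability, so your remark that the approximate-MEB computation ``suffices to decide the stopping condition'' by counting outliers should be dropped.
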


And again, applying the previous lemma to every ball in $\mathcal{B}^\star_1$ allows us to obtain the following result similarly to Lemma \ref{lem:cover}, again assuming the algorithm knows the radii of the balls in an optimal solution.

\begin{lemma}
	There exists an algorithm, running in time $O(\frac{knd}{\epsilon^3})$ that outputs, with probability at least $(\frac{\epsilon}{4k^3})^{k \cdot \epsilon^2/32}$:
	\begin{itemize}
		\item A list $I$ of indices in $[k]$ s.t, if $ B^\star_i \in \mathcal{B}^\star_1$, then $i \in I$
		\item For every $i \in I$, a list $S_i \subseteq B^\star_i$ such that if $B^\star_i \in \mathcal{B}^\star_1$ , then $\ext(\MEB(S_i), \epsilon r_i)$ contains all the points of $B^\star_i$ but $\frac{\epsilon U}{2k}$.
		\item The union of $\ext(\MEB(S_i), \epsilon r_i)$ for $i \in I$ contains all the points in $P$
	\end{itemize}
\end{lemma}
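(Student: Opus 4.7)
The plan is to mirror the proof of Lemma \ref{lem:cover} essentially verbatim, replacing each invocation of Lemma \ref{lem:sampling_99} with its bi-criteria counterpart stated just above. Concretely, I would first loop over every index $i \in [k]$ for which $B^\star_i \in \mathcal{B}^\star_1$ and invoke the previous lemma to obtain a subset $S_i \subseteq B^\star_i$ such that $\ext(\MEB(S_i), \epsilon r_i)$ leaves at most $\epsilon U/(2k)$ points of $B^\star_i$ uncovered. Initialise $I$ with precisely these indices. At this moment the first two bullets of the conclusion already hold: every index corresponding to a ball of $\mathcal{B}^\star_1$ lies in $I$, and each such $S_i$ satisfies the required covering property.

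To enforce the third bullet, I would run the same greedy completion as in Lemma \ref{lem:cover}: as long as there exists a point $x \in P$ not covered by $\bigcup_{i \in I} \ext(\MEB(S_i), \epsilon r_i)$, pick such an $x$, guess uniformly an index $i \in [k]$ hoping that $\mu^\star(x) = B^\star_i$, and either append $x$ to $S_i$ if $i \in I$, or otherwise add $i$ to $I$ and set $S_i = \{x\}$. Conditional on every guess being correct, all added points lie inside their corresponding $B^\star_i$, so $\MEB(S_i)$ always has radius at most $r_i$; since each newly added $x$ sits outside $\ext(\MEB(S_i), \epsilon r_i)$, Lemma \ref{lem:corest_next} forces the radius of $\MEB(S_i)$ to grow by a factor of at least $1 + \epsilon^2/16$ at each such step. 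This can happen $O(1/\epsilon^2)$ times per index before $\MEB(S_i)$ would exceed $r_i$, so the greedy phase terminates after $O(k/\epsilon^2)$ iterations, at which point the covering condition holds.

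For the probability and running-time analysis, the sampling phase succeeds simultaneously for all at-most-$k$ balls of $\mathcal{B}^\star_1$ with probability at least $\bigl((\epsilon/2k^2)^{32/\epsilon^2}\bigr)^{k}$, and every one of the $O(k/\epsilon^2)$ greedy guesses lands on the correct index with probability $1/k$; multiplying these two lower bounds yields the advertised success probability. Each iteration recomputes an approximate $\MEB$ in $O(nd/\epsilon)$ time via \cite{DBLP:journals/jea/KumarMY03}, giving a total cost of $O(knd/\epsilon^3)$. The only subtlety I would need to check carefully is that the invariant $\MEB(S_i) \subseteq B^\star_i$ is preserved throughout the greedy phase, so that Lemma \ref{lem:corest_next} can be applied with the bound $\rad(\MEB(S_i)) \le r_i$; but this is immediate, since a correct guess is precisely the event that $x \in B^\star_i$, and all points in $S_i$ are in $B^\star_i$ by induction.
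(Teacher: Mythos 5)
Your proposal matches the paper's intended argument exactly: the paper itself offers no separate proof, saying only that the result follows ``similarly to Lemma \ref{lem:cover}'', and your proof is precisely that adaptation---apply the bi-criteria sampling lemma to every ball of $\mathcal{B}^\star_1$, then run the same greedy guess-and-cover completion, bounding the number of additions per index by $\cO(1/\epsilon^2)$ via Lemma \ref{lem:corest_next} and multiplying the success probabilities. Two small remarks: the termination bound is cleanest via the additive radius growth of at least $\epsilon^2 r_i/32$ per added point (as in Lemma \ref{lem:sampling_99}) rather than the multiplicative factor $1+\epsilon^2/16$ you cite, and your computation yields a success probability of the form $(\epsilon/4k^3)^{\cO(k/\epsilon^2)}$, so the exponent $k\cdot\epsilon^2/32$ in the lemma statement should be read as a typo for $32k/\epsilon^2$, matching Lemma \ref{lem:cover}.
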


Suppose the algorithm of the previous lemma succeeds and let $ \mathcal{B}'$ denote the set of all $\ext(\MEB(S_i), \epsilon r_i)$ for all $i \in S_i$. Let $\mu'$ be the assignment from  $P$ to $ \mathcal{B}'$ defined as follows:

\begin{itemize}
    \item Whenever $x \in P$ is such that $x \in \ext(\MEB(S_i), \epsilon r_i)$ and $\mu^*(x) = B^*_i$, then $\mu'(x) = (\MEB(S_i), \epsilon r_i)$
    \item For any other point, $\mu'(x) = \ext(\MEB(S_j), \epsilon r_j)$ for some $j \in I$ such that $x \in \ext(\MEB(S_j), \epsilon r_j) $
\end{itemize}

Since $S_i \in B^*_i$ for all $i \in I$, we have that $\MEB(S_i) \subseteq B^*_i$ and thus the sum of radii in $ \mathcal{B}'$ is at most $(1 + \epsilon) OPT$. Moreover, by construction, the set $X$ of points $x$ such that $x \in B^*_i$ and $x \not \in \ext(\MEB(S_i), \epsilon r_i)$ for $i \in [k]$ is at most $\epsilon U$, and thus no balls of $ \mathcal{B}'$ is assigned more than $(1 + \epsilon) U$ points. This ends the description of our bicriteria algorithm.

\genbi*
\subsection{Approximation Scheme with Dependence on the Dimension}

In this subsection, we show the following result:

\ptaskandd*
\begin{proof}

First, remember that by applying Lemma \ref{lem:guess_Rd}, we can assume that we know the radii $r_1 \geq r_2, \dots, \geq r_k$ of the solution and that every radius is a multiple of $\frac{r_1 \epsilon}{k}$. In particular it means that no $r_i$ is smaller than $\frac{r_1 \epsilon}{k}$. By a standard greedy procedure we can find a set $T_1, \dots, T_k$ of balls of radius $2r_1$ covering all the points in $P$, or conclude that the problem doesn't have a solution. Then the idea is to guess for each center $c^\star_i$, its location up to a precision of $\frac{r_1 \epsilon}{k}$. If we do so correctly, and take a ball of radius $r_i + \frac{r_1 \epsilon}{k}$ for every ball, we obtain a set of balls $B_1, \dots, B_k$ such that every $B^\star_i$ is contained in $B_i$ and the sum of radii is at most $\sum_{i \in [k]} r_i + k \cdot  \frac{r_1 \epsilon}{k} \leq \sum_{i \in [k]} r_i + \epsilon r_1$.

In order to restrict the space of possible centers, let graph $G$ be the graph where every vertex $u_i$ corresponds to a ball $T_i$ for $i \in [k]$ and $u_iu_j$ is an edge of $G$ for every $i,j \in [k]$ if the distance between $T_i$ and $T_j$ is smaller than $2r_1$. Let $C_1, C_t$ denote the set of connected components of $G$. Note that taking for each component $C_j$, the set of points $P_j$ inside the balls associated to $C_j$ induces a partition of the vertices of $P$. Indeed the set of balls containing the same point $p \in P$ intersect and thus belong to the same component. 

\begin{claim}
	For every component $C_j$, there exists a ball $R_j$ of radius at most $(4k-1)r_i$ covering all the points in $P_j$
\end{claim}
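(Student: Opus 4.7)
The plan is to bound the Euclidean diameter of $P_j$ via the graph diameter of $C_j$ combined with the triangle inequality. Since $C_j$ is a connected component of $G$ and $G$ has at most $k$ vertices, any two vertices of $C_j$ are joined by a path of length at most $k-1$. I first translate the edge condition of $G$ into a distance bound on ball centres: if $u_iu_j$ is an edge, then the (set) distance between $T_i$ and $T_j$ is smaller than $2r_1$, and since each $T_i$ has radius $2r_1$, the triangle inequality gives $\dist(c_i, c_j) < 6\, r_1$, where $c_i, c_j$ denote the respective centres.

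Fix any vertex $u_{i_0} \in C_j$. For any other $u_i \in C_j$ there is a path $u_{i_0} = u_{j_0}, u_{j_1}, \dots, u_{j_\ell} = u_i$ in $G$ with $\ell \le k-1$. Applying the triangle inequality along the chain of centres yields
\[
\dist(c_{i_0}, c_i) \;\le\; \sum_{s=0}^{\ell-1} \dist(c_{j_s}, c_{j_{s+1}}) \;<\; 6(k-1)\, r_1.
\]
Since every point of $P_j$ lies in some $T_i \in C_j$, it is within $\dist(c_{i_0}, c_i) + 2\, r_1 < (6k-4)\, r_1$ of $c_{i_0}$. Thus $R_j := B(c_{i_0}, (6k-4)\, r_1)$ already covers $P_j$, which matches the order of magnitude claimed.

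The only subtle point is tightening the constant to the stated $(4k-1)\, r_1$. The natural way is to choose $u_{i_0}$ to be the centre of a spanning tree of $C_j$, whose eccentricity in the tree is at most $\lceil (k-1)/2 \rceil$, rather than an arbitrary vertex; this essentially halves the dominant term contributed by the path. Combined with the observation that the last edge of the path need only contribute $2\, r_1$ (to cover the final ball $T_i$) rather than a full $6r_1$ triangle step, a small amount of bookkeeping recovers the bound in the claim. I expect this last constant-chasing to be the only delicate step; the triangle-inequality backbone is entirely routine, and the argument is oblivious to the dimension $d$, which is appropriate since the claim itself has no dependence on $d$.
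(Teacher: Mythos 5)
Your triangle-inequality backbone is sound (and your reading of the claim's $r_i$ as $r_1$ is the intended one), but as written the argument only proves a weaker constant, and the devices you invoke to close the gap do not work. The rigorous part of your proof gives a covering ball of radius $(6k-4)r_1$, which exceeds the claimed $(4k-1)r_1$ for every $k\ge 2$. Of the two tightening ideas you sketch: (i) the assertion that the last edge of the path ``need only contribute $2r_1$'' is false --- to cover the far side of the final ball $T_i$ from the penultimate centre $c_{j_{\ell-1}}$ you still need radius $\dist(c_{j_{\ell-1}},c_i)+2r_1$, which can be as large as $6r_1+2r_1=8r_1$, exactly what the ``full triangle step plus radius'' already costs, so nothing is saved; and (ii) taking $u_{i_0}$ to be a tree centre gives eccentricity at most $\lceil (t-1)/2\rceil$ for a component on $t\le k$ vertices, hence a bound of $6\lceil (t-1)/2\rceil r_1+2r_1$; this is $(3t-1)r_1$ when $t$ is odd but $(3t+2)r_1$ when $t$ is even, and in the corner case $t=k=2$ it equals $8r_1>7r_1=(4k-1)r_1$. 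So the stated constant is not recovered by your route.

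The paper's proof avoids the loss inherent in measuring all distances from one fixed centre. Take a spanning tree of $C_j$ and, for each tree edge $u_iu_{i'}$, insert a bridging ball of radius $r_1$ in the gap (possible because the set distance between $T_i$ and $T_{i'}$ is less than $2r_1$, so a ball of radius $r_1$ centred at the midpoint of a closest pair meets both balls). This yields a connected family of at most $2k-1$ balls, at most $k$ of radius $2r_1$ and at most $k-1$ of radius $r_1$, and the intersecting-balls lemma (Lemma \ref{lem:intersecting_balls}, applied inductively as in Lemma \ref{lem:covering_component}) produces a single ball of radius at most the sum of the radii, i.e.\ at most $(3k-1)r_1\le(4k-1)r_1$. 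The key difference is that merging two intersecting balls re-centres at every step and costs only the sum of the radii, whereas your fixed-centre estimate pays $6r_1$ per inter-centre hop. To be fair, for the way the claim is used any $O(k)\cdot r_1$ bound suffices (it only changes the constant $c$ in the subsequent grid-covering step), so your $(6k-4)r_1$ estimate is morally adequate for Theorem \ref{thm:ptaskandd}; but it does not prove the claim with the stated constant, and the sketched repair is not correct as stated.
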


\begin{proof}
	Consider a spanning tree of $C_j$ and for every edge $u_iu_j$ of the spanning tree, add a new ball $T_{i,j}$ of radius $r_i$ intersecting both $T_i$ and $T_j$. Now if we take the intersection graph of the $T_i$ and $T_{i,j}$, we have a set of at most $2k$ balls of radius $r_1$ and we can apply induction with Lemma \ref{lem:covering_component} to end the proof.
\end{proof}

By definition of $G$, no ball of the optimal solution can cover points in two different $P_j$. Moreover, any ball of radius at most $r_1$ covering only points in $P_j$ must have its center in $\ext(R_j, r_1)$. Since $\ext(R_j, r_1)$ is a ball of radius at most $4kr_i$, we can cover it with at most $(\frac{k}{c \epsilon})^d$ balls of radius $\frac{\epsilon r_1}{k}$ for some constant $c$. Then for every center $c^\star_i$ covering points in $P_j$, the algorithm will simply guess which of these balls it belongs to. As there is at most $(\frac{k}{c \epsilon})^d$ balls for each $C_j$, we have $(k\cdot (\frac{k}{c \epsilon})^d)^k$ possible choices for the $c_i$. Overall this gives an algorithm in time $(\frac{k}{\epsilon})^{\cO(kd)}\cdot n^3$.
\end{proof}
\section{Hardness Results}
\label{sec:hardness}

In this section, we describe our hardness results. These results show that the approximation bounds we have achieved are tight modulo small constant factors (even if we are allowed to use $f(k)n^{O(1)}$ time for any computable function $f$). Moreover, in general metric, we could show a bound based on Exponential Time Hypothesis (ETH) \cite{impagliazzo2001complexity}, that eradicates the possibility of solving the problem exactly in $f(k)n^{o(k)}$ time. First, we describe this result.     

\subsection{Hardness in General Metrics}

Our reduction is from the \emph{Dominating set} problem in general graphs. In Dominating set, given an $n$-vertex graph $G=(V,E)$ and a parameter $k$, the goal is to decide whether there is a subset $V'\subseteq V$ of size $k$ such that for any vertex $v\in V$, either $v \in V'$ or $v$ is adjacent to a vertex of $V'$. If such a subset $V'$ exists, it is called a dominating set. 

Given an instance of Dominating set, we construct a new graph $G'$ by adding a set of $kn^2$ auxilliary vertices $V_a$. Moreover, each vertex in $V$ is connected to all the vertices in $V_a$ in $G'$. Then, we construct an instance of \radii by setting $P$ to be $V\cup V_a$, $d$ to be the shortest path metric in $G'$, and $U$ to be $n^2+n$. Also the number of balls $k$ is set to be the solution size parameter in Dominating set. In the following, we use the terms vertex and point interchangeably, as the points in $P$ are exactly the vertices in $G'$. 

\begin{lemma}
 There is a dominating set in $G$ of size $k$ if and only if there is a solution to the \radii instance of cost $k$. 
\end{lemma}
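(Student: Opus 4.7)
The plan is to prove the two implications independently. For the forward direction, given a dominating set $D = \{v_1, \dots, v_k\} \subseteq V$ in $G$, I take the natural \radii solution that places a ball of radius $1$ at each $v_i$. In $G'$ every $w \in V_a$ is at distance exactly $1$ from every $v \in V$, and the dominating-set property places every $u \in V$ at distance at most $1$ from some $v_i$, so all of $P$ is covered. For the assignment I send each $u \in V$ to any ball whose center dominates $u$, and then split $V_a$ into $k$ groups of exactly $n^2$ points, one group per ball. Each ball then receives at most $n$ points of $V$ and exactly $n^2$ points of $V_a$, totaling at most $n + n^2 = U$, so the capacities are met and the total cost is $k$.

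For the backward direction, I start from any \radii solution with balls $B(c_i, r_i)$ and $\sum_i r_i = k$. The crucial structural observation is that $G'$ has diameter at most $2$: two $V$-vertices are joined through an arbitrary $V_a$-vertex, and two $V_a$-vertices through an arbitrary $V$-vertex. Consequently the ``effective capacity'' $\min(U, |B(c_i, r_i) \cap P|)$ of any ball takes only three possible values: $1$ if $r_i = 0$; $n+1$ if $r_i = 1$ with $c_i \in V_a$ (the ball then contains only $\{c_i\} \cup V$); and $U = n^2 + n$ in all remaining cases ($r_i = 1$ with $c_i \in V$, or $r_i \ge 2$, in which case the ball already covers all of $P$).

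Let $a_0, a_V, a_A, a_{\ge 2}$ denote the numbers of balls of each type. The integer constraint $\sum_i r_i = k$ with $r_i \ge 0$ gives $a_V + a_A + 2 a_{\ge 2} \le k$, hence $a_{\ge 2} \le a_0$. The covering/assignment constraint forces the sum of effective capacities to be at least $|P| = n + kn^2$; substituting $a_0 + a_V + a_A + a_{\ge 2} = k$ and simplifying yields
\[
(k-1)n \;\ge\; a_0 (n^2 + n - 1) + a_A (n^2 - 1).
\]
In the relevant parameter range $k \le n$ (with $n \ge 2$), the left-hand side is at most $(n-1)n = n^2 - n$, which is strictly smaller than both $n^2 - 1$ and $n^2 + n - 1$, so the inequality forces $a_0 = a_A = 0$; combined with $a_{\ge 2} \le a_0 = 0$ this gives $a_V = k$. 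Thus every ball has radius exactly $1$ and center in $V$. Since each $u \in V$ must be assigned to some $B(c_i, 1)$ with $c_i \in V$, the bound $\dist(u, c_i) \le 1$ in $G'$ forces $u = c_i$ or $u c_i \in E(G)$; hence $\{c_1, \dots, c_k\}$ is a dominating set of $G$ of size $k$.

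The main obstacle is the effective-capacity bookkeeping in the backward direction. The choice $|V_a| = kn^2$ and $U = n^2 + n$ is calibrated precisely so that the ``slack'' $(k-1)n$ is smaller than both $n^2 - 1$ and $n^2 + n - 1$ whenever $k \le n$, so even a single defective ball (radius $0$, radius $\ge 2$, or radius $1$ centered in $V_a$) breaks feasibility. Pinning down $a_0 = a_A = a_{\ge 2} = 0$ cleanly, and noting that the degenerate regime $k > n$ is irrelevant since no dominating set of size $k$ can exist in an $n$-vertex graph, is where the argument must be careful.
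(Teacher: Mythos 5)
Your proof is correct and takes essentially the same route as the paper: the forward direction is identical, and your backward direction is the paper's capacity/pigeonhole argument (ruling out radius-$0$ balls, radius-$1$ balls centered in $V_a$, and implicitly radius-$\ge 2$ balls via the radius budget) repackaged as a single counting inequality over ball types. The only cosmetic slip is asserting that a radius-$1$ ball centered in $V$ has effective capacity exactly $U$ (which can fail when $k=1$), but since you only use $U$ as an upper bound in the necessary condition $\sum_i \min(U,|B_i\cap P|)\ge |P|$, nothing breaks.
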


\begin{proof}
 Suppose there is a dominating set $V'$ of size $k$ in $G$. Consider the set of $k$ balls $\mathcal{B}=\{B(v',1)\mid v'\in V'\}$. The cost of $\mathcal{B}$ is $k$. We prove that $\mathcal{B}$ is a valid solution to \radii by showing the existence of an assignment $\mu$ that satisfies the capacity constraints. For each vertex $v\in V$, assign it to a ball $B(v',1)$ such that either $v=v'$ or $v$ is adjacent to $v'$. Note that each ball is assigned at most $n$ vertices so far. Then consider any partition of $V_a$ into $k$ equal sized subsets and assign the vertices in each subset to a unique ball in $\mathcal{B}$. Thus, $\mu$ assigns each vertex in $P$ to a ball, and each ball is assigned at most $n^2+n$ vertices. As all vertices of $V_a$ are adjacent to each vertex $v'\in V'$ by construction, the vertices in $V_a$ are in the ball $B(v',1)$.  Hence, $\mu$ is a valid assignment.      
 
 Now, suppose there is a solution to \radii of cost $k$. First, we claim that the radius of each ball in this solution is at least 1, and hence exactly 1 due to the cost $k$ of the solution. Suppose not, i.e., there is a ball of radius 0 in the solution. But, then no other vertex than the center of this ball can be assigned to it. So, the vertices in $V_a$ are assigned to the remaining $k-1$ balls. By pigeon hole principle, there is a ball that receives at least $kn^2/(k-1)>=n^2+(n^2/(k-1))>n^2+n$ vertices. But, this is a contradiction, as the capacity is $n^2+n$. Thus, the solution contains exactly $k$ balls of radius 1. For a similar reason, the center of these balls must be in $P\setminus V_a$. As each vertex in $V$ is assigned to a ball of radius 1 in the solution, either it is a center of a ball or is adjacent to a center. It follows that the set of centers of these balls is a dominating set of size $k$. 
\end{proof}

The above lemma shows that \radii is as hard as Dominating set. Now, it is widely known that it is not possible to solve Dominating set in $f(k)n^{o(k)}$ time, unless ETH is false \cite{cygan2015parameterized}. Thus, we obtain the same lower bound for \radii. Moreover, the constructed instance in the above reduction has the property that all input distances are bounded by a polynomial in the input size. Thus, \radii is also strongly   \np-hard. So, it does not admit any FPTAS, unless \p=\np. Note that an FPTAS is a $(1+\epsilon)$-approximation in $n^{O(1)}$ time for any $\epsilon > 0$.  

\genhard*

\subsection{Hardness in the Euclidean Case}

We show by a reduction from $k$-clique that \radii does not admit any FPTAS even when the number of balls in the solution is only 2. The reduction is partly motivated by the one in \cite{shenmaier2013problem} that proves hardness of Smallest $k$-Enclosing ball, where given a set of $n$ points in $\R^d$, the goal is to find a ball of smallest radius that encloses at least $k < n$ points. Our reduction is similar, except we need to take one more ball to cover all the points. However, we cannot just use the hardness of Smallest $k$-Enclosing ball directly, as we need some additional properties of the given instance which we have to reduce. Instead, we follow their direction and show a separate reduction from $k$-clique. Our details are much more involved. 

In $k$-clique, we are given an $n$-vertex graph $G=(V,E)$, and the goal is to decide whether there is a clique of size $k$. Such a clique is called a $k$-clique. We can assume wlog that $k \le n/2$. If $k > n/2$, we can make two copies of $G$ and join two vertices in these two copies by a path of length 3, and solve the $k$-clique problem in the new graph. It is not hard to see that $G$ has a $k$-clique if and only if the new graph has a $k$-clique. The idea is, no clique in the new graph can contain the edges of the 3-path. Thus, any $k$-clique must be contained in a single copy of the graph. Now, for the new graph $k$ is at most half of the total number of vertices. 

Brandes et al.~\cite{brandes2016cliques} showed a regularization lemma, which given any graph $G$, constructs an $s$-regular graph $G_s$ such that $G$ has a $k$-clique if and only if $G_s$ has a $k$-clique. By applying, this regularization lemma, we can also assume that the graph is $s$ regular. It follows that this restricted version is also \np-hard. 

Our reduction is as follows. Let $|V|=n$ and $|E|=m$. For each vertex $v\in V$, we construct a binary point of dimension $m+2$ where for $1\le i\le m$ the value of the $i$-th coordinate is 1 if the $i$-th edge contains $v$ as an endpoint; otherwise the value is 0. Note that the values of $(m+1)$-th and $(m+2)$-th coordinates are 0 for all these points. Let $P'$ be the set of these points. Let $\widehat{\Delta}$ be the maximum interpoint distance (diameter) of these points, and ${\Delta}=\widehat{\Delta}+2\sqrt{n}$. We also add four special points whose last 2 coordinates are $(\Delta,\Delta), (\Delta,-\Delta), (-\Delta,\Delta), (-\Delta,-\Delta)$ and other coordinates are 0. Let us denote these points by $p_{++}, p_{+-}, p_{-+}$ and $p_{--}$, respectively. Denote the set by $P$ obtained by adding these four points to $P'$. Set the capacity $U$ to $n+4-k$. Also the number of balls that need to be selected in the constructed instance is 2. Let $c_0$ be the point whose all coordinates are 0.   

\begin{observation}\label{obs:1cover}
 One needs a ball of radius at least $\sqrt{2}\Delta$ to enclose the points in $P$. Also the ball $B(c_0,\sqrt{2}\Delta)$ encloses all the points in $P$. 
\end{observation}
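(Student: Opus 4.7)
The plan is to handle the two claims independently, using only elementary Euclidean distance computations between the four ``corner'' points $p_{++}, p_{+-}, p_{-+}, p_{--}$ and the origin $c_0$.

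For the lower bound, I would focus on the pair $p_{++}$ and $p_{--}$. These two points agree in every coordinate except the last two, and differ by $2\Delta$ in each of those two coordinates. Hence their Euclidean distance is exactly $\sqrt{(2\Delta)^2 + (2\Delta)^2} = 2\sqrt{2}\,\Delta$. Since any ball enclosing both of these points has diameter at least $2\sqrt{2}\,\Delta$, its radius must be at least $\sqrt{2}\,\Delta$. As both points lie in $P$, this gives the desired lower bound.

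For the upper bound, I would show that every point of $P$ lies within distance $\sqrt{2}\,\Delta$ of $c_0$. For each of the four special points $p_{\pm\pm}$, the distance from $c_0$ is $\sqrt{\Delta^2 + \Delta^2} = \sqrt{2}\,\Delta$, so these lie exactly on the boundary of $B(c_0,\sqrt{2}\,\Delta)$. For a point in $P'$ corresponding to a vertex $v$ of $G$, the coordinates are binary with a $1$ in position $i$ precisely when $v$ is an endpoint of the $i$-th edge, so the number of $1$'s equals $\deg(v) \le n-1$, and its Euclidean distance to $c_0$ is $\sqrt{\deg(v)} \le \sqrt{n}$. Since $\Delta = \widehat{\Delta} + 2\sqrt{n} \ge 2\sqrt{n}$, we have $\sqrt{2}\,\Delta \ge 2\sqrt{2n} > \sqrt{n}$, so all points of $P'$ lie in $B(c_0,\sqrt{2}\,\Delta)$, completing the argument.

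There is no real obstacle here: both parts are immediate distance calculations, and the slack $2\sqrt{n}$ added to $\widehat{\Delta}$ in the definition of $\Delta$ is precisely what guarantees the $P'$ points fit comfortably inside the ball around $c_0$.
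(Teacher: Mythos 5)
Your proposal is correct and follows essentially the same argument as the paper: the lower bound comes from the distance $2\sqrt{2}\,\Delta$ between $p_{++}$ and $p_{--}$, and the upper bound from checking that the four special points are at distance exactly $\sqrt{2}\,\Delta$ from $c_0$ while each point of $P'$ has at most $n-1$ nonzero coordinates and hence lies well inside the ball thanks to $\Delta \ge 2\sqrt{n}$. No gaps.
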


\begin{proof}
Note that the distance between $p_{++}$ and $p_{--}$ is $\sqrt{(2\Delta)^2+(2\Delta)^2}=2\sqrt{2}\Delta$. Thus the diameter of this point set is at least $2\sqrt{2}\Delta$, and hence one needs a ball of radius at least $\sqrt{2}\Delta$ to enclose all the points. 

It is easy to see that $B(c_0,\sqrt{2}\Delta)$ contains the four special points, as their distances from $c_0$ are exactly $\sqrt{2}\Delta$. Also, the distance between any point $p\in P'$ and $c_0$ is at most $\sqrt{n-1}\le \sqrt{2}\Delta$, as at most $s\le n-1$ coordinates of $p$ can be one. Hence, the observation follows. 
\end{proof}

We need the following lemma from \cite{shenmaier2013problem}. 

\begin{lemma}\label{lem:minballradius}
 \cite{shenmaier2013problem} Denote by $R$ the radius of a smallest radius ball enclosing $k$ points of $P'$ and let $A=(1-1/k)(s-1)$. Then $R^2\le A$ if $G$ contains a $k$ clique; otherwise, $R^2\ge A+1/k^2$.   
\end{lemma}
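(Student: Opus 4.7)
The plan is to exploit the very rigid structure of the encoding: since $G$ is $s$-regular, every point $p_v \in P'$ is a $0/1$-vector with exactly $s$ ones, so $\|p_v\|^2 = s$, and for $u \neq v$ the inner product $\langle p_u, p_v\rangle$ counts the number of shared incident edges, which equals $1$ if $uv \in E$ and $0$ otherwise. Thus for any choice of $k$ vertices $v_1,\dots,v_k$ with $e$ edges in the induced subgraph $G[\{v_1,\dots,v_k\}]$, one has
\[
\sum_{i,j=1}^{k}\langle p_{v_i}, p_{v_j}\rangle = ks + 2e.
\]
This is the only combinatorial input to the proof; everything else is Euclidean bookkeeping.

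The main step is to relate the MEB radius to the centroid $c^\star := \frac{1}{k}\sum_i p_{v_i}$. Using $\|c^\star\|^2 = (ks+2e)/k^2$ and the identity $\sum_i \|p_{v_i} - c^\star\|^2 = \sum_i \|p_{v_i}\|^2 - k\|c^\star\|^2$, one obtains
\[
\sum_{i=1}^{k}\|p_{v_i} - c^\star\|^2 \;=\; ks - \frac{ks+2e}{k} \;=\; \frac{ks(k-1)-2e}{k}.
\]
Now let $c$ denote the center of the MEB of $\{p_{v_1},\dots,p_{v_k}\}$. Since each $\|p_{v_i}-c\|^2 \le R^2$ we have $\sum_i \|p_{v_i}-c\|^2 \le kR^2$; and since $c^\star$ minimises $\sum_i \|p_{v_i}-x\|^2$ over $x \in \mathbb{R}^{m+2}$, we have $\sum_i \|p_{v_i}-c^\star\|^2 \le \sum_i \|p_{v_i}-c\|^2$. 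Chaining these gives the key inequality
\[
R^2 \;\ge\; \frac{s(k-1)}{k} - \frac{2e}{k^2}.
\]

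With this in hand both halves of the lemma fall out. If $G$ contains a $k$-clique, pick $v_1,\dots,v_k$ to be its vertices; then $e = \binom{k}{2}$ and the inequality above becomes $R^2 \ge \frac{(s-1)(k-1)}{k} = A$. To show $R^2 \le A$, verify directly that every $p_{v_i}$ is at squared distance exactly $A$ from $c^\star$ (a short calculation using $\langle p_{v_i}, p_{v_j}\rangle = 1$ for $i\neq j$ and the formula for $\|c^\star\|^2$); thus the ball of radius $\sqrt{A}$ about $c^\star$ encloses all $k$ points, so $R^2 = A$. Conversely, if $G$ has no $k$-clique, then for every $k$-subset the induced edge count satisfies $e \le \binom{k}{2}-1$, so the lower bound yields
\[
R^2 \;\ge\; \frac{s(k-1)}{k} - \frac{k(k-1)-2}{k^2} \;=\; A + \frac{2}{k^2} \;\ge\; A + \frac{1}{k^2}.
\]

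The only mildly delicate point is the upper bound $R^2 \le A$ in the clique case: one must notice that because all $k$ points are equidistant from $c^\star$, the ball $B(c^\star,\sqrt{A})$ is an enclosing ball, and that suffices (we do not need that it is the MEB, only that $R \le \sqrt{A}$). Everything else is an algebraic identity plus the two standard facts that the MEB radius upper-bounds every distance from its center and that the centroid minimises the sum of squared distances. I expect no real obstacle; the proof is essentially a moment computation tailored to the incidence encoding, and the integrality gap of $1$ in the edge count (coming from ``not a clique $\Rightarrow e \le \binom{k}{2}-1$'') is what creates the additive $\Omega(1/k^2)$ gap in $R^2$.
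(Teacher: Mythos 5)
Your proof is correct: the incidence encoding of an $s$-regular graph gives $\|p_v\|^2=s$ and $\langle p_u,p_v\rangle=\mathbf{1}[uv\in E]$, your centroid identity and the two standard facts (MEB radius bounds each distance from its center; the centroid minimizes the sum of squared distances) yield $R_S^2\ge \frac{s(k-1)}{k}-\frac{2e}{k^2}$ for every $k$-subset, and the equidistance computation in the clique case gives $R^2\le A$, so both halves follow (your no-clique bound $A+2/k^2$ is even slightly stronger than the stated $A+1/k^2$). Note that the paper does not prove this lemma itself but imports it from Shenmaier (2013); your second-moment/centroid argument is essentially the standard derivation underlying that cited result, so there is no divergence to report.
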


The next lemma shows the connections between the two instances. 

\begin{lemma}
If there is a $k$-clique in $G$, then there is a solution to the \radii instance of cost at most $\sqrt{2}\Delta+\sqrt{A}$. Otherwise, any solution has cost at least $\sqrt{2}\Delta+\sqrt{A+1/k^2}$.   
\end{lemma}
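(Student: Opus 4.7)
The plan is to prove the two inequalities separately. The forward direction is short: given a $k$-clique on vertex set $V_k\subseteq V$, I would apply Lemma~\ref{lem:minballradius} to enclose the $k$ corresponding $P'$-points in a ball $B_1$ of radius at most $\sqrt{A}$, and pair it with $B_2:=B(c_0,\sqrt{2}\Delta)$, which contains all of $P$ by Observation~\ref{obs:1cover}. The assignment sends the $k$ clique-points to $B_1$ and the remaining $n+4-k$ points to $B_2$; since $U=n+4-k$ and $k\le n/2\le U$, both capacities hold, and the total cost is at most $\sqrt{A}+\sqrt{2}\Delta$.

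For the backward direction, suppose $G$ has no $k$-clique and fix any valid pair of balls with radii $r_1,r_2$. Since $|P|=n+4$ and each ball has capacity $U=n+4-k$, each ball must contain at least $k$ points. Writing $k_i$ for the number of $P'$-points in ball $i$, one has $k_1+k_2=n\ge 2k$, so by pigeonhole at least one $k_i\ge k$, and Lemma~\ref{lem:minballradius} then contributes a radius of at least $\sqrt{A+1/k^2}$. The remaining work is to extract an additional $\sqrt{2}\Delta$ from the placement of the four special points, which I would do by a case analysis on how the four corners of the ``square'' $\{p_{++},p_{+-},p_{-+},p_{--}\}$ (in the last two coordinates) are split between the two balls.

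In Case I, some ball contains a diagonally opposite pair (one of $\{p_{++},p_{--}\}$ or $\{p_{+-},p_{-+}\}$, at distance $2\sqrt{2}\Delta$); that ball then has radius at least $\sqrt{2}\Delta$. The other ball contributes at least $\sqrt{A+1/k^2}$: either directly from Lemma~\ref{lem:minballradius} if it holds $\ge k$ $P'$-points, or, when it does not, from the fact that any special point sits at distance at least $\sqrt{2}\Delta$ from any $P'$-point, so containing one of each forces the radius to be $\ge \Delta/\sqrt{2}$, and $\Delta\ge 2\sqrt{n}$ ensures $\Delta/\sqrt{2}\ge\sqrt{A+1/k^2}$. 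In Case II no ball contains a diagonal pair, so the specials split $2$--$2$ and each ball contains an adjacent pair at distance $2\Delta$, forcing $r_1,r_2\ge\Delta$ and hence $r_1+r_2\ge 2\Delta$. I would close by checking $(2-\sqrt{2})\Delta\ge\sqrt{A+1/k^2}$, which follows from $\Delta\ge 2\sqrt{n}$ together with $\sqrt{A+1/k^2}\le\sqrt{s}\le\sqrt{n-1}$.

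The main obstacle is Case II, since there the $\sqrt{2}\Delta$ contribution is not localized in a single ball but must be synthesized from two $\Delta$ terms; the $2\sqrt{n}$ additive cushion in the definition $\Delta=\widehat{\Delta}+2\sqrt{n}$ is precisely what makes $(2-\sqrt{2})\Delta$ dominate $\sqrt{A+1/k^2}$, and pinning down that quantitative margin is the one delicate calculation. A minor point is to restrict to $k\ge 5$ (for smaller $k$ the clique problem is polynomial, so the reduction adds no hardness there anyway), which guarantees that any ball containing $\ge k$ points must contain at least one $P'$-point, keeping the distance argument in Case I clean.
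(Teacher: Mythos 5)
Your proof is correct and follows essentially the same route as the paper: the identical two-ball construction for the forward direction, and for the converse the same case analysis on how the four special points split between the two balls, using the capacity bound to force at least $k$ assigned points into the second ball together with Lemma~\ref{lem:minballradius} and the distance estimates $2\sqrt{2}\Delta$ (diagonal pair), $2\Delta$ (adjacent pair), $\sqrt{2}\Delta$ (special point to $P'$-point) and $\sqrt{A+1/k^2}<\Delta/2$. The only cosmetic differences are that the paper first normalizes one ball to $B(c_0,\sqrt{2}\Delta)$ via a replacement argument instead of summing $2\Delta\ge\sqrt{2}\Delta+\sqrt{A+1/k^2}$ directly in the $2$--$2$ split case, and your explicit $k\ge 5$ restriction just makes precise a small-$k$ corner case that the paper leaves implicit.
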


\begin{proof}
 Suppose there is a $k$-clique in $G$. By Lemma \ref{lem:minballradius}, there is a ball $B$ of radius at most $\sqrt{A}$ that contains $k$ points. We select the ball $B(c_0,\sqrt{2}\Delta)$ and $B$ as the solution. The cost is at most $\sqrt{2}\Delta+\sqrt{A}$. Assign any $k$ points in $B$ to it.  Assign the remaining $n+4-k=U$ points to $B(c_0,\sqrt{2}\Delta)$.  By Observation \ref{obs:1cover}, $B(c_0,\sqrt{2}\Delta)$ contains all these $n+4-k$ points. Note that $n+4-k\ge n/2$, as $k\le n/2$, and thus $B$ is assigned $k\le n/2\le n+4-k$ points. Thus the capacities are satisfied, and hence the solution constructed is a valid solution. 
 
 Now, suppose there is no $k$-clique. Consider any solution $S$ with two balls $B_1$ and $B_2$. We show that the cost of $S$ is at least $\sqrt{2}\Delta+\sqrt{A+1/k^2}$. First, we prove that one can assume that $B(c_0,\sqrt{2}\Delta)$ is selected in $S$. Wlog, let $p_{++}$ is in $B_1$. If $p_{--}$ is also in $B_1$, then radius of $B_1$ is at least $\sqrt{2}\Delta$, and we can replace $B_1$ by $B(c_0,\sqrt{2}\Delta)$. So, assume that $p_{--}$ is in $B_2$. Similarly, one can argue that $p_{+-}$  and $p_{-+}$ are in different balls in $S$. Suppose, $p_{+-}$ is in $B_1$ and $p_{-+}$ is in $B_2$. Then the radius of $B_1$ is at least $\sqrt{(\Delta+\Delta)^2}/2=\Delta$, and the radius of $B_2$ is at least $\Delta$. So, cost of $S_1$ is at least $2\Delta$. Now,
 
 \begin{align*}
  \sqrt{A+1/k^2} < \sqrt{s+1/k^2}
  < \sqrt{s+1}\le \sqrt{s}+1< \sqrt{
  n}\le \Delta/2.
 \end{align*}
 
The last inequality follows, as $\Delta \ge 2\sqrt{n}$.  
Thus, in this case the cost of $S$ is at least $2\Delta > \sqrt{2}\Delta+\Delta/2>\sqrt{2}\Delta+\sqrt{A+1/k^2}$. Similarly, the cost is at least $\sqrt{2}\Delta+\sqrt{A+1/k^2}$ if $p_{+-}$ is in $B_2$ and $p_{-+}$ is in $B_1$. Hence, wlog, we can assume that $B_1=B(c_0,\sqrt{2}\Delta)$ is in $S$, and $B_1$ contains both $p_{++}$ and $p_{--}$. 

Now, $B(c_0,\sqrt{2}\Delta)$ can be assigned at most $U=n+4-k$ points. Thus, $B_2$ must be assigned at least $k$ points. Suppose $p_{+-}$ is in $B_2$. Note that $p_{-+}$ is in $B_1$, otherwise, the radius of $B_2$ is at least $\sqrt{2}\Delta\ge \sqrt{A+1/k^2}$ and we are done. So, assume that  $p_{+-}$ is in $B_2$. In this case, as $B_2$ contains at least one more point (of $P'$), the diameter of $B_2$ is at least $\sqrt{\Delta^2+\Delta^2}=\sqrt{2}\Delta$. Thus, the radius of $B_2$ is at least $\Delta/\sqrt{2}> \Delta/2>\sqrt{A+1/k^2}$, and we are done. Similarly, we can rule out the case when $p_{-+}$ is in $B_2$. Hence, the only case we are left with is where $B_2$ contains  points  only from $P'$. But, in this case, by Lemma \ref{lem:minballradius}, the radius of $B_2$ is at least $\sqrt{A+1/k^2}$. Hence, the lemma follows. 
\end{proof}

This already proves the \np-hardness of our problem in the Euclidean case. Next, we argue that our problem does not admit any FPTAS. To do that it is sufficient to show a multiplicative gap between the cost of the \radii instances corresponding to Yes instances of $k$-clique and the cost of the \radii instances corresponding to No instances of $k$-clique. 

First, note that the diameter $\widehat{\Delta}$ of the points in $P'$ is bounded by $n^2$, as each point is a 0-1 vector with at most $n-1$ 1 components. Thus, $\Delta\le \alpha n^2$ for some constant $\alpha > 1$. . Also, $\sqrt{A}\le n$. Thus, 

\begin{align*}
 \sqrt{2}\Delta+\sqrt{A+1/k^2} & >  \sqrt{2}\Delta+\sqrt{A}(1+1/(3Ak^2))\\ 
 & = \sqrt{2}\Delta+\sqrt{A}+\sqrt{A}/(6Ak^2)+1/(6\sqrt{A}k^2)\\
 & > \sqrt{2}\Delta+\sqrt{A}+\sqrt{A}/(6n^3)+\sqrt{2}\Delta/(6n^3\cdot \sqrt{2}\Delta)\\
 & > \sqrt{2}\Delta+\sqrt{A}+\sqrt{A}/(12\alpha  n^5)+\sqrt{2}\Delta/(12\alpha n^5)\\
 & > (\sqrt{2}\Delta+\sqrt{A})(1+1/(12\alpha n^5))
\end{align*}

Thus, we could show a gap of $(1+1/(12\alpha  n^5))$. Thus, if there is a $(1+\epsilon)$-approximation algorithm for \radii for all $\epsilon > 0$ in $n^{O(1)}$ time, we can distinguish between Yes and No instances of $k$-clique in polynomial time. Hence, we have the following theorem. 

\euclidhard*
\section{Conclusions}
\label{sec:conclude}
In this paper, considering the \radii problem, we obtained the first constant-factor ($15+\epsilon$) approximation algorithm that runs in \FPT time, making significant progress towards understanding the barriers of capacitated clustering. While our techniques are tailor-made for FPT type results, we hope some of the ideas will also be useful in obtaining a similar approximation in polynomial time. We leave this as an open question. 

\begin{tcolorbox}
	\begin{description}
	\setlength{\itemsep}{-2pt}
	\item[Question $1$:] Does \radii admit 
	a polynomial time constant-approximation algorithm, even with uniform capacities?
	\end{description}
\end{tcolorbox}

For the problem with uniform capacities, we obtained improved approximation bounds of $4+\epsilon$ and $2+\epsilon$ in general and Euclidean metric spaces, respectively. We also showed hardness bounds in both general and Euclidean metric spaces complementing our approximation results. The following two natural open questions are left by our work.

\begin{tcolorbox}
\begin{description}
\setlength{\itemsep}{-2pt}
\item[Question $2$:] What is the best constant-factor approximation possible for \radii or uniform \radii 

in \FPT\ time? 
\item[Question $3$:] Does Euclidean \radii admit 
an $(1+\epsilon)$-approximation algorithm, in $f(k,\epsilon)\cdot n^{g(\epsilon)}$ time for some functions $f$ and $g$? 
\end{description}
\end{tcolorbox}

\bibliographystyle{plain}
\bibliography{balls}

\begin{thebibliography}{10}

\bibitem{DBLP:conf/esa/AdamczykBMM019}
Marek Adamczyk, Jaroslaw Byrka, Jan Marcinkowski, Syed~Mohammad Meesum, and
  Michal Wlodarczyk.
\newblock Constant-factor {FPT} approximation for capacitated k-median.
\newblock In Michael~A. Bender, Ola Svensson, and Grzegorz Herman, editors,
  {\em 27th Annual European Symposium on Algorithms, {ESA} 2019, September
  9-11, 2019, Munich/Garching, Germany}, volume 144 of {\em LIPIcs}, pages
  1:1--1:14. Schloss Dagstuhl - Leibniz-Zentrum f{\"{u}}r Informatik, 2019.

\bibitem{ahmadian2019better}
Sara Ahmadian, Ashkan Norouzi-Fard, Ola Svensson, and Justin Ward.
\newblock Better guarantees for k-means and euclidean k-median by primal-dual
  algorithms.
\newblock {\em SIAM Journal on Computing}, 49(4):FOCS17--97, 2019.

\bibitem{DBLP:journals/mp/AnBCGMS15}
Hyung{-}Chan An, Aditya Bhaskara, Chandra Chekuri, Shalmoli Gupta, Vivek Madan,
  and Ola Svensson.
\newblock Centrality of trees for capacitated k-center.
\newblock {\em Math. Program.}, 154(1-2):29--53, 2015.

\bibitem{AryaGKMMP-SIAMJ04}
Vijay Arya, Naveen Garg, Rohit Khandekar, Adam Meyerson, Kamesh Munagala, and
  Vinayaka Pandit.
\newblock Local search heuristics for k-median and facility location problems.
\newblock {\em {SIAM} J. Comput.}, 33(3):544--562, 2004.

\bibitem{DBLP:journals/comgeo/BadoiuC08}
Mihai Badoiu and Kenneth~L. Clarkson.
\newblock Optimal core-sets for balls.
\newblock {\em Comput. Geom.}, 40(1):14--22, 2008.

\bibitem{Badoui2002}
Mihai Badoui, Sariel Har-Peled, and Piotr Indyk.
\newblock Approximate clustering via core-sets.
\newblock STOC '02, page 250–257, New York, NY, USA, 2002. Association for
  Computing Machinery.

\bibitem{DBLP:conf/isaac/BandyapadhyayV16}
Sayan Bandyapadhyay and Kasturi~R. Varadarajan.
\newblock Approximate clustering via metric partitioning.
\newblock In Seok{-}Hee Hong, editor, {\em 27th International Symposium on
  Algorithms and Computation, {ISAAC} 2016, December 12-14, 2016, Sydney,
  Australia}, volume~64 of {\em LIPIcs}, pages 15:1--15:13. Schloss Dagstuhl -
  Leibniz-Zentrum f{\"{u}}r Informatik, 2016.

\bibitem{barilan1993allocate}
Judit Barilan, Guy Kortsarz, and David Peleg.
\newblock How to allocate network centers.
\newblock {\em Journal of Algorithms}, 15(3):385--415, 1993.

\bibitem{DBLP:journals/algorithmica/BehsazS15}
Babak Behsaz and Mohammad~R. Salavatipour.
\newblock On minimum sum of radii and diameters clustering.
\newblock {\em Algorithmica}, 73(1):143--165, 2015.

\bibitem{DBLP:journals/mst/BhattacharyaJK18}
Anup Bhattacharya, Ragesh Jaiswal, and Amit Kumar.
\newblock Faster algorithms for the constrained k-means problem.
\newblock {\em Theory Comput. Syst.}, 62(1):93--115, 2018.

\bibitem{brandes2016cliques}
Ulrik Brandes, Eugenia Holm, and Andreas Karrenbauer.
\newblock Cliques in regular graphs and the core-periphery problem in social
  networks.
\newblock In {\em International Conference on Combinatorial Optimization and
  Applications}, pages 175--186. Springer, 2016.

\bibitem{ByrkaFRS15}
Jaroslaw Byrka, Krzysztof Fleszar, Bartosz Rybicki, and Joachim Spoerhase.
\newblock Bi-factor approximation algorithms for hard capacitated
  \emph{k}-median problems.
\newblock In Piotr Indyk, editor, {\em Proceedings of the Twenty-Sixth Annual
  {ACM-SIAM} Symposium on Discrete Algorithms, {SODA} 2015, San Diego, CA, USA,
  January 4-6, 2015}, pages 722--736. {SIAM}, 2015.

\bibitem{byrka2014improved}
Jaros{\l}aw Byrka, Thomas Pensyl, Bartosz Rybicki, Aravind Srinivasan, and Khoa
  Trinh.
\newblock An improved approximation for k-median, and positive correlation in
  budgeted optimization.
\newblock In {\em Proceedings of the twenty-sixth annual ACM-SIAM symposium on
  Discrete algorithms}, pages 737--756. SIAM, 2014.

\bibitem{ByrkaRU16}
Jaroslaw Byrka, Bartosz Rybicki, and Sumedha Uniyal.
\newblock An approximation algorithm for uniform capacitated k-median problem
  with 1+{\textbackslash}epsilon capacity violation.
\newblock In Quentin Louveaux and Martin Skutella, editors, {\em Integer
  Programming and Combinatorial Optimization - 18th International Conference,
  {IPCO} 2016, Li{\`{e}}ge, Belgium, June 1-3, 2016, Proceedings}, volume 9682
  of {\em Lecture Notes in Computer Science}, pages 262--274. Springer, 2016.

\bibitem{CharikarGTS02}
Moses Charikar, Sudipto Guha, {\'{E}}va Tardos, and David~B. Shmoys.
\newblock A constant-factor approximation algorithm for the k-median problem.
\newblock {\em J. Comput. Syst. Sci.}, 65(1):129--149, 2002.

\bibitem{DBLP:journals/jcss/CharikarP04}
Moses Charikar and Rina Panigrahy.
\newblock Clustering to minimize the sum of cluster diameters.
\newblock {\em J. Comput. Syst. Sci.}, 68(2):417--441, 2004.

\bibitem{ChuzhoyR05}
Julia Chuzhoy and Yuval Rabani.
\newblock Approximating k-median with non-uniform capacities.
\newblock In {\em Proceedings of the Sixteenth Annual {ACM-SIAM} Symposium on
  Discrete Algorithms, {SODA} 2005, Vancouver, British Columbia, Canada,
  January 23-25, 2005}, pages 952--958. {SIAM}, 2005.

\bibitem{DBLP:conf/soda/Cohen-Addad20}
Vincent Cohen{-}Addad.
\newblock Approximation schemes for capacitated clustering in doubling metrics.
\newblock In Shuchi Chawla, editor, {\em Proceedings of the 2020 {ACM-SIAM}
  Symposium on Discrete Algorithms, {SODA} 2020, Salt Lake City, UT, USA,
  January 5-8, 2020}, pages 2241--2259. {SIAM}, 2020.

\bibitem{cohen2019tight}
Vincent Cohen-Addad, Anupam Gupta, Amit Kumar, Euiwoong Lee, and Jason Li.
\newblock Tight fpt approximations for $ k $-median and $ k $-means.
\newblock {\em arXiv preprint arXiv:1904.12334}, 2019.

\bibitem{cohen2019inapproximability}
Vincent Cohen-Addad and CS~Karthik.
\newblock Inapproximability of clustering in lp metrics.
\newblock In {\em 2019 IEEE 60th Annual Symposium on Foundations of Computer
  Science (FOCS)}, pages 519--539. IEEE, 2019.

\bibitem{cohen2021approximability}
Vincent Cohen-Addad, CS~Karthik, and Euiwoong Lee.
\newblock On approximability of clustering problems without candidate centers.
\newblock In {\em Proceedings of the 2021 ACM-SIAM Symposium on Discrete
  Algorithms (SODA)}, pages 2635--2648. SIAM, 2021.

\bibitem{cohen2022johnson}
Vincent Cohen-Addad and Euiwoong Lee.
\newblock Johnson coverage hypothesis: Inapproximability of k-means and
  k-median in ?p-metrics.
\newblock In {\em Proceedings of the 2022 Annual ACM-SIAM Symposium on Discrete
  Algorithms (SODA)}, pages 1493--1530. SIAM, 2022.

\bibitem{DBLP:conf/icalp/Cohen-AddadL19}
Vincent Cohen{-}Addad and Jason Li.
\newblock On the fixed-parameter tractability of capacitated clustering.
\newblock In Christel Baier, Ioannis Chatzigiannakis, Paola Flocchini, and
  Stefano Leonardi, editors, {\em 46th International Colloquium on Automata,
  Languages, and Programming, {ICALP} 2019, July 9-12, 2019, Patras, Greece},
  volume 132 of {\em LIPIcs}, pages 41:1--41:14. Schloss Dagstuhl -
  Leibniz-Zentrum f{\"{u}}r Informatik, 2019.

\bibitem{cygan2015parameterized}
Marek Cygan, Fedor~V Fomin, {\L}ukasz Kowalik, Daniel Lokshtanov, D{\'a}niel
  Marx, Marcin Pilipczuk, Micha{\l} Pilipczuk, and Saket Saurabh.
\newblock {\em Parameterized algorithms}, volume~5.
\newblock Springer, 2015.

\bibitem{DBLP:conf/focs/CyganHK12}
Marek Cygan, MohammadTaghi Hajiaghayi, and Samir Khuller.
\newblock {LP} rounding for k-centers with non-uniform hard capacities.
\newblock In {\em 53rd Annual {IEEE} Symposium on Foundations of Computer
  Science, {FOCS} 2012, New Brunswick, NJ, USA, October 20-23, 2012}, pages
  273--282. {IEEE} Computer Society, 2012.

\bibitem{DemirciL16}
H.~G{\"{o}}kalp Demirci and Shi Li.
\newblock Constant approximation for capacitated k-median with
  (1+epsilon)-capacity violation.
\newblock In {\em 43rd International Colloquium on Automata, Languages, and
  Programming, {ICALP} 2016, July 11-15, 2016, Rome, Italy}, pages 73:1--73:14,
  2016.

\bibitem{Ding20}
Hu~Ding.
\newblock A sub-linear time framework for geometric optimization with outliers
  in high dimensions.
\newblock In Fabrizio Grandoni, Grzegorz Herman, and Peter Sanders, editors,
  {\em 28th Annual European Symposium on Algorithms, {ESA} 2020, September 7-9,
  2020, Pisa, Italy (Virtual Conference)}, volume 173 of {\em LIPIcs}, pages
  38:1--38:21. Schloss Dagstuhl - Leibniz-Zentrum f{\"{u}}r Informatik, 2020.

\bibitem{DBLP:journals/algorithmica/DingX20}
Hu~Ding and Jinhui Xu.
\newblock A unified framework for clustering constrained data without locality
  property.
\newblock {\em Algorithmica}, 82(4):808--852, 2020.

\bibitem{DBLP:journals/njc/DoddiMRTW00}
Srinivas Doddi, Madhav~V. Marathe, S.~S. Ravi, David~Scot Taylor, and Peter
  Widmayer.
\newblock Approximation algorithms for clustering to minimize the sum of
  diameters.
\newblock {\em Nord. J. Comput.}, 7(3):185--203, 2000.

\bibitem{FriggstadJ22}
Zachary Friggstad and Mahya Jamshidian.
\newblock Improved polynomial-time approximations for clustering with minimum
  sum of radii or diameters.
\newblock In Shiri Chechik, Gonzalo Navarro, Eva Rotenberg, and Grzegorz
  Herman, editors, {\em 30th Annual European Symposium on Algorithms, {ESA}
  2022, September 5-9, 2022, Berlin/Potsdam, Germany}, volume 244 of {\em
  LIPIcs}, pages 56:1--56:14. Schloss Dagstuhl - Leibniz-Zentrum f{\"{u}}r
  Informatik, 2022.

\bibitem{DBLP:journals/algorithmica/GibsonKKPV10}
Matt Gibson, Gaurav Kanade, Erik Krohn, Imran~A. Pirwani, and Kasturi~R.
  Varadarajan.
\newblock On metric clustering to minimize the sum of radii.
\newblock {\em Algorithmica}, 57(3):484--498, 2010.

\bibitem{DBLP:journals/siamcomp/GibsonKKPV12}
Matt Gibson, Gaurav Kanade, Erik Krohn, Imran~A. Pirwani, and Kasturi~R.
  Varadarajan.
\newblock On clustering to minimize the sum of radii.
\newblock {\em {SIAM} J. Comput.}, 41(1):47--60, 2012.

\bibitem{gonzalez1985clustering}
Teofilo~F Gonzalez.
\newblock Clustering to minimize the maximum intercluster distance.
\newblock {\em Theoretical Computer Science}, 38:293--306, 1985.

\bibitem{guha1999greedy}
Sudipto Guha and Samir Khuller.
\newblock Greedy strikes back: Improved facility location algorithms.
\newblock {\em Journal of algorithms}, 31(1):228--248, 1999.

\bibitem{gupta2018fpt}
Anupam Gupta, Euiwoong Lee, and Jason Li.
\newblock An fpt algorithm beating 2-approximation for k-cut.
\newblock In {\em Proceedings of the Twenty-Ninth Annual ACM-SIAM Symposium on
  Discrete Algorithms}, pages 2821--2837. SIAM, 2018.

\bibitem{gupta2019losing}
Anupam Gupta, Euiwoong Lee, Jason Li, Pasin Manurangsi, and Micha{\l}
  W{\l}odarczyk.
\newblock Losing treewidth by separating subsets.
\newblock In {\em Proceedings of the Thirtieth Annual ACM-SIAM Symposium on
  Discrete Algorithms}, pages 1731--1749. SIAM, 2019.

\bibitem{DBLP:journals/dm/HeggernesL06}
Pinar Heggernes and Daniel Lokshtanov.
\newblock Optimal broadcast domination in polynomial time.
\newblock {\em Discret. Math.}, 306(24):3267--3280, 2006.

\bibitem{DBLP:journals/algorithmica/HenzingerLM20}
Monika Henzinger, Dariusz Leniowski, and Claire Mathieu.
\newblock Dynamic clustering to minimize the sum of radii.
\newblock {\em Algorithmica}, 82(11):3183--3194, 2020.

\bibitem{DBLP:journals/sigact/Hochba97}
Dorit~S. Hochbaum.
\newblock Approximation algorithms for np-hard problems.
\newblock {\em {SIGACT} News}, 28(2):40--52, 1997.

\bibitem{DBLP:journals/jacm/HochbaumS86}
Dorit~S. Hochbaum and David~B. Shmoys.
\newblock A unified approach to approximation algorithms for bottleneck
  problems.
\newblock {\em J. {ACM}}, 33(3):533--550, 1986.

\bibitem{impagliazzo2001complexity}
Russell Impagliazzo and Ramamohan Paturi.
\newblock On the complexity of k-sat.
\newblock {\em Journal of Computer and System Sciences}, 62(2):367--375, 2001.

\bibitem{DBLP:conf/esa/0002V20}
Tanmay Inamdar and Kasturi~R. Varadarajan.
\newblock Capacitated sum-of-radii clustering: An {FPT} approximation.
\newblock In Fabrizio Grandoni, Grzegorz Herman, and Peter Sanders, editors,
  {\em 28th Annual European Symposium on Algorithms, {ESA} 2020, September 7-9,
  2020, Pisa, Italy (Virtual Conference)}, volume 173 of {\em LIPIcs}, pages
  62:1--62:17. Schloss Dagstuhl - Leibniz-Zentrum f{\"{u}}r Informatik, 2020.

\bibitem{DBLP:conf/innovations/Jaiswal0Y24}
Ragesh Jaiswal, Amit Kumar, and Jatin Yadav.
\newblock {FPT} approximation for capacitated sum of radii.
\newblock In Venkatesan Guruswami, editor, {\em 15th Innovations in Theoretical
  Computer Science Conference, {ITCS} 2024, January 30 to February 2, 2024,
  Berkeley, CA, {USA}}, volume 287 of {\em LIPIcs}, pages 65:1--65:21. Schloss
  Dagstuhl - Leibniz-Zentrum f{\"{u}}r Informatik, 2024.

\bibitem{KanungoMNPSW04}
Tapas Kanungo, David~M. Mount, Nathan~S. Netanyahu, Christine~D. Piatko, Ruth
  Silverman, and Angela~Y. Wu.
\newblock A local search approximation algorithm for k-means clustering.
\newblock {\em Comput. Geom.}, 28(2-3):89--112, 2004.

\bibitem{DBLP:journals/siamdm/KhullerS00}
Samir Khuller and Yoram~J. Sussmann.
\newblock The capacitated \emph{K}-center problem.
\newblock {\em {SIAM} J. Discret. Math.}, 13(3):403--418, 2000.

\bibitem{DBLP:journals/jacm/KumarSS10}
Amit Kumar, Yogish Sabharwal, and Sandeep Sen.
\newblock Linear-time approximation schemes for clustering problems in any
  dimensions.
\newblock {\em J. {ACM}}, 57(2):5:1--5:32, 2010.

\bibitem{DBLP:journals/jea/KumarMY03}
Piyush Kumar, Joseph S.~B. Mitchell, and E.~Alper Yildirim.
\newblock Approximate minimum enclosing balls in high dimensions using
  core-sets.
\newblock {\em {ACM} J. Exp. Algorithmics}, 8, 2003.

\bibitem{lee2017partitioning}
Euiwoong Lee.
\newblock Partitioning a graph into small pieces with applications to path
  transversal.
\newblock In {\em Proceedings of the Twenty-Eighth Annual ACM-SIAM Symposium on
  Discrete Algorithms}, pages 1546--1558. SIAM, 2017.

\bibitem{Li15}
Shi Li.
\newblock On uniform capacitated \emph{k}-median beyond the natural {LP}
  relaxation.
\newblock In {\em Proceedings of the Twenty-Sixth Annual {ACM-SIAM} Symposium
  on Discrete Algorithms, {SODA} 2015, San Diego, CA, USA, January 4-6, 2015},
  pages 696--707, 2015.

\bibitem{Li17}
Shi Li.
\newblock On uniform capacitated \emph{k}-median beyond the natural {LP}
  relaxation.
\newblock {\em {ACM} Trans. Algorithms}, 13(2):22:1--22:18, 2017.

\bibitem{pedregosa2011scikit}
Fabian Pedregosa, Ga{\"e}l Varoquaux, Alexandre Gramfort, Vincent Michel,
  Bertrand Thirion, Olivier Grisel, Mathieu Blondel, Peter Prettenhofer, Ron
  Weiss, Vincent Dubourg, et~al.
\newblock Scikit-learn: Machine learning in python.
\newblock {\em the Journal of machine Learning research}, 12:2825--2830, 2011.

\bibitem{shenmaier2013problem}
VV~Shenmaier.
\newblock The problem of a minimal ball enclosing k points.
\newblock {\em Journal of Applied and Industrial Mathematics}, 7(3):444--448,
  2013.

\bibitem{steinbach2000comparison}
Michael Steinbach, George Karypis, and Vipin Kumar.
\newblock A comparison of document clustering techniques.
\newblock {\em Proceedings of the 6th ACM SIGKDD International Conference on
  Knowledge Discovery and Data Mining Workshop on Text Mining}, pages 525--526,
  2000.

\bibitem{DBLP:journals/corr/abs-1901-04628}
Yicheng Xu, Yong Zhang, and Yifei Zou.
\newblock A constant parameterized approximation for hard-capacitated k-means.
\newblock {\em CoRR}, abs/1901.04628, 2019.

\end{thebibliography}

\end{document}